\DeclareMathAlphabet{\mathpzc}{OT1}{pzc}{m}{it}
\newcommand{\red}{\textcolor{black}}
\numberwithin{equation}{section}
\newcommand{\dom}{{\mbox{\rm dom}}}
\newcommand{\ran}{\mathrm{ran}}
\newcommand{\supp}{{\mbox{\rm supp}}}
\newcommand{\tr}{{\mathrm{tr}}}
\newcommand{\card}{{\mathrm{card}}}
\newcommand{\slim}{\,\mbox{\em s-}\hspace{-2pt} \lim}
\newtheorem{theorem}{Theorem}[section]
\newtheorem{lemma}[theorem]{Lemma}
\newtheorem{corollary}[theorem]{Corollary}
\newtheorem{proposition}[theorem]{Proposition}
\newtheorem{remark}[theorem]{Remark}
\newtheorem{assumption}[theorem]{Assumption}
\newtheorem{definition}[theorem]{Definition}
\newtheorem{example}[theorem]{Example}
\newcommand{\ba}{\begin{array}}
\newcommand{\ea}{\end{array}}
\newcommand{\bea}{\begin{eqnarray}}
\newcommand{\eea}{\end{eqnarray}}
\newcommand{\bead}{\begin{eqnarray*}}
\newcommand{\eead}{\end{eqnarray*}}
\newcommand{\be}{\begin{equation}}
\newcommand{\ee}{\end{equation}}
\newcommand{\bed}{\begin{displaymath}}
\newcommand{\eed}{\end{displaymath}}
\newcommand{\bal}{\begin{align}}
\newcommand{\eal}{\end{align}}
\newcommand{\bl}{\begin{lemma}}
\newcommand{\el}{\end{lemma}}
\newcommand{\bt}{\begin{theorem}}
\newcommand{\et}{\end{theorem}}
\newcommand{\bd}{\begin{definition}}
\newcommand{\ed}{\end{definition}}
\newcommand{\bc}{\begin{corollary}}
\newcommand{\ec}{\end{corollary}}
\newcommand{\bass}{\begin{assumption}}
\newcommand{\eass}{\end{assumption}}
\newcommand{\bexam}{\begin{example}}
\newcommand{\eexam}{\end{example}}
\newcommand{\Label}{\label}
\newcommand{\la}{\Label}
\newcommand{\de}{\textnormal{d}}
\def\wt#1{{{\widetilde #1} }}
\def\wh#1{{{\,\widehat #1\,} }}
      \def\sF{{\mathfrak F}}
   \def\sH{{\mathfrak H}}   
   \def\sK{{\mathfrak K}}   \def\sL{{\mathfrak L}}
   \def\sN{{\mathfrak N}}
   \def\se{{\mathfrak e}}   
   \def\sh{{\mathfrak h}}   
   \def\sk{{\mathfrak k}}   
\def\sn{{\mathfrak m}}   \def\sn{{\mathfrak n}}
      \def\dC{{\mathbb C}}
   \def\dN{{\mathbb N}}   
      \def\dR{{\mathbb R}}
   \def\cB{{\mathcal B}}   
      \def\cF{{\mathcal F}}
\def\cM{{\mathcal M}}      
\def\cS{{\mathcal S}}
\def\gD{{\Delta}}			\def\ga{{\alpha}}
\def\gG{{\Gamma}}			\def\gb{{\beta}}
\def\gl{{\lambda}}
\def\gL{{\Lambda}}			\def\gga{{\gamma}}
			\def\gd{{\delta}}
\def\gs{{\sigma}}			\def\go{{\omega}}
\def\gY{{\Upsilon}}                     \def\gk{{\varkappa}}
\def\ess-sup{{\text{ess-sup\,}}}
\def\supp{{\text{\rm supp\,}}}
\def\ps{\mathpzc{s}}
\def\pS{\mathpzc{S}}
\providecommand*{\abs}[1]{\lvert#1\rvert}
\providecommand*{\norm}[1]{\lVert#1\rVert}
\newenvironment{proof}%
{\begin{sloppypar}\noindent{\bf Proof.}}%
{\hspace*{\fill}$\square$\end{sloppypar}\bigskip}
\title{\red{A new model for quantum dot light emitting-absorbing devices}
\\[5mm]
\large{Dedicated to the memory of Pierre Duclos}}
\author{
Hagen Neidhardt
\and
Lukas Wilhelm\\
WIAS Berlin, Mohrenstr. 39, 10117 Berlin,
Germany\\
E-mail: hagen.neidhardt@wias-berlin.de
\and
Valentin A. Zagrebnov\\
Laboratoire d'Analyse, Topologie, Probabilit\'{e}s - UMR 7353\\
CMI - Technop\^{o}le Ch\^{a}teau-Gombert \\
39, rue F. Joliot Curie, 13453 Marseille Cedex 13, France\\
E-mail: Valentin.Zagrebnov@latp.univ-mrs.fr}
\date{\today}
\begin{document}

\maketitle

\begin{abstract}
\noindent
Motivated by the Jaynes-Cummings (JC) model, we consider here a quantum dot coupled simultaneously
to a reservoir of photons and to two electric leads (free-fermion reservoirs).
This Jaynes-Cummings-Leads (JCL) model makes possible that the fermion current through the dot creates
a photon flux, which describes a light-emitting device. The same model is also describe a transformation of
the photon flux into current of fermions, i.e. a quantum dot light-absorbing device.
The key tool to obtain these results is an abstract
Landauer-B\"uttiker formula.
\end{abstract}

\vspace{2mm}
\noindent
{\bf Keywords:} Landauer-B\"uttiker formula, Jaynes-Cummings model,
coupling to leads, light emission, solar cells\\

\noindent
{\bf Mathematics Subject Classification 2000:} 47A40, 47A55, 81Q37, 81V80

\newpage

\tableofcontents

\section{Introduction}

The Landauer-B\"uttiker formula is widely used for the analysis of the steady
state current flowing trough a quantum device. It goes back to
\cite{Landauer1957} and \cite{Buettiker1985} and was initially
developed based on phenomenological arguments for non-interacting electrons (free-fermions).
The essential idea was to describe a quantum system as an inner or sample system (dot) with left and right
leads attached to it, i.e. free-fermion reservoirs with two different electro-chemical potentials.
The goal was to calculate the steady electron current going from one lead through the dot to
another one.

It was Landauer and B\"uttiker who found that this current is directly related to the
\textit{transmission coefficients} of some natural scattering system related to this particle transport problem.
The phenomenological  approach of Landauer and B\"uttiker later has been justified in several papers by
deriving the formula from fundamental concepts of the Quantum Mechanics, see the series of papers
\cite{Pillet2007,Baro2004,Nenciu2008,Cornean2012b,Cornean2010c,Cornean2005,Cornean2006,CNWZ2012,CNZ2009} and \cite{Nenciu2007}.

Note that this quantum mechanical approach is possible since for the case of free-fermion reservoirs
the corresponding transport problem reduces to study the Hamiltonian dynamics of extended ``one-particle'' system.
During last decade there has been an important progress in rigorous development of the Quantum Statistical
Mechanics of Open Systems \cite{AJPI,AJPII,AJPIII}. This is a many-body approach adapted for interacting systems.
It also allows, besides the Hamiltonian \cite{AJPI}, to develop a  Markovian description of effective microscopic
dynamics of the sample system (dot) connected to environment of external reservoirs \cite{AJPII}.
Then evolution the sample system is governed by a quantum  Master Equation. Although powerful and useful the
Markovian approach needs a microscopic Hamiltonian justification, which is a nontrivial problem \cite{AJPII}.

In the present paper we follow the one-particle quantum mechanical Hamiltonian approach.
Motivated by the quantum optics Jaynes-Cummings (JC) model, we consider here a \textit{two-level} quantum dot coupled
\textit{simultaneously} to environment of \textit{three} external reservoirs. The first is the standard JC
one-mode photon resonator, which makes the JC quantum dot an \textit{open system} \cite{GeKn2005}.
Two others are \textit{free-fermion} reservoirs coupled to the quantum dot. They mimic two electric leads.
This new Jaynes-Cummings-Leads ($JCL$-) model makes possible that the fermion current through the dot creates
a photon flux into the resonator, i.e. it describes a \textit{light-emitting} device.
The same model is also able to describe a transformation of the external photon flux into a current of fermions,
which corresponds to a quantum dot \textit{light-absorbing} device.

The aim of the paper is to analyze the fermion current going through the dot as a function of
electro-chemical potentials on leads and the contact with the  photon reservoir. Although the latter is the
canonical $JC$-interaction, the coupling of the JC model with leads needs certain precautions, if we like to stay
in the framework of one-particle quantum mechanical Hamiltonian approach and the scattering theory.

We discuss the construction of our $JCL$-model in Sections \ref{II.1}-\ref{SecII.6}. For simplicity,
we choose  for the leads Hamiltonians the one-particle discrete Schr\"{o}dinger operators with constant
one-site (electric) potentials on each of leads. Notice that these Hamiltonians are one band bounded self-adjoint
operators. The advantage is that one can easily adjust the leads band spectra positions (and consequently the dot-leads
transmission coefficients) shifting them with respect to the two-point quantum dot spectrum by varying the
one-site electric potentials (voltage). In Section \ref{SecII.4} we show that
the our model fits into framework of \textit{trace-class scattering} and in Section \ref{SecII.6} we verify
the important property that the coupled Hamiltonian has no singular continuous spectrum.

Our main tool is an abstract Landauer-B\"uttiker-type formula applied in Sections \ref{SecIII.1} and \ref{SecIII.2}
to the case of the $JCL$-model. Note that this abstract formula allows to calculate not only the electron current but also fluxes for
other quantities, such as photon or energy/entropy currents. In particular, we calculate the outgoing flux of
photons induced by electric current via leads. This corresponds to a \textit{light-emitting} device. We also found
that pumping the JCL quantum dot by photon flux from resonator may induce current of fermions into leads. This
reversing imitates a quantum \textit{light-absorbing} cell device.
These are the main properties of our model and the main application of the Landauer-B\"uttiker-type formula of
Sections \ref{SecIII.1} and \ref{SecIII.2}. They are presented in Sections \ref{SecIV} and \ref{SecV},
where we distinguish \textit{contact-induced} and \textit{photon-induced} fermion currents.

To describe the results of Sections \ref{SecIV} and \ref{SecV} note that in our setup
the sample Hamiltonian is a \textit{two-level} quantum dot \textit{decoupled} from
the one-mode resonator. Then the unperturbed Hamiltonian $H_{0}$ describes is a collection of four totally decoupled
sub-systems: the sample, the resonator and the two leads. The perturbed Hamiltonian $H$ is a fully coupled
system and the feature of our model is that it is \textit{totally} (i.e. including the leads) \textit{embedded} into
the external electromagnetic field of resonator. This allows a systematic application of the abstract Landauer-B\"uttiker-type
formula, c.f. Sections \ref{SecIII.1} and \ref{SecIII.2}.

As we see there is a variety of possibilities to switch on interactions between sub-systems, i.e. to produce
intermediate Hamiltonians. We distinguish the following two of them:
\begin{enumerate}
\item[(a)] First to switch on the coupling between sample and resonator: the standard JC model $H_{JC}$, see e.g.
\cite{GeKn2005}. Then to connect it to leads, which gives the Hamiltonian $H_{JCL} := H$ of the fully coupled system.

\item[(b)] First to couple the sample to leads: the corresponding Hamiltonian $H_{SL}$ is a standard
``Black Box'' $SL$-model for free-fermion current, see \cite{Pillet2007}, \cite{AJPIII}. Then to embed it into resonator
and to couple the sample with electromagnetic field by the $JC$-interaction. This again produces our $JCL$-model with
$H_{JCL} = H$.
\end{enumerate}

Similar to the $SL$-model $\{H_{SL}, H_{0}\}$, it turns out that the $JCL$-model also fits into the framework of
the abstract Landauer-B\"uttiker formula, and in particular, is a trace-class scattering system $\{H_{JCL}=H, H_{SL}\}$.
The current in the $SL$-model is called the \textit{contact-induced} current $J^c_{el}$. It was a subject of
numerous papers, see e.g. \cite{Pillet2007,Baro2004}, or \cite{AJPIII} and references quoted there. Note that the
current $J_{el}$ is due to the difference of electro-chemical potentials between two leads, but it may be
\textit{zero} even if this difference is not null \cite{Cornean2006,CNWZ2012}.

The fermion current in the $JCL$-model, takes into account the effect of the electron-photon
interaction under the assumption that the leads are already coupled. It is called the \textit{photon-induced}
component $J_{el}^{ph}$ of the total current. Up to our knowledge the present paper is the first, where it is
studied rigorously. We show that the total free-fermion current $J$ in the $JCL$-model decomposes into a sum of
the contact- and the photon-induced currents: $J_{el} := J^c_{el} + J_{el}^{ph}$. An extremal case is, when the
contact-induced current is zero, but the photon-induced component is not, c.f. Section \ref{SecV.1}. In this case
the flux of photons $J_{ph}$ out of the quantum dot (sample) is also non-zero, i.e. the dot serves as the light
emitting device, c.f. Section \ref{sec:LandBuettApplicationPhotonCur}. In general the $J_{ph}\neq 0$ only when the
{photon-induced} component $J_{el}^{ph}\neq 0$.

In this paper we derive  explicit formulas for these currents in the following three cases which are important
for the understanding of the $JCL$-model:
\begin{enumerate}

\item[(i)] The electro-chemical potentials of fermions in the left and right leads are
  equal. Note that in this case the (contact-induced) current in the $JCL$-model is zero.

\item[(ii)] The spectrum of the left and right lead Hamiltonians
  do not overlap. Again, in this case the contact-induced electron current $J^c_{el}$ of the current in the
  $JCL$-model is zero, and only the \textit{photon-induced electron current} $J_{el}^{ph}$ of the total current
  is possible.

\item[(iii)] The leads are coupled to the Jaynes-Cummings model such that left and right leads interact
 only by virtue of the photon interaction in the Jaynes-Cummings model.
Then the contact-induced electron current $J^c_{el}$ is also zero.
\end{enumerate}
For these cases we find that the photon induced electron current
$J^{ph}_{\ga,el}$ entering the left ($\ga = l$) or right ($\ga = r$) lead is given by
\bed
\begin{split}
&J^{ph}_{\ga,el}= -\sum_{\substack{m,n\in\dN_0\\ \gk\in\{l,r\}}}
\frac{\se}{2\pi}\int_\dR d\gl \;\wh{\gs^{ph}_{n_\gk m_\ga}(\gl)}\times\\
& \left(\rho^{ph}(n)f_{FD}(\gl-\mu_\ga-n\go)- \rho^{ph}(m)f_{FD}(\gl-\mu_\gk-m\go)\right).
\end{split}
\eed
where $\wh{\gs^{ph}_{n_\gk m_\ga}(\gl)} \ge 0$ is a partial scattering cross-section
between the left channel with $m$-photons and the $\gk$-channel with
$n$-photons at energy $\gl\in\dR$. By $\se > 0$ the magnitude of the
electron charge is denoted. The photon current is given by
\bed
J_{ph} = \sum_{\substack{m,n \in \dN_0\\
\ga,\gk \in \{l,r\}}}(n-m)\rho^{ph}(m)
\frac{1}{2\pi}\int_\dR d\gl\;f_{FD}(\gl - \mu_\ga - m\go) \wh{\gs^{ph}_{n_\gk
    m_\ga}}(\gl)\, .
\eed
Both formulas become simpler if it is assumed that the $JCL$-model
is time reversible symmetric. In this case we get
\bed
\begin{split}
&J^{ph}_{l,el}= -\sum_{m,n\in\dN_0}
\frac{\se}{2\pi}\int_\dR d\gl \;\wh{\gs^{ph}_{n_{r} m_l}(\gl)}\times\\
& \left(\rho^{ph}(n)f_{FD}(\gl-\mu_l-n\go)- \rho^{ph}(m)f_{FD}(\gl-\mu_{r}-m\go)\right),
\end{split}
\eed
and
\bed
\begin{split}
&J_{ph} = \sum_{\substack{m,n \in \dN_0, n > m\\
\gk,\ga\in \{l,r\}}}
\frac{1}{2\pi}\int_\dR d\gl\;\wh{\gs^{ph}_{n_\gk m_\ga}}(\gl)\times\\
& (n-m)\left(\rho^{ph}(m)f_{FD}(\gl - \mu_\ga - m\go)
  - \rho^{ph}(n)f_{FD}(\gl - \mu_\gk - n\go)\right)\, .
\end{split}
\eed
It turns out that choosing the parameters of the model in an suitable manner one gets either a photon
emitting or a photon absorbing system. Hence $JCL$-model can be used
either as a light emission device or as a light-cell.
Proofs of explicit formulas for fermion and photon currents $J^{ph}_{l,el}$ , $J_{ph}$ is the contents of
Sections \ref{SecIV} and \ref{SecV}.

Note that the $JCL$-model is called \textit{mirror symmetric} if (roughly speaking) one can interchange
left and right leads and the $JCL$-model remains unchanged. In Section \ref{SecV} we discuss a surprising
example of a mirror symmetric $JCL$-model such that the free-fermion current is \textit{zero} but the model
is photon emitting. This peculiarity is due to a specific choice of the \textit{photon-fermion} interaction in
our model.

\section{Jaynes-Cummings quantum dot coupled to leads}	\label{SecII}

\subsection{Jaynes-Cummings model}\la{II.0}

The starting point for construction of our $JCL$-model is the quantum optics Jaynes-Cummings Hamiltonian $H^{JC}$.
Its simplest version is a \textit{two-level} system (quantum dot) with the energy spacing $\varepsilon$,
defined by Hamiltonian $h_{S}$ on the Hilbert space $\sh_{S} = \dC^2$, see e.g. \cite{GeKn2005}. It is assumed that
this system is ``open'' and interacts with the one-mode $ \ \omega \ $ photon resonator with Hamiltonian $h^{ph}$.

Since mathematically $h^{ph}$ coincides with quantum harmonic oscillator, the Hilbert space of the resonator
is the boson Fock space $\sh^{ph} = \sF_+(\dC)$ over $\dC$ and
\begin{equation}\label{phot}
h^{ph} = \omega \, b^\ast b \ .
\end{equation}
Here $b^\ast$ and $b$ are verifying the Canonical Commutation Relations ($CCR$) creation and annihilation
operators with domains in $ \ \sF_+(\dC) \simeq \ell^2(\dN_0)$. Operator (\ref{phot}) is self-adjoint on
its domain
\bed
\dom(h^{ph}) = \left\{ (k_0,k_1,k_2,\ldots) \in \ell^2(\dN_0): \sum_{n\in\dN_0} n^2 |k_n|^2 < \infty \right\}.
\eed
Note that canonical basis $\{\phi_n := (0,0,\ldots,k_n =1, 0, \ldots)\}_{n\in\dN_0}$ in
$\ell^2(\dN_0)$ consists of eigenvectors of operator (\ref{phot}): $ h^{ph}\phi_n = n \omega \, \phi_n$.

To model the \textit{two-level} system with the energy spacing $ \ \varepsilon $, one fixes in $\dC^2$ two
ortho-normal vectors $\{e^S_0,e^S_1\}$, for example
\be\la{states}
e^S_0 :=
\begin{pmatrix}
0\\
1
\end{pmatrix} \ \ \ \ {\rm{and}} \ \ \ \
e^S_1 :=
\begin{pmatrix}
1\\
0
\end{pmatrix} \ \ ,
\quad
\ee
which are eigenvectors of Hamiltonian $h_{S}$ with eigenvalues $\{\gl^S_0 = 0 , \, \gl^S_1 = \varepsilon \}$.
To this end we put
\be\la{h-S}
h_{S}: = \varepsilon
\begin{pmatrix}
1 & 0  \\
0 & 0 \\
\end{pmatrix} \ \ ,
\ee
and we introduce two \textit{ladder} operators:
\be\la{sigma+-}
\sigma^{+}: =
\begin{pmatrix}
0 & 1  \\
0 & 0 \\
\end{pmatrix} \ \ \ \ , \ \ \ \
\sigma^{-}: =
\begin{pmatrix}
0 & 0  \\
1 & 0 \\
\end{pmatrix} \ \ .
\ee
Then one gets $h_{S} = \varepsilon  \ \sigma^{+} \sigma^{-}$ as well as
\begin{equation}\label{ground-state}
e^S_1 = \sigma^{+}e^S_0 \ \ , \ \ e^S_0 = \sigma^{-}e^S_1 \ \ \ {\rm{and}} \ \ \
\sigma^{-}e^S_0 = \begin{pmatrix}
0\\
0
\end{pmatrix} \ \ .
\end{equation}
So, $e^S_0$ is the ground state of Hamiltonian $h_{S}$. Note that \textit{non-interacting}
Jaynes-Cummings Hamiltonian $H^{JC}_{0}$ lives in the space $\sH^{JC} = \sh_{S} \otimes \sh^{ph} =
\dC^2 \otimes \sF_+(\dC)$ and it is defined as the \textit{matrix} operator
\begin{equation}\label{JC-0}
H^{JC}_{0}: = h_{S}\otimes I_{\sh^{ph}} + I_{\sh_{S}} \otimes h^{ph} \ .
\end{equation}
Here $I_{\sh^{ph}}$ denotes the unit operator in the Fock space $\sh^{ph}$, whereas $I_{\sh_{S}}$ stays for
the unit matrix in the space $\sh_{S}$.

With operators (\ref{sigma+-}) the interaction $V_{S b}$  between quantum dot and photons (bosons) in the
resonator is defined (in the rotating-wave approximation \cite{GeKn2005}) by the operator
\begin{equation}\label{V-Sb}
V_{S b}:= g_{S b} \ (\sigma^{+} \otimes b + \sigma^{-} \otimes b^\ast) \ .
\end{equation}

Operators (\ref{JC-0}) and (\ref{V-Sb}) define the Jaynes-Cummings model Hamiltonian
\begin{equation}\label{JC}
H_{JC}:= H^{JC}_{0} + V_{S b} \ ,
\end{equation}
which is self-adjoint operator on the common domain $\dom(H^{JC}_{0})\cap\dom(V_{S b})$. The standard
interpretation of $H_{JC}$ is that (\ref{JC}) describes an ``open'' two-level system interacting with external
one-mode electromagnetic field \cite{GeKn2005}.

Since the one-mode resonator is able to absorb \textit{infinitely} many bosons this interpretation sounds
reasonable, but one can see that the spectrum $\sigma (H^{JC})$ of the Jaynes-Cummings model is \textit{discrete}.
To this end note that the so-called number operator
\begin{equation*}
\mathfrak{N}_{JC}:= \sigma^{+} \sigma^{-} \otimes I_{\sh^{ph}} + I_{\sh_{S}} \otimes b^\ast b
\end{equation*}
commutes with $H_{JC}$. Then, since for any $n \geq 0$
\begin{equation}\label{SubSp}
\sH_{n > 0}^{JC}:=\{\zeta_0 e^S_0 \otimes \phi_n +  \zeta_1 e^S_1 \otimes \phi_{n-1}\}_{\zeta_{0,1}\in\dC} \ , \
\sH_{n = 0}^{JC}:=\{\zeta_0 e^S_0 \otimes \phi_0\}_{\zeta_{0}\in\dC} \ ,
\end{equation}
are eigenspaces of operator $\mathfrak{N}_{JC}$, they reduce $H_{JC}$,
 i.e. $H_{JC}: \sH_{n}^{JC} \rightarrow \sH_{n}^{JC}$. Note that $\sH^{JC} = \bigoplus_{n \geq 0}\sH_{n}^{JC}$,
where each $\sH_{n}^{JC}$ is invariant subspace of operator (\ref{JC}). Therefore, it has the representation
\begin{equation}\label{JC-inv}
H_{JC} = \bigoplus_{n\in \dN_0} H_{JC}^{(n)} \ , \ \ n > 1 \ , \ H_{JC}^{(0)} = 0 \, .
\end{equation}
Here operators $H_{JC}^{(n)}$ are the restrictions of $H_{JC}$, which act in each $\sH_{n}^{JC}$ as
\begin{eqnarray}\label{JC-n}
&&H_{JC}^{(n)}(\zeta_0 \ e^S_0 \otimes \phi_n +  \zeta_1 \ e^S_1 \otimes \phi_{n-1}) = \\
&&[\zeta_0 n \omega + \zeta_1 g_{S b} \sqrt{n}] \ e^S_0 \otimes \phi_n +
[\zeta_1 (\varepsilon + (n-1)\omega) + \zeta_0 g_{S b} \sqrt{n}] \ e^S_1 \otimes \phi_{n-1} \ .
\nonumber
\end{eqnarray}
Hence, the spectrum $\sigma(H_{JC})= \bigcup_{n \geq 0} \sigma(H_{JC}^{(n)})$. By virtue of (\ref{JC-n})
the spectrum $\sigma(H_{JC}^{(n)})$ is defined for $n\geq 1$ by eigenvalues $E(n)$ of two-by-two matrix
$\widehat{H}_{JC}^{(n)}$ acting on the coefficient space $\{\zeta_0, \zeta_1\}$:
\begin{equation}\label{JC-En}
\widehat{H}_{JC}^{(n)} \begin{pmatrix}
\zeta_1 \\
\zeta_0 \\
\end{pmatrix} = \begin{pmatrix}
\varepsilon + (n-1)\, \omega & g_{S b} \sqrt{n} \\
g_{S b} \sqrt{n} & n \omega \\
\end{pmatrix} \begin{pmatrix}
\zeta_1 \\
\zeta_0 \\
\end{pmatrix} = E(n) \begin{pmatrix}
\zeta_1 \\
\zeta_0 \\
\end{pmatrix} \ .
\end{equation}
Then (\ref{JC-inv}) and (\ref{JC-En}) imply that the spectrum of the Jaynes-Cummings model Hamiltonian
$H_{JC}$ is \textit{pure point}:
\begin{eqnarray}\label{spect-JC}
&&\sigma(H_{JC})= \sigma_{p.p.}(H_{JC})= \\
&& \{0\} \cup \bigcup_{n \in \dN}\left\{n \omega + \frac{1}{2}(\varepsilon -\omega)
\pm \sqrt{(\varepsilon -\omega)^2/4 + g_{S b}^2 n}\right\}  \ .\nonumber
\end{eqnarray}

This property is evidently persists for any system Hamiltonian $h_{S}$ with discrete spectrum
and linear interaction (\ref{V-Sb}) with a finite mode photon resonator \cite{GeKn2005}.

\vspace{2mm}
We resume the above observations concerning the Jaynes-Cummings model, which is our starting point,
by following remarks:
\begin{enumerate}
\item[(a)] The standard Hamiltonian (\ref{JC}) describes instead of \textit{flux} only oscillations of photons between
resonator and quantum dot, i.e. the system $h_{S}$ is not ``open''
enough.

\item[(b)] Since one our aim is to model a \textit{light-emitting} device, the system $h_{S}$ needs an \textit{external}
source of energy to pump it into dot, which then be transformed by interaction (\ref{V-Sb}) into the outgoing
\textit{photon current} pumping the resonator.

\item[(c)] To reach this aim we extend the standard Jaynes-Cummings model to our $JCL$-model by attaching to the
quantum dot $h_{S}$ (\ref{h-S}) two \textit{leads}, which are (infinite) reservoirs of \textit{free} fermions.
Manipulating with \textit{electro-chemical} potentials of fermions in these reservoirs we can force one of them
to inject fermions in the quantum dot, whereas another one to absorb the fermions out the quantum dot with the same rate.
This current of fermions throughout the dot would pump it and produce
the photon current according scenario (b).

\item[(d)] The most subtle point is to invent a \textit{leads-dot} interaction $V_{l S}$, which ensures the above mechanism
and which is simple enough that one still be able to treat this $JCL$-model using our extension of the
Landauer-B\"uttiker formalism.
\end{enumerate}


\subsection{The JCL-model}\la{II.1}

First let us make some general remarks and formulate certain conditions indispensable when one
follows the modeling (d).
\begin{enumerate}
\item[(1)] Note that since the Landauer-B{\"u}ttiker formalism \cite{CNWZ2012} is essentially a scattering
theory on a contact between two subsystems, it is developed only on a
``one-particle'' level. This allows to study with this formalism only ideal (\textit{non-interacting})
many-body systems. This condition we impose on many-body fermion systems (electrons) in two leads.
Thus, only direct interaction between different components of the system: dot-photons $V_{S b}$ and
electron-dot $V_{l S}$ are allowed.

\item[(2)] It is well-known that fermion reservoirs are technically simpler to treat then boson ones \cite{CNWZ2012}.
Moreover, in the framework of our model it is also very natural since we study electric current although
produced by ``non-interacting electrons''. So, below we use fermions/electrons as synonymous.

\item[(3)] In spite of precautions formulated above, the first difficulty to consider an ideal many-body system
interacting with quantized electromagnetic field (photons) is induced \textit{indirect} interaction. If
electrons can emit and absorb photons, it is possible for one electron to emit a photon that another
electron absorbs, thus creating the indirect photon-mediated electron-electron interaction. This
interaction makes impossible to develop the Landauer-B\"uttiker formula, which requires non-interacting
framework.
\end{enumerate}

\bass\la{ass:2.1}
{\em
To solve this difficulty we forbid in our model the photon-mediated interaction.
To this end we suppose that every electron (in leads and in dot) interacts with its \textit{own} distinct
copy of the electromagnetic field. So, to consider electrons together with its photon fields as non-interacting
``composed particles'', which allows to apply the Landauer-B{\"u}ttiker approach.
Formally it corresponds to the ``one-electron'' Hilbert space $\sh^{el}\otimes\sh^{ph}$, where $\sh^{ph}$ is
the Hilbert space of the individual photon field. The fermion description of composed-particles
$\sh^{el}\otimes\sh^{ph}$ corresponds to the antisymmetric Fock space $\sF_-(\sh^{el}\otimes\sh^{ph})$.
}
\eass

The composed-particle assumption \ref{ass:2.1} allows us to use the Landauer-B{\"u}ttiker formalism developed
for ideal many-body fermion systems.
Now we come closer to the formal description of our JCL-model with two (infinite) leads and a
one-mode quantum resonator.

Recall that the Hilbert space of the Jaynes-Cummings Hamiltonian with two energy levels is
$\sH^{JC} = \dC^2 \otimes \sF_+(\dC)$. The boson Fock space is constructed from a one-dimensional Hilbert space
since we consider only photons of a single fixed frequency.
We model the electrons in the leads as free fermions living on a discrete semi-infinite lattices. Thus
\begin{equation}\label{space-Ld}
\sh^{el} = \ell^2(\dN) \oplus \dC^2 \oplus \ell^2(\dN) = \sh_l^{el} \oplus \sh_{S} \oplus \sh_r^{el}
\end{equation}
is the one-particle Hilbert space for electrons and for the dot. Here, $\sh_\ga^{el}$, $\ga \in \{l,r\}$,
are the Hilbert spaces of the \textit{left} respectively \textit{right} lead and
$\sh_{S} = \dC^2$ is the Hilbert space of the quantum dot. We denote by
\bed
\{\delta_n^\ga\}_{n\in\dN}, \qquad \{\delta_n^S\}_{j=0}^{1}
\eed
the canonical basis consisting of individual lattice sites of
$\sh_\ga^{el}$, $\ga \in \{l,r\}$, and of $\sh_{S}$, respectively. With the Hilbert space for photons,
$\sh^{ph} = \sF_+(\dC) \simeq \ell^2(\dN_0)$,
we define the Hilbert space of the \textit{full} system, i.e. quantum dot with leads and with the
photon field, as
\begin{equation}\label{full-space-sH}
\sH = \sh^{el} \otimes \sh^{ph} = \big(\ell^2(\dN) \oplus \dC^2 \oplus \ell^2(\dN)\big)\otimes \ell^2(\dN_0).
\end{equation}
\begin{remark}\la{rem:II.H}
{\em
Note that the structure of full space (\ref{full-space-sH}) takes into account the condition \ref{ass:2.1}
and produces composed fermions via the last tensor product. It also manifests that electrons \textit{in the dot}
as well as those \textit{in the leads} are composed with photons. This makes difference with the picture imposed
by the the Jaynes-Cummings model, when \textit{only dot} is composed with photons:
\begin{equation}\label{JC+L}
\sH = \ell^2(\dN) \oplus \dC^2 \otimes \ell^2(\dN_0)\oplus \ell^2(\dN) \ \ \ , \ \ \
\sH^{JC} = \dC^2 \otimes \ell^2(\dN_0) \ ,
\end{equation}
see (\ref{JC-0}), (\ref{V-Sb}) and (\ref{JC}), where $\sH^{JC} = \sh_{S} \otimes \sh^{ph}$.
The next step is a choice of interactions between subsystems:
dot-resonator-leads.
}
\end{remark}

According to (\ref{space-Ld}) the decoupled leads-dot Hamiltonian is the matrix operator
\bed
h^{el}_0 =
\begin{pmatrix}
h_l^{el} & 0 & 0 \\
0 & h_S & 0 \\
0 & 0 & h_r^{el}
\end{pmatrix} \ \ {\rm{on}} \ \  u = \begin{pmatrix}
u_l\\
u_S \\
u_r
\end{pmatrix} \ , \ \{u_{\alpha}\in \ell^2(\dN)\}_{\alpha \in \{l,r\}} \ , \  u_S \in \dC^2 \ ,
\eed
where $h_\ga^{el} = -\Delta^D + v_\ga$ with a constant potential
bias $v_\ga \in \dR$, $\ga\in\{l,r\}$, and $h_{S}$ can be any
self-adjoint two-by-two matrix with eigenvalues $\{\gl^S_0,\gl^S_1 := \gl^S_0 + \varepsilon\}$,
$\varepsilon > 0$, and eigenvectors $\{e^S_0,e^S_1\}$, cf (\ref{h-S}).
Here, $\Delta^D$ denotes the discrete Laplacian on $\ell^2(\dN)$ with
homogeneous Dirichlet boundary conditions given by
\bead
(\gD^D f)(x) & :=  & f(x+1)-2f(x)+f(x-1), \quad x \in \dN,\\
\dom(\gD^D)    & :=   & \{f\in \ell^2(\dN_0): f(0) := 0\},
\eead
which is obviously a bounded self-adjoint operator. Notice that
$\gs(\gD^D) = [0,4]$.

We define the \textit{lead-dot interaction} for coupling $g_{el} \in \dR$ by the matrix operator
acting in (\ref{space-Ld}) as
\begin{equation}\label{Int-D-L}
v_{el} = g_{el}
\begin{pmatrix}
0 & \langle \cdot, \delta_0^S \rangle \delta_1^l & 0 \\
\langle \cdot,\delta_1^l \rangle \delta_0^S & 0 & \langle \cdot, \delta_1^r \rangle \delta_1^S \\
0 & \langle \cdot,\delta_1^S \rangle \delta_1^r & 0
\end{pmatrix} \ \ ,
\end{equation}
where non-trivial off-diagonal entries are \textit{projection} operators in the Hilbert space
(\ref{space-Ld}) with the scalar product $u,v \mapsto \langle u, v \rangle$ for $u,v \in\sh^{el}$.
Here $\{\gd^S_0,\gd^S_1\}$ is ortho-normal basis in $\sh^{el}_S$, which in general may be different
from $\{e^S_0,e^S_1\}$. Hence, interaction (\ref{Int-D-L}) describes quantum \textit{tunneling} between
leads and the dot via contact sites of the leads, which are supports of $\delta_1^l$ and $\delta_1^r$.

Then Hamiltonian for the system of interacting leads and dot we define as  $h^{el} := h_0^{el} + v_{el}$.
Here both $h_0^{el}$ and $h^{el}$ are bounded self-adjoint operators on $\sh^{el}$.

Recall that photon Hamiltonian in the one-mode resonator is defined by operator $h^{ph} = \omega b^\ast b$
with domain in the Fock space $\sF_+(\dC) \simeq \ell^2(\dN_0)$, (\ref{phot}). We denote the canonical basis in
$\ell^2(\dN_0)$ by $\{\Upsilon_n\}_{n\in\dN_0}$. Then for the spectrum of $h^{ph}$ one obviously gets
\be\la{2.1}
\sigma(h^{ph}) = \sigma_{pp}(h^{ph}) = \bigcup_{n\in\dN_0}\{n \omega\}.
\ee

We introduce the following decoupled Hamiltonian $H_0$, which describes
the system when the leads are decoupled from the quantum dot and the
electron does not interact with the photon field.
\be\la{rr2.3}
H_0 := H_0^{el} + H^{ph},
\ee
where
\bed
H^{el}_0 := h_0^{el} \otimes I_{\sh^{ph}}
\quad \mbox{and} \quad
H^{ph} := I_{\sh^{el}} \otimes h^{ph}.
\eed
The operator $H_0$ is self-adjoint on
$\dom(H_0) = \dom(I_{\sh^{el}}\otimes h^{ph})$. Recall that $h_0^{el}$ and
$h^{ph}$ are bounded self-adjoint operators. Hence $H^{el}_0$ and $H^{el}$ are semi-bounded from
below which yields that $H_0$ is semi-bounded from below.

The interaction of the photons and the electrons in the
quantum dot is given by the coupling of the dipole moment of the
electrons to the electromagnetic field in the rotating wave approximation. Namely,
\be\la{rr2.4}
V_{ph} = g_{ph} \left((\cdot,e^S_0)e^S_1 \otimes b + (\cdot,e^S_1)e^S_0 \otimes b^\ast \right)
\ee
for some coupling constant $g_{ph} \in \dR$. The total Hamiltonian is given by
\be\la{rr2.5}
H := H^{el} + H^{ph} + V_{ph} = H_0 + V_{el} + V_{ph},
\ee
where $H^{el} := h^{el} \otimes I_{\sh^{ph}}$ and $V_{el} := v_{el}\otimes I_{\sh^{ph}}$.

In the following we call $\pS = \{H,H_0\}$ the Jaynes-Cummings-leads
system, in short $JCL$-model, which we are going to analyze. In particular, we are interested in
the electron and photon currents for that system. The analysis will be based on the
abstract Landauer-B\"uttiker formula, cf. \cite{Pillet2007,CNWZ2012}.
\bl\la{rII.1}
H is bounded from below self-adjoint such that $\dom(H) =
\dom(H_0)$.
\el
\begin{proof}
Let $c \geq 2$. Then
\begin{equation*}
\norm{b\Upsilon_n}^2 \leq \norm{b^\ast \Upsilon_n}^2 = n+1 \leq c^{-1}n^2 + c, \quad n \in \dN_0.
\end{equation*}
Consider elements $f \in \sh_{S} \otimes \sh^{ph} \cap \dom(I_{\sh^{el}}\otimes h^{ph})$ with
\bed
f = \sum_{j,l}\beta_{jl} e_j \otimes \Upsilon_l, \quad j\in \{0,1\}, \quad l \in \dN_0,
\eed
which are dense in $\sH^{JC} := \sh^{el}_S \otimes h^{ph}$. Then $\norm{f}^2 = \sum_{j,l}\abs{\beta_{jl}}^2$
and $\norm{(I_{\sh^{el}}\otimes b^\ast b) f}^2 = \sum_{j,l=1}\abs{\beta_{jl}}^2 l^2$. We obtain
\bead
\lefteqn{
\norm{((\cdot,e^S_1)e^S_0\otimes b) f}^2 \leq \sum_{j,l}\abs{\beta_{jl}}^2
\norm{b \Upsilon_l}^2 \leq}\\
& &
\sum_{j,l}\abs{\beta_{jl}}^2(c^{-1}l^2 + c)
= c^{-1}\norm{(I_{\sh^{el}}\otimes b^\ast b) f}^2  + c\norm{f}^2
\eead
Similarly,
\bed
\norm{ ((\cdot,e^S_1)e^S_0\otimes b^\ast) f}^2 \leq c^{-1} \norm{(I_{\sh^{el}}\otimes b^\ast b) f}^2 + c \, \norm{f}^2.
\eed
If $c \geq 2$ is large enough, then we obtain that $V_{ph}$ is
dominated by $H^{ph}$ with relative bound less than one. Hence $H$ is self-adjoint and $\dom(H_0) = \dom(H)$.
Since $H_0^{el}$ and $V_{el}$ are bounded and $H^{ph}$ is self-adjoint and
bounded from below, it follows that $H = H_0^{el} + H^{ph} + V_{el} + V_{ph}$
is bounded from below \cite[Thm. V.4.1]{Ka1995}.
\end{proof}

\subsection{Time reversible symmetric systems}\la{II.2}

A system described by the Hamiltonian $H$ is called time reversible
symmetric if there is a conjugation $\gG$ defined on $\sH$ such that
$\gG H = H\gG$. Recall that $\gG$ is a conjugation if the conditions
$\gG^2 = I$ and $(\gG f,\gG g) = \overline{(f,g)}$, $f,g \in \sH$.

Let $\sh^{ph}_n$, $n \in \dN_0$, the subspace spanned by the
eigenvector $\gY_n$ in $\sh^{ph}$. We set
\be\la{2.6aa}
\sH_{n_\ga} := \sh^{el}_\ga \otimes \sh^{ph}_n, \quad n \in \dN_0,
\quad \ga \in \{l,r\}.
\ee
Notice that
\bed
\sH = \bigoplus_{n\in \dN_0,\ga\in\{l,r\}}\sH_{n_\ga}
\eed
\bd\la{II.2a}
{\rm
The $JCL$-model is called time reversible symmetric if there is
a conjugation $\gG$ acting on $\sH$ such that $H$ and $H_0$ are time
reversible symmetric and the subspaces
$\sH_{n_\ga}$, $n \in \dN_0$, $\ga \in \{l,r\}$, reduces $\gG$.
}
\ed
\bexam\la{II.3x}
{\rm
Let $\gga^{el}_\ga$ and $\gga^{el}_S$ be conjugations defined by
\bed
\gga^{el}_\ga f_\ga := \overline{f_\ga} := \{\overline{f_\ga(k)}\}_{k\in\dN}, \quad f_\ga\in \sh^{el}_\ga,
\quad \ga \in \{l,r\},
\eed
and
\bed
\gga^{el}_Sf_S =
\gga^{el}_S\begin{pmatrix}
f_S(0)\\
f_S(1)
\end{pmatrix}
:=
\begin{pmatrix}
\overline{f_S(0)}\\
\overline{f_S(1)}
\end{pmatrix}
\eed
We set $\gga^{el} := \gga^{el}_l \oplus \gga^{el}_S \oplus \gga^{el}_r$. Further, we set
\bed
\gga^{ph}\psi := \overline{\psi} = \{\overline{\psi(n)}\}_{n\in\dN_0}, \quad \psi \in \sh^{ph}.
\eed
We set $\gG := \gga^{el}\otimes \gga^{ph}$. One easily checks that
$\gG$ is a conjugation on $\sH = \sh^{el} \otimes \sh^{ph}$.
}
\eexam
\bl\la{II.4y}
Let $\gga^{el}_\ga$, $\ga \in \{S,l,r\}$, and $\gga^{ph}$ be given by
Example \ref{II.3x}.

\begin{enumerate}
\item[\rm (i)] If the conditions $\gga^{el}_S e^S_0 = e^S_0$ and
$\gga^{el}_S e^S_1 = e^S_1$ are satisfied, then $H_0$ is time reversible
symmetric with respect to $\gG$ and, moreover, the subspaces
$\sH_{n_\ga}$, $n \in \dN_0$, $\ga \in \{l,r\}$, reduces $\gG$.

\item[\rm (ii)]
If in addition the conditions $\gga^{el}_S\gd^S_0 = \gd^S_0$ and
$\gga^{el}_S\gd^S_1 = \gd^S_1$ are satisfied, then $JCL$-model
is time reversible symmetric.
\end{enumerate}
\el
\begin{proof}
(i) Obviously we have
\bed
\gga^{el}_\ga h^{el}_\ga = h^{el}_\ga \gga^{el}_\ga, \quad \ga \in\{l,r\},
\quad \mbox{and} \quad
\gga^{ph} h^{ph} = h^{ph} \gga^{ph}.
\eed
If $\gga^{el}_S e^S_0 = e^S_0$ and $\gga^{el}_S e^S_1 = e^S_1$ is
satisfied, then $\gga^{el}_S h^{el}_S =  h^{el}_S\gga^{el}_S$ which yields
$\gga^{el} h^{el}_0 = h^{el}_0 \gga^{el}$ and, hence, $\gG H_0 = \gG
H_0$. Since $\gga^{el}\sh^{el}_\ga = \sh^{el}_\ga$ and $\gga^{ph}\sh^{ph} =
\sh^{ph}$ one gets $\gG \sH_{n_\ga} = \sH_{n_\ga}$ which shows that
$\sH_{n_\ga}$ reduces $\gG$.

(ii) Notice that $\gga^{el}_\ga\gd^\ga_1 = \gd^\ga_1$, $\ga \in
\{l,r\}$. If in addition the conditions $\gga^{el}_S\gd^S_0 = \gd^S_0$ and
$\gga^{el}_S\gd^S_1 = \gd^S_1$ are satisfied, then $\gga^{el}v_{el} = v_{el}\gga^{el}$ is valid which yields
$\gga^{el}h^{el} = h^{el}\gga^{el}$. Hence $\gG H = H\gG$.
Together with (i) this proves that the $JCL$-model is time reversible
symmetric.
\end{proof}

Choosing
\be\la{2.5a}
e^S_0 :=
\begin{pmatrix}
1\\
0
\end{pmatrix},
\quad
e^S_1 :=
\begin{pmatrix}
0\\
1
\end{pmatrix},
\quad
\gd^S_0 :=
\frac{1}{\sqrt{2}}\begin{pmatrix}
1\\
1
\end{pmatrix},
\quad
\gd^S_1 :=
\frac{1}{\sqrt{2}}\begin{pmatrix}
1\\
-1
\end{pmatrix}
\ee
one satisfies the condition $\gga^{el}_S e^S_0 = e^S_0$ and $\gga^{el}_S e^S_1 = e^S_1$ as well as
$\gga^{el}_S e^S_0 = e^S_0$ and $\gga^{el}_S e^S_1 = e^S_1$.

\subsection{Mirror symmetric systems}\la{II.3}

A unitary operator $U$ acting on $\sH$ is called a mirror symmetry if
the conditions
\bed
U\sH_{n_\ga} = \sH_{n_{\ga'}}, \quad \ga,\ga' \in \{l,r\}, \quad \ga \not=
\ga'
\eed
are satisfied. In particular, this yields $U\sH^{JC} = \sH^{JC}$, $\sH^{JC} :=
\sh^{el}_S \otimes \sh^{ph}$.
\bd\la{II.4a}
{\rm
The $JCL$-model is called mirror symmetric if there is a mirror
symmetry commuting with $H_0$ and $H$.
}
\ed

One easily verifies that if $H_0$ is mirror symmetric, then
\bed
H_{n_{\ga'}}U = UH_{n_\ga}, \quad \quad n \in \dN_0, \quad \ga,\ga' \in
\{l,r\}, \quad \ga\not=\ga',
\eed
where
\bed
H_{n_\ga} := h^{el}_\ga \otimes I_{\sh^{ph}_n} + I_{\sh^{el}_\ga} \otimes h^{ph}_n = h^{el}_\ga + n\go,
\quad n \in \dN_0, \quad \ga,\ga' \in \{l,r\}, \quad \ga\not=\ga'.
\eed
In particular, this yields that $v_\ga = v_{\ga'}$.
Moreover, one gets $UH_S = H_SU$ where $H_S := h^{el}_S \otimes I_{\sh^{ph}} +
I_{\sh^{el}}\otimes h^{ph}$.

Notice that if $H$ and $H_0$ commute with the
same mirror symmetry $U$, then also the operator $H_c := h^{el}
\otimes I_{\sh^{ph}} + I_{\sh^{el}}\otimes h^{ph}$ commutes with $U$,
i.e, is mirror symmetric.
\bexam\la{II.3A}
{\rm
Let $\pS = \{H,H_0\}$ be the $JCL$-model.
Let $v_l = v_r$ and let $e^S_0$ and $e^S_1$ as well as $\gd^S_0$ and
$\gd^S_1$ be given by \eqref{2.5a}. We set
\be\la{2.6a}
u^{el}_S e^S_0 := e^S_0 \quad \mbox{and} \quad u^{el}_S e^S_1 = -e^S_1
\ee
as well as
\be\la{2.7a}
u^{ph}\gY_n = e^{-i n \pi}\gY_n, \quad n \in \dN_0.
\ee
Obviously, $U_S := u^{el}_S \otimes u^{ph}$ defines a unitary operator
on $\sH^{JC}$. A straightforward computation shows that
\be\la{2.8b}
U_S H_S = H_SU_S
\quad  \mbox{and} \quad U_S V_{ph} = V_{ph}U_S.
\ee
Furthermore, we set
\be\la{2.8a}
u_{rl}^{el} \delta_n^l := \delta_n^r, \quad \mbox{and} \quad
u_{lr}^{el} \delta_n^r = \delta_n^{l}, \quad n \in \dN,
\ee
and
\bed
u^{el} :=
\begin{pmatrix}
0 & 0 & u^{el}_{lr}\\
0 & u^{el}_S & 0 \\
u^{el}_{lr} & 0 & 0
\end{pmatrix}.
\eed
We have
\be\la{2.9a}
v_{el}\,u^{el}
\begin{pmatrix}
f_l\\
f_S\\
f_r
\end{pmatrix} =
\begin{pmatrix}
<f_S,(u^{el}_S)^*\gd^S_0>\gd^l_1\\
<f_r,(u^{el}_{lr})^*\gd^l_1>\gd^S_0 + <f_l,(u^{el}_{rl})^*\gd^r_1>\gd^S_1\\
<f_S,(u^{el}_S)^*\gd^S_1>\gd^r_1
\end{pmatrix}
\ee
Since $\gd^S_0 := \frac{1}{\sqrt{2}}(e^S_0+e^S_1)$ and $\gd^S_1 :=
\frac{1}{\sqrt{2}}(e^S_0-e^S_1)$ we get from \eqref{2.6a}
\be\la{2.10a}
(u^{el}_S)^*\gd^S_0 = \gd^S_1 \quad \mbox{and} \quad
(u^{el}_S)^*\gd^S_1 = \gd^S_0.
\ee
Obviously we have
\be\la{2.11a}
(u^{el}_{lr})^*\gd^l_1 = \gd^r_1 \quad \mbox \quad
(u^{el}_{rl})^*\gd^r_1 = \gd^l_1.
\ee
Inserting \eqref{2.10a} and \eqref{2.11a} into \eqref{2.9a} we find
\be\la{2.13a}
v_{el}\,u^{el}
\begin{pmatrix}
f_l\\
f_S\\
f_r
\end{pmatrix} =
\begin{pmatrix}
<f_S,\gd^S_1>\gd^l_1\\
<f_r,\gd^r_1>\gd^S_0 + <f_l,\gd^l_1>\gd^S_1\\
<f_S,\gd^S_0>\gd^r_1
\end{pmatrix}
\ee
us Further we have
\be\la{2.14a}
u^{el}v_{el}\begin{pmatrix}
f_l\\
f_S\\
f_r
\end{pmatrix} =
\begin{pmatrix}
<f_S,\gd^S_1>\gd^l_1\\
<f_l,\gd^l_1>\gd^S_1 + <f_r,\gd^r_1>\gd^S_0\\
<f_s,\gd^S_0>\gd^r_1
\end{pmatrix}\;.
\ee
Comparing \eqref{2.13a} and \eqref{2.14a} we get $u^{el}v_{el} =
v_{el}u^{el}$. Setting $U := u^{el}\otimes u^{ph}$ one immediately
proves that $UH_0 = H_0U$ and $UH = HU$. Since $U\sH_{n_\ga} =
\sH_{n_{ga'}}$ it is satisfied $\pS$ is mirror symmetric.
}
\eexam

Notice that in addition the Example \ref{II.3A} $\pS$ is time reversible
symmetric.

\subsection{Spectral properties of $H$: first part}\la{SecII.4}

In the following our goal is to apply the Landauer-B\"uttiker formula to the
$JCL$-model. By $\sL_p(\sH)$, $1 \le p \le \infty$,  we denote in the following the
Schatten-v.Neumann ideals.
\begin{proposition}\la{II.2A}
If $\pS = \{H,H_0\}$ is the $JCL$-model, then
$(H+i)^{-1} - (H_0+i)^{-1} \in \sL_1(\sH)$. In particular, the absolutely continuous parts $H^{ac}$
and $H^{ac}_0$ are unitarily equivalent.
\end{proposition}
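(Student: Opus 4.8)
The plan is to prove the trace-class difference of resolvents by showing that the perturbation $V := H - H_0 = V_{el} + V_{ph}$ is trace-class, or at least that it produces a trace-class difference of resolvents after accounting for the unbounded part coming from the photon field. Once $(H+i)^{-1} - (H_0+i)^{-1} \in \sL_1(\sH)$ is established, the unitary equivalence of the absolutely continuous parts $H^{ac}$ and $H_0^{ac}$ follows immediately from the Kato--Rosenblum theorem, which is the standard consequence of trace-class scattering. So the real content is entirely in the first assertion.

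First I would split $V = V_{el} + V_{ph}$ and treat the two pieces separately, using the second resolvent identity
\bed
(H+i)^{-1} - (H_0+i)^{-1} = -(H+i)^{-1}\,V\,(H_0+i)^{-1}.
\eed
For the electronic part $V_{el} = v_{el}\otimes I_{\sh^{ph}}$, the operator $v_{el}$ defined in \eqref{Int-D-L} is a \emph{finite-rank} operator on $\sh^{el}$ (its nonzero entries are rank-one projections onto the contact sites $\gd_1^l, \gd_0^S, \gd_1^S, \gd_1^r$). However, tensoring with $I_{\sh^{ph}}$ destroys the finite rank, since $\sh^{ph} = \ell^2(\dN_0)$ is infinite-dimensional. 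The trick is that the resolvent factors supply the needed decay in the photon variable: because $h^{ph}$ has eigenvalues $n\go$ growing linearly in $n$, the factor $(H_0+i)^{-1}$ is small on high photon modes. Concretely, I would write $v_{el}$ as a finite sum of rank-one terms and estimate $\| (H+i)^{-1} (P\otimes I)(H_0+i)^{-1}\|_{\sL_1}$, where $P$ is a rank-one electronic projection, by inserting the spectral decomposition of $h^{ph}$; summability over the photon index $n$ then comes from the $\sum_n (n\go + i)^{-2} < \infty$ type bound that the two resolvents jointly provide.

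For the photon-coupling part $V_{ph}$ defined in \eqref{rr2.4}, the operator is genuinely unbounded (it involves $b$ and $b^\ast$), but the calculation in the proof of Lemma \ref{rII.1} already shows that $V_{ph}$ is $H^{ph}$-bounded with relative bound less than one, and in fact the estimate there gives $\|V_{ph} f\| \lesssim \|(I_{\sh^{el}}\otimes b^\ast b)f\| + \|f\|$. Since $V_{ph}$ acts nontrivially only on the two-dimensional dot factor $\sh_S$ (it contains the rank-one electronic operators $(\cdot,e_0^S)e_1^S$ and $(\cdot,e_1^S)e_0^S$), the electronic part is again finite-rank, and the photon part $b, b^\ast$ shift the photon index by one while growing only like $\sqrt{n}$. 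So $V_{ph}(H_0+i)^{-1}$ is bounded, and I would gain the trace-class property by distributing the two resolvent factors: one controls the $\sqrt{n}$ growth of the ladder operators, the other supplies the $\sum_n n\,(n\go)^{-2} < \infty$ summability.

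The main obstacle I expect is the bookkeeping for $V_{ph}$: unlike $V_{el}$, it is not bounded, so one cannot simply quote finite rank and must carefully verify that the combination $(H+i)^{-1} V_{ph} (H_0+i)^{-1}$ lands in $\sL_1$ despite the unbounded ladder operators. The clean way is to factor $(H_0+i)^{-1} = (H_0+i)^{-1}(I + h^{ph})(I+h^{ph})^{-1}$ so that $(I+h^{ph})^{-1}$ absorbs the $b^\ast b$ growth appearing in the $H^{ph}$-bound from Lemma \ref{rII.1}, while the remaining $(I+h^{ph})^{-1}$ factor, combined with the finite rank in the dot variable, delivers genuine trace-class summability over the photon modes. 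Once both pieces $V_{el}$ and $V_{ph}$ yield trace-class resolvent differences, additivity of $\sL_1$ finishes the first claim, and Kato--Rosenblum yields the stated unitary equivalence of the absolutely continuous parts.
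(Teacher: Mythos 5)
Your overall strategy coincides with the paper's (split $V = V_{el}+V_{ph}$, exploit the finite rank of the electronic factors together with decay of the resolvents over the photon modes, then invoke Kato--Rosenblum), but there is a genuine gap in how you extract that decay. You work with $(H+i)^{-1}V(H_0+i)^{-1}$ and assert that the summability over the photon index ``comes from the $\sum_n(n\go+i)^{-2}$ type bound that the two resolvents jointly provide.'' Only the \emph{free} resolvent provides such decay: by the spectral decomposition \eqref{2.5} one has $(H_0+i)^{-1}=\bigoplus_{n\in\dN_0}(h^{el}_0+n\go+i)^{-1}\otimes I_{\sh^{ph}_n}$, giving one factor of order $(n+1)^{-1}$ per mode. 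The \emph{perturbed} resolvent $(H+i)^{-1}$ does not commute with $I_{\sh^{el}}\otimes b^*b$ (the interaction \eqref{rr2.4} shifts the photon number), so it admits no such mode decomposition and contributes only the constant bound $\|(H+i)^{-1}\|\le 1$. With a single $(n+1)^{-1}$ per mode your sums diverge: for $V_{el}$ one gets $\sum_n(n+1)^{-1}=\infty$, and for $V_{ph}$ one gets $\sum_n\sqrt{n}\,(n+1)^{-1}=\infty$. Note also that the sum you invoke for $V_{ph}$, namely $\sum_n n\,(n\go)^{-2}$, diverges as written; the convergent bound is $\sum_n\sqrt{n}\,n^{-2}<\infty$, and it requires $n^{-1}$-decay from \emph{both} sides. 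Your alternative factorization through $(I+h^{ph})^{-1}$ has the same defect: $V_{ph}\,(I+I_{\sh^{el}}\otimes h^{ph})^{-1}$ has singular values of order $n^{-1/2}$, hence is compact but not trace class, and multiplying it by the bounded operators $(H+i)^{-1}$ and $(I+I_{\sh^{el}}\otimes h^{ph})(H_0+i)^{-1}$ cannot improve its Schatten class.

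The missing step is precisely what the paper does: iterate the resolvent identity once, writing
\bed
(H+i)^{-1}-(H_0+i)^{-1} = (H_0+i)^{-1}V(H_0+i)^{-1} - (H_0+i)^{-1}V(H_0+i)^{-1}\,V(H+i)^{-1},
\eed
so that the object estimated in $\sL_1(\sH)$ is $(H_0+i)^{-1}V(H_0+i)^{-1}$, flanked by free resolvents and hence enjoying $(n+1)^{-1}$-decay on \emph{both} sides; the leftover term is this trace-class operator multiplied by $V(H+i)^{-1}$, which is bounded by the relative bound established in Lemma \ref{rII.1}. (Alternatively, your one-sided formula can be rescued by observing that $\dom(H)=\dom(H_0)=\dom(I_{\sh^{el}}\otimes h^{ph})$ implies, via the closed graph theorem, that $(I_{\sh^{el}}\otimes h^{ph}+i)(H+i)^{-1}$ is bounded, so the perturbed resolvent does yield a $(n+1)^{-1}$ factor after commuting $(I_{\sh^{el}}\otimes h^{ph}+i)^{-1}$ through to the perturbation; but some such argument must be supplied, and your proposal contains none.) With either repair your estimates close --- $\sum_n(n+1)^{-2}<\infty$ for $V_{el}$ and $\sum_n\sqrt{n}\,(n+1)^{-2}<\infty$ for $V_{ph}$ --- and the Kato--Rosenblum conclusion is then standard, exactly as you state.
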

\begin{proof}
We have
\bead
\lefteqn{
(H+i)^{-1}-(H_0+i)^{-1} = (H_0+i)^{-1}V(H+i)^{-1} =}\\
& &
(H_0+i)^{-1}V (H_0+i)^{-1} - (H_0+i)^{-1}V(H_0+i)^{-1}V(H+i)^{-1}
\eead
where $V = H-H_0 = V_{el} + V_{ph}$.
Taking into account Lemma \ref{rII.1} it suffices to prove that
$(H_0+i)^{-1} V (H_0+i)^{-1} \in \sL_1(\sH)$. Using the spectral
decomposition of $h^{ph}$ with respect to $\sh^{ph} = \bigoplus_{n\in\dN_0} \sh^{ph}_n$,
where $\sh^{ph}_n$ are the subspaces spanned by $\gY_n$, we obtain
\be\la{2.5}
(H_0+i)^{-1} = \bigoplus\limits_{{n}\in\dN_0} (h_0^{el} + n\go +i)^{-1} \otimes I_{\sh^{ph}_n}.
\ee
We have $(H_0+i)^{-1} V (H_0+i)^{-1} = (H_0+i)^{-1}(V_{el} + V_{ph})
(H_0+i)^{-1}$. Since $v_{el}$ is a finite rank operator we have
$\lVert v_{el} \rVert_{{\sL_1}} < \infty$. Furthermore,
$\sh^{ph}_n$ is obviously one-dimensional for any $n\in \dN_0$.
Hence $\lVert I_{\sh^{ph}_n} \rVert_{{\sL_1}} = 1$.
From \eqref{2.5} and $V_{el} = v_{el} \otimes I_{\sh^{ph}}$ we obtain
\bed
\begin{aligned}
\lVert (H_0+i)^{-1} V_{el} (H_0+i)^{-1} \rVert_{{\sL_1}}
& = \sum\limits_{{n}\in\dN_0} \lVert(h_0^{el} +  n\omega + i)^{-1} v_{el} (h_0^{el} +  n\omega + i)^{-1}
\rVert_{{\sL_1}} \\
& \leq  \sum\limits_{{n}\in\dN_0} \lVert(h_0^{el} +n\omega + i)^{-2} \rVert \; \lVert v_{el} \rVert_{{\sL_1}}
\end{aligned}
\eed
Since $h_0^{el}$ is bounded we get
\be\la{2.6}
\lVert (h_0^{el} + n\omega + i)^{-1} \rVert = \sup_{\lambda \in \sigma(h_0^{el})}
\big(\sqrt{(\lambda + n\omega)^2 + 1} \big)^{-1} \leq c(n+1)^{-1}
\ee
for some $c>0$. This immediately implies
$\lVert (H_0+i)^{-1} V_{el} (H_0+i)^{-1} \rVert_{{\sL_1}} < \infty $.

We are going to handle $(H_0 + i)^{-1}V_{ph}(H_0 + i)^{-1}$.
Let $p_{n}^{ph}$ be the projection from $\sh^{ph}$ onto $\sh^{ph}_n$. We have
\bead
\lefteqn{
(H_0+i)^{-1}\,(\cdot,e^S_0)e^S_1\otimes b \,(H_0+i)^{-1} }\\
&=&
\sum\limits_{m,n\in\dN_0} (h_0^{el}+m\omega + i)^{-1}(\cdot,e^S_0)e^S_1\,(h_0^{el}+ n\omega + i)^{-1}
\otimes p_{m}^{ph} b p_{n}^{ph}\\
&=&
\sum\limits_{n\in\dN}
(h_0^{el}+ (n-1)\omega + i)^{-1}\,(\cdot,e^S_0)e^S_1\, (h_0^{el}+n\omega + i)^{-1} \otimes \sqrt{n} \gY_{{n-1}}
{\langle \cdot,\gY_{n}\rangle}\\
\eead
From \eqref{2.6} we get
\bed
\begin{split}
\left\|(h_0^{el}+ (n-1)\omega + i)^{-1}\,(\cdot,e^S_0)e^S_1\,
  (h_0^{el}+n\omega + i)^{-1}\big) \right.
&\left.
\otimes \sqrt{n} \gY_{{n}} {\langle
  \cdot,\gY_{n}\rangle}\right\|_{\sL_1}\\
& \le c^2\frac{\sqrt{n}}{n(n+1)},
\end{split}
\eed
$n \in \dN$, which yields
\bed
\lVert (H_0+i)^{-1}\,(\cdot,e^S_0)e^S_1\otimes b \,(H_0+i)^{-1}\rVert_{{\sL_1}} \leq
c^2\sum\limits_{{n}\in\dN}^\infty \frac{\sqrt{n}}{n(n+1)} < \infty.
\eed
Since
\bed
\lVert (H_0+i)^{-1}\,(\cdot,e^S_1)e^S_0\otimes b^\ast\,(H_0+i)^{-1} \rVert_{{\sL_1}}  =
\lVert (H_0+i)^{-1}\,(\cdot,e^S_0)e^S_1 \otimes b\, (H_0+i)^{-1}\rVert_{{\sL_1}}
\eed
one gets $(H_0+i)^{-1} V_{ph} (H_0+i)^{-1} \in \sL_1(\sH)$ which completes the proof.
\end{proof}

Thus, the $JCL$-model $\cS = \{H,H_0\}$ is a $\sL_1$-scattering
system. Let us recall that $h_\ga^{el} = -\Delta^D + v_\ga$,
$\ga \in \{l,r\}$, on $\sh_l^{el} = \sh_r^{el} = \ell^2(\dN)$.
\begin{lemma}\la{II.3Aa}
Let $\ga \in \{l,r\}$. We have
\begin{equation*}
\sigma(h_\ga^{el}) = \sigma_{ac}(h_\ga^{el}) = [v_\ga,4+v_\ga].
\end{equation*}
The normalized generalized eigenfunctions of $h_\ga^{el}$ are given by
\begin{equation*}
g_\ga(x,\lambda) = \pi^{-\frac{1}{2}} (1-(-\lambda+2+v_\ga)^2/4)^{-\frac{1}{4}}
\sin\big(\arccos((-\lambda+2+v_\ga)/2)x\big)
\end{equation*}
for $x \in \dN$, $\lambda \in (v_\ga,4+v_\ga)$.
\end{lemma}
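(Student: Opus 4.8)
The plan is to diagonalize $h^{el}_\ga$ explicitly by the discrete Fourier sine transform, which is the natural spectral representation for the Dirichlet Laplacian on the half-line lattice. Since $v_\ga$ is a constant, hence a scalar multiple of the identity on $\sh^{el}_\ga = \ell^2(\dN)$, we have $h^{el}_\ga = -\gD^D + v_\ga I$ and therefore $\gs(h^{el}_\ga) = \gs(-\gD^D) + v_\ga$ with the same spectral type; so it suffices to diagonalize $-\gD^D$ and then shift by $v_\ga$.

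First I would introduce the map $U : \ell^2(\dN) \to L^2((0,\pi),d\gth)$ defined by $(Uf)(\gth) = \sqrt{2/\pi}\sum_{x\in\dN} f(x)\sin(\gth x)$. Using the elementary identity $\frac{2}{\pi}\int_0^\pi \sin(\gth x)\sin(\gth y)\,d\gth = \delta_{xy}$ for $x,y \in \dN$ (which follows from $\int_0^\pi \cos(n\gth)\,d\gth = \pi\delta_{n,0}$, $n \in \dZ$), together with the completeness of the Fourier sine system $\{\sqrt{2/\pi}\sin(\gth x)\}_{x\in\dN}$ in $L^2((0,\pi))$, one checks that $U$ is unitary. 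Next I would verify that $U$ diagonalizes the free operator: from the trigonometric identity $\sin(\gth(x+1)) + \sin(\gth(x-1)) = 2\cos\gth\,\sin(\gth x)$, and from $\sin(\gth\cdot 0)=0$ (which is exactly the Dirichlet condition $f(0)=0$ used when $\gD^D$ is evaluated at $x=1$), a short computation gives $(-\gD^D)\sin(\gth\,\cdot\,)(x) = 2(1-\cos\gth)\sin(\gth x)$, so that $U(-\gD^D)U^{-1}$ is multiplication by $m(\gth) := 2(1-\cos\gth)$ on $L^2((0,\pi))$.

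Since $m$ is real-analytic and strictly increasing on $(0,\pi)$, with $m'(\gth) = 2\sin\gth > 0$ and range $(0,4)$, the change of variable $\gl = m(\gth) + v_\ga = 2 + v_\ga - 2\cos\gth$ turns multiplication by $m(\gth)+v_\ga$ into multiplication by the independent variable $\gl$ on $L^2((v_\ga, 4+v_\ga),d\gl)$. A multiplication operator by $\gl$ on an interval is purely absolutely continuous, which yields $\gs(h^{el}_\ga) = \gs_{ac}(h^{el}_\ga) = [v_\ga, 4 + v_\ga]$. Inverting the substitution gives $\gth(\gl) = \arccos((-\gl + 2 + v_\ga)/2)$, in agreement with the claimed formula.

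Finally, the generalized eigenfunctions are read off from this spectral representation. To pass from the $\gth$-normalized eigenfunctions $\sqrt{2/\pi}\sin(\gth x)$ to the $\gl$-normalized ones, I would insert the Jacobian factor $(d\gl/d\gth)^{-1/2} = (2\sin\gth)^{-1/2}$, since under $\gl = m(\gth)+v_\ga$ the measure $d\gth$ becomes $d\gl/(2\sin\gth)$ and preservation of unitarity onto $L^2(d\gl)$ forces this weight. This produces
\[
g_\ga(x,\gl) = \sqrt{\tfrac{2}{\pi}}\,(2\sin\gth)^{-1/2}\sin(\gth x) = \pi^{-1/2}(\sin\gth)^{-1/2}\sin(\gth x),
\]
and using $\sin\gth = (1-\cos^2\gth)^{1/2} = (1 - (-\gl+2+v_\ga)^2/4)^{1/2}$ gives exactly the stated expression. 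The only delicate point is this spectral-density bookkeeping: one must track the Jacobian of the substitution carefully to recover the precise constant $\pi^{-1/2}$ and the exponent $-\tfrac14$ in the normalization, whereas the unitarity of $U$, the diagonalization, and the determination of the a.c.\ spectrum are routine verifications.
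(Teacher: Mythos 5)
Your proof is correct, and it takes a route that differs in organization from the paper's, even though the two share the same computational kernel (the trigonometric identity $\sin(\gth(x+1))+\sin(\gth(x-1))=2\cos\gth\,\sin(\gth x)$ and the orthogonality of the sine system). The paper works directly in the energy variable $\gl$: it verifies the eigenvalue equation, then checks by hand, against test functions, the distributional orthonormality relation $\sum_{x}g(x,\gl)g(x,\nu)=\gd(\gl-\nu)$ using the Dirichlet kernel, and the completeness relation $\int g(x,\nu)g(y,\nu)\,d\nu=\gd_{xy}$, deducing absolute continuity from the existence of this complete family. You instead factor the spectral representation as (standard Fourier sine transform onto $L^2((0,\pi),d\gth)$) $\circ$ (change of variables $\gl=2+v_\ga-2\cos\gth$ with the Jacobian weight $(2\sin\gth)^{-1/2}$), invoking the classical fact that $\{\sqrt{2/\pi}\sin(\gth x)\}_{x\in\dN}$ is an orthonormal basis. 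What your route buys is a cleaner and arguably more rigorous treatment of the spectral type: once $h^{el}_\ga$ is exhibited as multiplication by the independent variable on $L^2((v_\ga,v_\ga+4),d\gl)$, pure absolute continuity and the identification of the spectrum are immediate, whereas the paper's inference of absolute continuity from ``a complete set of generalized eigenfunctions'' is somewhat informal. What the paper's route buys is that the delta-normalization of the specific functions $g_\ga(x,\gl)$ -- precisely the normalization used later in the spectral representation \eqref{3.5} -- is verified explicitly rather than encoded in the Jacobian bookkeeping; in your version that normalization is correct but rests on tracking the factor $(d\gth/d\gl)^{1/2}$, which you do carry out correctly, recovering $\pi^{-1/2}(1-(-\gl+2+v_\ga)^2/4)^{-1/4}$.
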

\begin{proof}
We prove the absolute continuity of the spectrum by showing that
\begin{equation*}
\{g_\ga(x,\lambda) \,\vert\, \lambda \in (-2,2) \}
\end{equation*}
is a complete set of generalized eigenfunctions. Note that it suffices to prove the lemma for
\bed
((\Delta^D + 2) f )(x) = f(x+1)+f(x-1), \qquad f(0)=0.
\eed
The lemma then follows by replacing $\lambda$ with
$-\lambda+2+v_\ga$. Let $\lambda \in (-2,2)$ and
\bed
g_{\Delta^D}(x,\lambda) =  \pi^{-\frac{1}{2}} (1-\lambda^2/4)^{-\frac{1}{4}} \sin\big(\arccos(\lambda/2)x\big)
\eed
Note that $g_{\Delta^D}(0,\lambda) = 0$, whence the boundary condition
is satisfied. We substitute $\mu = \arccos(\lambda/2) \in (0,\pi)$, i.e. $\lambda = 2 \cos(\mu)$ and obtain
\bed
\sin(\mu (x+1)) + \sin(\mu(x-1)) = 2 \sin(\mu x) \cos(\mu),
\eed
whence $g_{\Delta^D}(x,\lambda)$ satisfies the eigenvalue
equation. It is obvious that
$g_{\Delta^D}(\cdot,\lambda) \notin \ell^2({\dN_0})$ for $\lambda \in (-2,2)$.
To complete the proof of the lemma, it remains to show the
ortho-normality and the completeness. For the ortho-normality, we have to show that
\bed
\sum_{x \in \dN} g_{\Delta^D}(x,\lambda)g_{\Delta^D}(x,\nu) = \delta(\lambda - \nu).
\eed
Let $\psi \in C_0^\infty\big((-2,2)\big)$. We use the substitution $\mu = \arccos({\nu}/{2})$ and the relation
\bed
\sin(\arccos(y)) = (1-y^2)^{-\frac{1}{2}}
\eed
to obtain
\bed
\begin{aligned}
&\int_{-2}^2 \de\nu \, \sum_{x \in \dN} g_{\Delta^D}(x,\lambda)g_{\Delta^D}(x,\nu)\psi(\nu)\\
&\quad = 2\pi^{-1} \int_{0}^\pi \de\mu \, \sum_{x\in\dN}
\frac{\sin(\mu)\sin\big(\arccos(\lambda/2) x\big)
\sin(\mu x)}{(\sin(\mu))^{\frac{1}{2}}(\sin(\arccos(\lambda/2)))^{\frac{1}{2}}} \psi(2 \cos(\mu)) \\
&\quad = (2\pi)^{-1} \int_{0}^\pi \de\mu \, \sum_{x\in\dN}
\frac{(\sin(\mu))^{\frac{1}{2}}}{(\sin(\arccos(\lambda/2)))^{\frac{1}{2}}} \Big(e^{i (\arccos(\lambda/2)-\mu)x} +\\
&
e^{-i (\arccos(\lambda/2)-\mu)x}- e^{i (\arccos(\lambda/2)+\mu)x} - e^{-i (\arccos(\lambda/2)+\mu)x}\Big) \psi(2 \cos(\mu))
\end{aligned}
\eed
Observe that for the Dirichlet kernel
\bed
\sum_{x \in \dN_0} (e^{i xy} + e^{-i xy}) - 1 = 2\pi \, \delta(y),
\eed
whence
\bed
\begin{aligned}
&\int_{-2}^2 \de\nu \, \sum_{x \in \dN} g_{\Delta^D}(x,\lambda)g_{\Delta^D}(x,\nu)\psi(\nu)\\
&\quad = \int_{0}^\pi \de\mu \,\frac{(\sin(\mu))^{\frac{1}{2}}}{(\sin(\arccos(\lambda/2)))^{\frac{1}{2}}}
\Big( \delta(\arccos(\lambda/2) - \mu) + \\
& \quad\quad\delta(\arccos(\lambda/2) + \mu)\Big) \psi(2 \cos(\mu)) = \psi(\lambda).
\end{aligned}
\eed
In the second equality we use that the summand containing
$\delta(\arccos(\lambda/2)+\mu)$ is zero since both
$\arccos(\lambda/2) > 0$ and $\mu > 0$. Thus, the generalized
eigenfunctions are orthonormal. Finally, using once more the substitution $\mu = \arccos({\nu}/{2})$, we get
\bed
\begin{aligned}
&\int_{-2}^2 \de\nu \, g_{\Delta^D}(x,\nu)g_{\Delta^D}(y,\nu)\\
&\quad = \int_{-2}^2 \de\nu \, \big(1-(\nu/2)^2\big)^{-\frac{1}{2}} \sin\big(\arccos(\nu/2)x\big)
\sin\big(\arccos(\nu/2)y\big)\\
&\quad = 2\pi^{-1} \int_0^\pi \de\mu \, (\sin(\mu))^{-1}\sin(\mu)sin(\mu x)\sin(\mu y) \\
&\quad = \delta_{xy}
\end{aligned}
\eed
for $x,y \in \dN$, whence the family of generalized eigenfunctions is
also complete.
\end{proof}
From these two lemmas we obtain the following corollary that gives us the spectral properties of $H_0$.
\begin{proposition}\la{II.5}
Let $\pS = \{H,H_0\}$ be the $JCL$-model. Then
$\gs(H_0) = \gs_{ac}(H_0) \cup \gs_{pp}(H_0)$, where
\begin{equation*}
\sigma_{ac}(H_0) = \bigcup_{n \in \dN_0} [v_l + n\omega,v_l + 4 + n \omega] \cup [v_r + n \omega,v_r + 4 + n \omega]
\end{equation*}
and
\bed
\sigma_{pp}(H_0) = \bigcup_{n \in \dN_0} \{\gl^S_j + n \omega: j = 0,1\}.
\eed
The eigenvectors are given by $\widetilde g(m, n) = e^S_m \otimes \Upsilon_n$, $m = 0,1$, $n\in \dN_0$.
The generalized eigenfunctions are given by
$\widetilde g_\ga(\cdot,\lambda,n) = g_\ga(\cdot,\lambda-n\omega)\otimes \Upsilon_n$ for $\lambda \in \sigma_{ac}(H_0)$,
$n\in \dN_0$, $\ga \in \{l,r\}$.
\end{proposition}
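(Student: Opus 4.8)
The plan is to exploit the fact that $H_0$ is the sum of two commuting operators living on distinct tensor factors, one of which, namely $h^{ph}$, has purely discrete spectrum with one-dimensional eigenspaces. First I would use the spectral decomposition $\sh^{ph} = \bigoplus_{n\in\dN_0}\sh^{ph}_n$ into the one-dimensional eigenspaces of $h^{ph}$ spanned by $\gY_n$, exactly as in the resolvent formula \eqref{2.5} of Proposition \ref{II.2A}. Since $H_0 = h_0^{el}\otimes I_{\sh^{ph}} + I_{\sh^{el}}\otimes h^{ph}$ and $h^{ph}\gY_n = n\go\,\gY_n$, this yields the orthogonal decomposition
\bed
H_0 = \bigoplus_{n\in\dN_0}\big(h_0^{el} + n\go\big)\otimes I_{\sh^{ph}_n},
\eed
and because each $\sh^{ph}_n$ is one-dimensional, the $n$-th summand acts as the shifted electronic operator $h_0^{el} + n\go$ on $\sh^{el}$, with the photon factor contributing only the tensor component $\gY_n$.

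Next I would analyze the electronic block. By construction $h_0^{el} = h_l^{el}\oplus h_S \oplus h_r^{el}$ on $\sh^{el} = \sh^{el}_l\oplus\sh_S\oplus\sh^{el}_r$. The central block $h_S$ is a self-adjoint $2\times 2$ matrix, so it contributes the two eigenvalues $\gl^S_0,\gl^S_1$ with eigenvectors $e^S_0,e^S_1$, i.e. pure point spectrum. The two lead blocks are covered verbatim by Lemma \ref{II.3Aa}, which gives $\gs(h_\ga^{el}) = \gs_{ac}(h_\ga^{el}) = [v_\ga,4+v_\ga]$ together with the explicit generalized eigenfunctions $g_\ga(\cdot,\gl)$. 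Thus $\gs(h_0^{el})$ splits into the pure point part $\{\gl^S_0,\gl^S_1\}$ and the absolutely continuous part $[v_l,4+v_l]\cup[v_r,4+v_r]$, and this splitting is manifestly stable under the constant shift by $n\go$.

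Finally I would assemble the pieces by taking the orthogonal sum over $n\in\dN_0$, which preserves spectral type: a direct sum of absolutely continuous operators is absolutely continuous, a direct sum of operators with pure point spectrum has pure point spectrum, a constant shift $+n\go$ changes neither, and since each block is either finite-dimensional or purely absolutely continuous there is no singular continuous part. Hence $\gs_{ac}(H_0)=\bigcup_{n\in\dN_0}\big([v_l+n\go,4+v_l+n\go]\cup[v_r+n\go,4+v_r+n\go]\big)$ with generalized eigenfunctions $g_\ga(\cdot,\gl-n\go)\otimes\gY_n$, while $\gs_{pp}(H_0)=\bigcup_{n\in\dN_0}\{\gl^S_j+n\go:j=0,1\}$ with eigenvectors $e^S_m\otimes\gY_n$, which is precisely the claim. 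The only point requiring care is that an eigenvalue $\gl^S_j+n\go$ may fall inside an absolutely continuous band arising from a different index; but since $H_0$ is literally an orthogonal direct sum, such coincidences are harmless, each $e^S_m\otimes\gY_n$ remaining a genuine (possibly embedded) eigenvector while the absolutely continuous subspace is untouched, so the stated equality $\gs(H_0)=\gs_{ac}(H_0)\cup\gs_{pp}(H_0)$ holds even when the two sets overlap. This bookkeeping of the direct sum, rather than any hard analysis, is the only delicate step, the substantive spectral computation having already been discharged in Lemma \ref{II.3Aa}.
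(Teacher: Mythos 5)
Your proof is correct, but it follows a genuinely different route from the paper. The paper invokes the general theorem on spectra of tensor sums (cited from the literature on tensor product Hamiltonians): for self-adjoint $A$, $B$ with $\gs_{sc}(A)=\gs_{sc}(B)=\emptyset$ one has $\gs_{sc}(A\otimes I + I\otimes B)=\emptyset$, $\gs_{ac}(A\otimes I+I\otimes B)=(\gs_{ac}(A)+\gs(B))\cup(\gs(A)+\gs_{ac}(B))$ and $\gs_{pp}(A\otimes I+I\otimes B)=\gs_{pp}(A)+\gs_{pp}(B)$, together with the fact that tensor products of (generalized) eigenfunctions are (generalized) eigenfunctions of the sum; it then simply applies this with $A=h^{el}_0$, $B=h^{ph}$, using Lemma \ref{II.3Aa} and \eqref{2.1}. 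You instead exploit the special feature that $h^{ph}$ is pure point with one-dimensional eigenspaces, so that $H_0$ literally decomposes as the orthogonal sum $\bigoplus_{n\in\dN_0}(h^{el}_0+n\go)\otimes I_{\sh^{ph}_n}$ (the same fiber decomposition the paper uses for the resolvent in \eqref{2.5}), and then you only need elementary direct-sum spectral theory plus Lemma \ref{II.3Aa}. What your approach buys is self-containedness: when one tensor factor is discrete, the heavy tensor-sum theorem is unnecessary, and the absence of singular continuous spectrum is manifest blockwise; your remark on embedded eigenvalues being harmless under an orthogonal sum is also a point the paper leaves implicit. What the paper's approach buys is brevity and generality (it would survive if both factors carried continuous spectrum). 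The only cosmetic point you should add is that the a.c.\ spectrum of a countable direct sum is a priori the \emph{closure} of the union of the blockwise a.c.\ spectra; here the union $\bigcup_{n}([v_l+n\go,v_l+4+n\go]\cup[v_r+n\go,v_r+4+n\go])$ is already closed because the bands are compact and escape to $+\infty$, and likewise for the eigenvalue set, so the stated equalities hold without closure. This is bookkeeping, not a gap.
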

\begin{proof}
It is well known (see e.g. \cite{Damak2006}) that for two self-adjoint
operators $A$ and $B$ with $\sigma_{sc}(A) = \sigma_{sc}(B) = \emptyset$,
we have $\sigma_{sc}(A\otimes 1 + 1 \otimes B) = \emptyset$,
\bed
\sigma_{ac}(A \otimes 1 + 1 \otimes B) = \big( \sigma_{ac}(A) + \sigma(B) \big) \cup \big( \sigma(A) + \sigma_{ac}(B)\big)
\eed
and
\begin{equation*}
\sigma_{pp}(A \otimes 1 + 1 \otimes B) = \sigma_{pp}(A) + \sigma_{pp}(B).
\end{equation*}
Furthermore, if $\psi_A(\lambda_A)$ and $\psi_B(\lambda_B)$ are
(generalized) eigenfunctions of $A$ and $B$, respectively,
then $\psi_A(\lambda_A)\otimes\psi_B(\lambda_B)$ is a (generalized)
eigenfunction of $A\otimes I + I\otimes B$ for the (generalized)
eigenvalue $\lambda_A + \lambda_B$.

The lemma follows now with $A = h_0^{el}$ and $B = h^{ph}$ using
Lemmata \ref{II.3Aa} and \eqref{2.1}
and the fact that $h_{S}$ has eigenvectors $\{e^S_0,e^S_1\}$
with eigenvalues $\{\gl^S_0, \gl^S_1 = \gl^S_0 + \varepsilon\}$.
\end{proof}

\subsection{Spectral representation}\la{SecII.3}

For the convenience of the reader we define here what we mean under a spectral representation of the
absolutely continuous part
$K^{ac}_0$ of a self-adjoint operator $K_0$ on a separable Hilbert space
$\sK$. Let $\sk$ be an auxiliary separable Hilbert space. We consider the Hilbert space
$L^2(\dR,d\gl,\sk)$. By $\cM$ we define the multiplication operator induced by the independent variable
$\gl$ in $L^2(\dR,d\gl,\sk)$.
Let $\Phi: \sK^{ac}(K_0) \longrightarrow L^2(\dR,d\gl,\sk)$ be an isometry acting from $\sK^{ac}(K_0)$
into $L^2(\dR,d\gl,\sk)$ such that
$\Phi\dom(K^{ac}_0) \subseteq \dom(\cM)$ and
\bed
\cM \Phi f = \Phi K^{ac}_0 f, \quad f \in \dom(K^{ac}_0).
\eed
Obviously, the orthogonal projection $P := \Phi\Phi^*$ commutes with $\cM$ which yields the existence of
a measurable family
$\{P(\gl)\}_{\gl \in \dR}$ such that
\bed
(P\wh f)(\gl) = P(\gl) \wh f(\gl), \qquad \wh f \in L^2(\dR,\gl,\sk).
\eed
We set $L^2(\dR,d\gl,\sk(\gl)) := PL^2(\dR,\gl,\sk)$, $\sk(\gl) :=
P(\gl)\sk$,  and call the triplet
\bed
\Pi(K^{ac}_0) := \{L^2(\dR,d\gl,\sk(\gl)),\cM,\Phi\}
\eed
a spectral representation of $K^{ac}_0$. If $\{L^2(\dR,d\gl,\sk(\gl)),\cM,\Phi\}$ is a spectral
representation of $K^{ac}$, then
$K^{ac}$ is unitarily equivalent $\cM_0 := \cM\upharpoonright L^2(\dR,d\gl,\sk(\gl))$. Indeed, one has
$\Phi K^{ac}_0\Phi^* = \cM_0$. The function $\xi^{ac}_{K_0}(\gl) :=
\dom(\sk(\gl))$, $\gl \in \dR$, is called the spectral multiplicity
function of $K^{ac}_0$. Notice that $0 \le \xi^{ac}_{K_0}(\gl) \le
\infty$ for $\gl \in \dR$.

For $\ga\in\{l,r\}$ the generalized eigenfunctions of $h_\ga^{el}$
define generalized Fourier transforms by $\phi_\ga^{el}: \sh^{el}_\ga = \sh^{el,ac}_\ga(h^{el}_\ga)
\rightarrow L^2([v_\ga,v_\ga+4])$ and
\be\la{3.5}
(\phi_\ga^{el} f_\ga)(\lambda) = \sum_{x \in {\dN_0}}
g_\ga(x,\lambda) f_\ga(x), \quad f_\ga \in \sh^{el}_\ga.
\ee
Setting
\be\la{d2.10}
\sh^{el}_\ga(\gl) :=
\begin{cases}
\dC & \gl \in [v_\ga,v_\ga + 4]\\
0 & \gl \in \dR \setminus [v_\ga,v_\ga + 4].
\end{cases}
\ee
one easily verifies that $\Pi(h^{el}_\ga) = \{L^2(\dR,d\gl,\sh^{el}_\ga(\gl)),\cM,\phi^{el}_\ga\}$
is a spectral representation of $h^{el}_\ga = h^{el,ac}_\ga$, $\ga = l,r$, where we
always assumed implicitly that $(\phi^{el}_\ga f_\ga)(\gl) = 0$ for $\gl \in \dR \setminus
[v_\ga,v_\ga+4]$.  Setting
\be\la{d2.11}
\sh^{el}(\gl) :=
\begin{matrix}
\sh^{el}_l(\gl)\\
 \oplus \\
\sh^{el}_r(\gl)
\end{matrix} \;\;\subseteq \dC^2, \quad \gl \in \dR,
\ee
and introducing the map
\be\la{d2.12}
\phi^{el}: \sh^{el,ac}(h^{el}_0) =
\begin{matrix}
\sh^{el}_l\\
\oplus\\
\sh^{el}_r
\end{matrix} \longrightarrow L^2(\dR,d\gl,\sh^{el}(\gl))
\ee
defined by
\be\la{d2.13}
\phi^{el}f :=
\begin{pmatrix}
\phi^{el}_lf_l\\
\phi^{el}_rf_r
\end{pmatrix}, \quad \mbox{where} \quad
f :=
\begin{pmatrix}
f_l\\
f_r
\end{pmatrix}
\ee
we obtain a spectral representation
$\Pi(h^{el,ac}_0) = \{L^2(\dR,d\gl,\sh^{el}(\gl)),\cM,\phi^{el}\}$ of the absolutely
continuous part $h^{el,ac}_0 = h^{el}_l \oplus h^{el}_r$ of
$h^{el}_0$. One easily verifies that $0 \le \xi^{ac}_{h^{el}_0}(\gl)
\le 2$ for $\gl \in \dR$. Introducing
\be\la{eq:2.40}
\gl^{el}_{\rm min} := \min\{v_l,v_r\}
\quad \mbox{and} \quad
\gl^{el}_{\rm max} := \max\{v_l + 4,v_r + 4\}
\ee
one easily verifies that $\xi^{ac}_{h^{el}_0}(\gl) = 0$ for $\gl \in
\dR \setminus [\gl^{el}_{\rm min},\gl^{el}_{\rm max}]$.

Notice, if $v_r + 4 \le v_l$, then
\bed
\sh^{el}(\gl) =
\begin{cases}
\dC, & \gl \in [v_r,v_r+4] \cup [v_l,v_l+4],\\
\{0\}, & \mbox{otherwise}
\end{cases}
\eed
which shows that $h^{el}_0$ has simple spectrum. In particular, it
holds $\xi^{ac}_{h^{el}_0}(\gl) =1$ for $\gl \in [v_r,v_r+4] \cup
[v_l,v_l+4]$ and otherwise $\xi^{ac}_{h^{el}_0}(\gl) = 0$.

Let us introduce the Hilbert space
$\sh := l^2(\dN_0,\dC^2) = \bigoplus_{n\in\dN_0} \sh_n$, $\sh_n := \dC^2$, $n \in
\dN_0$. Regarding $\sh^{el}(\gl-n\go)$ as a subspace of
$\sh_n$ one regards
\be\la{r2.6}
\sh(\gl) :=
\bigoplus_{n\in\dN_0}\sh_n(\gl), \quad \sh_n(\gl) := \sh^{el}(\gl-n\go),
\quad \gl \in \dR,
\ee
as a measurable family of subspaces in $\sh$. Notice that
$0 \le \dim(\sh(\gl)) < \infty$, $\gl \in \dR$. We consider the Hilbert
space $L^2(\dR,d\gl,\sh(\gl))$.

Furthermore, we introduce the isometric map $\Phi: \sH(H^{ac}_0)
\longrightarrow L^2(\dR,d\gl,\sh(\gl))$ defined by
\be\la{3.6}
(\Phi f)(\lambda) =
\bigoplus_{n\in\dN_0}
\begin{pmatrix}
(\phi_l^{el} f_l(n))(\lambda- n \omega) \\
(\phi_r^{el} f_r(n))(\lambda-n \omega)
\end{pmatrix}, \quad \gl \in \dR
\ee
where
\bed
\bigoplus_{n\in\dN_0}
\begin{pmatrix}
f_l(n)\\
f_r(n)
\end{pmatrix} \in
\bigoplus_{n\in\dN_0}\sh^{el,ac}(h^{el}_0) \otimes \sh^{ph}_n
= \bigoplus_{n\in\dN}
\left(
\begin{array}{c}
\sh^{el}_l \otimes h^{ph}_n\\
\oplus\\
\sh^{el}_r \otimes h^{ph}_n
\end{array}
\right)
\eed
where $\sh_{ph} = \bigoplus_{n\in\dN_0}\sh^{ph}_n$ and $\sh^{ph}_n$ is
the subspace spanned by the eigenvectors $\gY_n$ of $h^{ph}$. One easily verifies
that $\Phi$ is an isometry acting from $\sH^{ac}(H^{ac}_0)$ onto $L^2(\dR,d\gl,\sh(\gl))$.
\bl\la{rII.4}
The triplet $\{L^2(\dR,d\gl,\sh(\gl)),\cM,\Phi\}$ forms a spectral
representation of $H^{ac}_0$, that is, $\Pi(H^{ac}_0) =
\{L^2(\dR,d\gl,\sh(\gl)),\cM,\Phi\}$ where there is a constant $d \in
\dN_0$ such that $0 \le \xi^{ac}_{H_0}(\gl) \le 2d_{\rm
  max}$\, for $\gl \in \dR$  where $d_{\rm max} := \frac{\gl^{el}_{\rm max} - \gl^{el}_{\rm min}}{\go}$
		and $\gl^{el}_{\rm max}$ and $\gl^{el}_{\rm min}$ are given by \eqref{eq:2.40}.
\el
\begin{proof}
It remains to show that $\Phi$ transform $H^{ac}_0$ into the
multiplication operator $\cM$. We have
\bed
H^{ac}_0 f = \bigoplus_{n\in\dN_0}
\begin{pmatrix}
(h^{el}_lf_l)(n) + n\go f_l(n)\\
(h^{el}_rf_r)(n) + n\go f_r(n)
\end{pmatrix}
\eed
which yields
\bead
\lefteqn{
(\Phi H^{ac}_0 f)(\gl)}\\
& &
=
\bigoplus_{n\in\dN_0}
\begin{pmatrix}
(\phi^{el}_l(h^{el}_lf_l)(n))(\gl-n\go) + n\go(\phi^{el}_lf_l(n))(\gl-n\go)\\
(\phi^{el}_r(h^{el}_rf_r)(n))(\gl-n\go) + n\go(\phi^{el}_r f_r(n))(\gl-n\go)
\end{pmatrix}\\
& &
=
\bigoplus_{n\in\dN_0}
\begin{pmatrix}
\gl(\phi^{el}_l f_l(n))(\gl-n\go)\\
\gl(\phi^{el}_r f_r(n))(\gl-n\go)
\end{pmatrix} = (\cM\Phi f)(\gl), \quad \gl \in \dR.
\eead
which proves the desired property.

One easily checks that $\sh(\gl)$ might be only non-trivial if
$\gl-n\go \in [\gl^{el}_{\rm min},\gl^{el}_{\rm max}]$. Hence we get
that $\sh(\gl)$ is non-trivial if the condition
\bed
\frac{\gl - \gl^{el}_{\rm max}}{\go} \le n \le \frac{\gl - \gl^{el}_{\rm
    min}}{\go}
\eed
is satisfied. Hence
\bed
0 \le \xi^{ac}_{H_0}(\gl) \le  2\;\card\left\{n \in \dN_0: \frac{\gl - \gl^{el}_{\rm max}}{\go} \le n \le
\frac{\gl - \gl^{el}_{\rm
    min}}{\go} \right\}, \quad \gl \in \dR.
\eed
or
\bed
0 \le \xi^{ac}_{H_0}(\gl) \le 2\card\left\{n \in \dN_0:
0 \le n \le \frac{\gl^{el}_{\rm max} - \gl^{el}_{\rm
    max}}{\go}\right\},
\quad \gl \in \dR.
\eed
Hence $0 \le \xi^{ac}_{H_0}(\gl) \le d_{\rm max}$ for $\gl \in \dR$.
\end{proof}

In the following we denote the orthogonal projection
from $\sh(\gl)$ onto $\sh_n(\gl)$ by $P_n(\gl)$, $\gl \in \dR$, cf
\eqref{r2.6}.  Since $\sh(\gl) = \bigoplus_{n\in\dN_0} \sh_n(\gl)$ we
have $I_{\sh(\gl)} = \sum_{n\in\dN_0}P_n(\gl)$, $\gl \in
\dR$. Further, we introduce the subspaces
\bed
\sh_{n_\ga}(\gl) := \sh^{el}_\ga(\gl - n\go), \quad \gl \in \dR, \quad
n\in\dN_0.
\eed
Notice that
\bed
\sh_n(\gl) = \bigoplus_{\ga \in \{l,r\}}\sh_{n_\ga}(\gl), \quad \gl
\in \dR, \quad n \in \dN_0.
\eed
By $P_{n_\ga}(\gl)$ we denote the orthogonal projection from $\sh(\gl)$
onto $\sh_{n_\ga}(\gl)$, $\gl \in \dR$. Obviously, we have
$P_n(\gl) = \sum_{\ga \in \{l,r\}}P_{n_\ga}(\gl)$, $\gl \in
\dR$.
\bexam\la{II.7}
{\rm
In general the direct integral $\Pi(H^{ac}_0)$ can be very complicated,
in particular, the structure of $\sh(\gl)$
given by \eqref{r2.6} is difficult to analyze. However, there are
interesting simple cases:
\item[\;\;(i)]
Let $v = v_l = v_r$ and $4 \le \go$. In this case we have
$\sh^{el}(\gl) = \dC^2$ for $[v,v+4]$ and
\bed
\sh(\gl) =
\begin{cases}
\dC^2, & \gl \in [v+n\go,v + n\go + 4], \quad n \in \dN_0,\\
\{0\}, & \mbox{otherwise}.
\end{cases}
\eed

\item[\;\;(ii)]
Let $v_r = 0$, $v_l = 4$, $\go_0 = 4$. Then
\bed
\sh(\gl) =
\begin{cases}
\sh^{el}_r(\gl) = \dC, & \gl \in[0,4),\\
\sh^{el}_{lr}(\gl) = \dC^2, & \gl \in [4,8),\\
\sh^{el}_{rl}(\gl) = \dC^2, & \gl \in [8,12),\\
\cdots
\end{cases}
\eed
where
\bed
\sh^{el}_{\ga\ga'}(\gl) =
\begin{matrix}
\sh^{el}_\ga(\gl)\\
\oplus\\
\sh^{el}_{\ga'}(\gl)
\end{matrix}, \qquad \ga,\ga' \in \{l,r\}, \quad \ga \not= \ga'.
\eed
Hence $\dim(\sh(\gl)) = 2$ for $\gl \ge 4$. \hfill$\Box$
}
\eexam

Let $Z$ be a bounded operator acting on $\sH^{ac}(H_0)$ and commuting  with $H^{ac}_0$.
Since $Z$ commutes with $H^{ac}_0$ there is a measurable family
$\{Z(\gl)\}_{\gl \in \dR}$ of bounded operators acting on $\sh(\gl)$
such that $Z$ is unitarily equivalent to the
multiplication operator induced by $\{Z(\gl)\}_{\gl \in \dR}$ in
$\Pi(H^{ac}_0)$. We set
\bed
Z_{m_\ga n_\gk}(\gl) := P_{m_\ga}(\gl)
Z(\gl)\upharpoonright\sh_{n_\gk}(\gl), \quad \gl \in \dR, \quad
\quad m,n\in \dN_0, \quad
\ga,\gk \in \{l,r\}.
\eed
Let $Z_{m_\ga n_\gk} :=
P_{m_\ga}ZP_{n_\gk}$ where $P_{m_\ga}$ is the
orthogonal projection from $\sH$ onto $\sH_{m_\ga} \subseteq \sH^{ac}(H_0)$,
cf. \eqref{2.6aa}. Obviously, the multiplication operator induced
$\{Z_{m_\ga n_\gk}(\gl)\}_{\gl\in\dR}$ in $\Pi(H^{ac}_0)$ is unitarily
equivalent to $Z_{m_\ga n_\gk}$.

Since by Lemma \ref{rII.4} $\sh(\gl)$ is a finite dimensional space,
the operators $Z(\gl)$ are finite dimensional ones and we can introduce
the quantity
\bed
\gs_{m_\ga n_\gk}(\gl) = \tr(Z_{m_\ga n_\gk}(\gl)^*Z_{m_\ga
  n_\gk}(\gl)), \quad \gl \in \dR, \quad
\quad m,n\in \dN_0, \quad
\ga,\gk \in \{l,r\}.
\eed
\bl\la{II.13}
Let $H_0$ be the self-adjoint operator defined by \eqref{rr2.3} on
$\sH$. Further let $Z$ be a bounded operator on $\sH^{ac}(H_0)$ commuting with $H^{ac}_0$

\item[\;\;\rm (i)] Let $\gG$ be a conjugation on $\sH$, cf. Section \ref{II.2}.
  If $\gG$ commutes with $H_0$ and $P_{n_\ga}$, $n\in \dN_0$, $\ga \in
  \{l,r\}$ and $\gG Z \gG = Z^*$ holds, then
  $\gs_{m_\ga n_\gk}(\gl) = \gs_{n_\gk m_\ga}(\gl)$, $\gl \in \dR$.

\item[\;\;\rm (ii)] Let $U$ be a mirror symmetry on $\sH$. If $U$
  commutes with $H_0$ and $Z$, then
  $\gs_{m_\ga n_\gk}(\gl) = \gs_{m_{\ga'} n_{\gk'}}(\gl)$, $\gl \in \dR$,
  $m,n\in \dN_0$, $\ga,\ga',\gk,\gk' \in \{l,r\}$, $\ga \not= \ga'$,
  $\gk \not= \gk'$.
\el
\begin{proof}
(i) Since $\gG$ commutes with $H_0$ the conjugation $\gG$ is reduce by
$\sH^{ac}(H_0)$. So without loss of generality we assume that $\gG$
acts on $\sH^{ac}(H_0)$. We set $\gG_{n_\ga} :=\gG\upharpoonright\sH_{n_\ga}$.
Notice that
\bed
\gG = \bigoplus_{n\in \dN_0,\ga \in \{l,r\}}\gG_{n_\ga}.
\eed
There is a measurable family $\{\gG(\gl)\}_{\gl \in \dR}$ of
conjugations such that the multiplication operator induced by $\{\gG(\gl)\}_{\gl \in \dR}$
in $\Pi(H^{ac}_0)$ is unitarily equivalent to $\gG$. Moreover, since
$\gG$ commutes with $P_{n_\ga}$ we get that the multiplication
operator induced by the measurable family
\bed
\gG_{n_\ga}(\gl) := \gG(\gl)\upharpoonright\sh_{n_\ga}(\gl), \quad \gl
\in \dR, \quad m\in \dN_0, \quad \ga \in \{l,r\},
\eed
is unitarily equivalent to $\gG_{n_\ga}$. Using $\gG Z \gG = Z^*$ we
get $\gG_{m_\ga}Z_{m_\ga n_\gk} \gG_{n_\gk} = Z_{n_\gk
  m_\ga}^*$. Hence
\be\la{2.25}
\gG_{m_\ga}(\gl) Z_{m_\ga n_\gk}(\gl)  \gG_{n_\gk}(\gl) = Z_{n_\gk
  m_\ga}(\gl)^*, \quad \gl \in \dR.
\ee
If $X$ is trace class operator, then $\tr(\gG X\gG) =
\overline{\tr(X)}$. Using that we find
\bead
\lefteqn{
\gs_{m_\ga n_\gk}(\gl) = \overline{\tr(\gG_{n_\gk}(\gl) Z_{m_\ga n_\gk}(\gl)^*Z_{m_\ga n_\gk}(\gl)\gG_{n_\gk}(\gl))} =}\\
& &
\overline{\tr(\gG_{n_\gk}(\gl) Z_{m_\ga n_\gk}(\gl)^*\gG_{m_\ga}\gG_{m_\ga}Z_{m_\ga n_\gk}(\gl)\gG_{n_\gk}(\gl))}
\eead
From \eqref{2.25} we obtain
\bed
\gs_{m_\ga n_\gk}(\gl) = \overline{\tr(Z_{n_\gk m_\ga}(\gl)Z_{n_\gk
    m_\ga}(\gl)^*)} = \gs_{n_\gk m_\ga}(\gl), \quad \gl \in \dR,
\eed
which proves (i).

(ii) Again without loss of generality we can assume that $U$ acts only
$\sH^{ac}(H_0)$.
Since $U$ commutes with $H_0$ there is a measurable family
$\{U(\gl)\}_{\gl \in \dR}$ of unitary operators acting on $\sh(\gl)$ such that the multiplication operator
induced by $\{U(\gl)\}_{\gl \in \dR}$ is unitarily equivalent to $U$.
Since $U\sH_{n_\ga} = \sH_{n_{\ga'}}$ we have $U(\gl)\sh_{n_\ga}(\gl) = \sh_{n_{\ga'}}(\gl)$,
$\gl \in \dR$. Hence
\bead
\lefteqn{
\gs_{m_\ga n_\gk}(\gl) = \tr(U(\gl) Z_{m_\ga n_\gk}(\gl)^*Z_{m_\ga n_\gk}(\gl)U(\gl)^*)=}\\
& &
\tr(U(\gl) Z_{m_\ga, n_\gk}(\gl)^*U(\gl)^*U(\gl)Z_{m_\ga,
  n_\gk}(\gl)U(\gl)^*).
\eead
Hence
\bed
\gs_{m_\ga n_\gk}(\gl) =
\tr(P_{n_{\gk'}}U(\gl)Z(\gl)^*U(\gl)^*P_{m_{\ga'}}(\gl)U(\gl)Z(\gl)U(\gl)^*P_{n_{\gk'}}(\gl)).
\eed
Since $U$ commutes with $Z$ we find
\bed
\gs_{m_\ga n_\gk}(\gl) =
\tr(P_{n_{\gk'}}Z(\gl)^*P_{m_{\ga'}}(\gl)Z(\gl)P_{n_{\gk'}}(\gl)) =
\gs_{m_{\ga'} n_{\gk'}}(\gl), \quad \gl \in \dR.
\eed
which proves (ii).
\end{proof}

\subsection{Spectral properties of $H$: second part}\la{SecII.6}

Since we have full information on the spectral properties of $H_0$
we can use this to show that $H$ has no singular continuous spectrum.
Crucial for that is the following lemma:
with the help of \cite[Cor. IV.15.19]{Baumgaertel1983}, which
establishes existence and completeness of wave operators and absence
of singular continuous spectrum through a time-falloff method.
We cite it as a Lemma for convenience, with slight simplifications that suffice for our purpose.
\bl[{\cite[Corollary IV.15.19]{Baumgaertel1983}}]\la{II.8}
Let $\{H_0,H\}$ be a scattering system and let $\gL$ be a
closed countable set. Let $F_+$ and $F_-$ be two self-adjoint
operators such that $F_+ + F_- = P_{H_0}^{ac}$ and
\begin{equation*}
s-\lim_{t\rightarrow\infty} e^{\mp itH_0}F_\pm e^{\pm itH_0} = 0.
\end{equation*}
If $(H-i)^{-1}-(H_0-i)^{-1} \in \sL_\infty(\sH)$, $(1-P_{H_0}^{ac})\gamma(H_0) \in \sL_\infty(\sH)$, and
\begin{equation*}
\Big\vert \int_0^{\pm\infty} \de t \big\Vert \big((H_0-i)^{-1}-(H-i)^{-1}\big)e^{-itH_0}\gamma(H_0)
F_\pm \big\Vert \Big\vert < \infty
\end{equation*}
for all $\gamma \in C_0^\infty(\dR \setminus \gL)$,
then $W_\pm(H,H_0)$ exist and are complete and
$\sigma_{sc}(H) = \sigma_{sc}(H_0) = \emptyset$.
Furthermore, each eigenvalue of $H$ and $H_0$ in $\dR \setminus\gL$
is of finite multiplicity and these eigenvalues accumulate at most at points of $\gL$ or at $\pm \infty$.
\el

We already know that the wave operators exist and are
complete since the resolvent difference is trace class. Hence, we need Lemma \ref{II.8}
only to prove the following proposition.
\begin{proposition}	\label{II.9}
The Hamiltonian $H$ defined by \eqref{rr2.5} has no singular continuous
spectrum, that is, $\sigma_{sc}(H) = \emptyset$.
\end{proposition}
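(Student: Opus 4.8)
The plan is to verify the hypotheses of Lemma \ref{II.8} for the pair $\{H_0,H\}$, since that lemma delivers exactly the conclusion $\sigma_{sc}(H)=\emptyset$ (and re-establishes completeness of the wave operators, which we already have). Two of its hypotheses are essentially free. By Proposition \ref{II.2A} the resolvent difference $(H-i)^{-1}-(H_0-i)^{-1}$ lies in $\sL_1(\sH)\subseteq\sL_\infty(\sH)$. For the second, take the exceptional set $\gL$ to be the union of the \emph{thresholds} (band edges) and the eigenvalues of $H_0$, i.e.
\bed
\gL := \{v_\ga + n\go,\ v_\ga + 4 + n\go : \ga\in\{l,r\},\ n\in\dN_0\}\ \cup\ \gs_{pp}(H_0),
\eed
which by Proposition \ref{II.5} is countable and closed, accumulating only at $+\infty$. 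For $\gamma\in C_0^\infty(\dR\setminus\gL)$ the support of $\gamma$ avoids $\gs_{pp}(H_0)\subseteq\gL$, so $(1-P_{H_0}^{ac})\gamma(H_0)=0$; in particular it is compact. Thus only the choice of $F_\pm$ and the time-integrability condition remain.

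For $F_\pm$ I would use the forward/backward asymptotic decomposition of the absolutely continuous subspace associated with the two leads. Realising the two semi-infinite leads as the two ends of a line with the dot in the middle, one chooses a pair of self-adjoint operators $F_+,F_-$ with $F_++F_-=P_{H_0}^{ac}$ that smoothly separate the two asymptotic directions of propagation, so that a freely evolving absolutely continuous state is, up to a vanishing remainder, carried into the $F_+$-channel as $t\to+\infty$ and into the $F_-$-channel as $t\to-\infty$. With such a choice the required strong limits $s\text{-}\lim_{t\to\infty}e^{\mp itH_0}F_\pm e^{\pm itH_0}=0$ follow from a RAGE-type ballistic-transport argument, using that on $\dR\setminus\gL$ the group velocity of $h^{el}_0$ is non-degenerate.

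The heart of the matter is the integrability condition
\bed
\Big\vert\int_0^{\pm\infty}\de t\,\big\Vert\big((H_0-i)^{-1}-(H-i)^{-1}\big)e^{-itH_0}\gamma(H_0)F_\pm\big\Vert\Big\vert<\infty .
\eed
Here I would first use the resolvent identity $(H_0-i)^{-1}-(H-i)^{-1}=(H-i)^{-1}V(H_0-i)^{-1}$ with $V=V_{el}+V_{ph}$, and absorb $(H_0-i)^{-1}$ into the cut-off, $(H_0-i)^{-1}e^{-itH_0}\gamma(H_0)=e^{-itH_0}\tilde\gamma(H_0)$ with $\tilde\gamma(\gl)=(\gl-i)^{-1}\gamma(\gl)\in C_0^\infty(\dR\setminus\gL)$. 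Since $\|(H-i)^{-1}\|\le 1$, it suffices to bound $\|V e^{-itH_0}\tilde\gamma(H_0)F_\pm\|$. The key point is that $V$ is spatially localised at the dot and the two contact sites: $V_{el}$ is finite rank, while $V_{ph}$ acts only on the dot component and, on $\ran\tilde\gamma(H_0)$, only finitely many photon numbers $n$ occur (cf. the finite-dimensionality of $\sh(\gl)$ in Lemma \ref{rII.4}), so $b,b^\ast$ are bounded there and $H_0$ preserves the photon sectors. Hence $\|V e^{-itH_0}\tilde\gamma(H_0)F_\pm\|$ is controlled by $\|\chi_{\rm loc}\,e^{-itH_0}\tilde\gamma(H_0)F_\pm\|$, where $\chi_{\rm loc}$ projects onto the finite electronic region (dot plus contacts) within the finitely many active photon sectors.

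The remaining estimate $\|\chi_{\rm loc}\,e^{-itH_0}\tilde\gamma(H_0)F_\pm\|\le C_M(1+|t|)^{-M}$ for every $M$ is the propagation estimate that I expect to be the main obstacle. Sector by sector $e^{-itH_0}$ reduces to $e^{-itn\go}e^{-ith^{el}_0}$, and the generalized eigenfunctions of $h^{el}_\ga$ from Lemma \ref{II.3Aa} are $\sin(k_\ga(\gl)x)$ with $k_\ga(\gl)=\arccos((-\gl+2+v_\ga)/2)$; on $\supp\gamma$, which is bounded away from the band edges collected in $\gL$, the group velocity $\tfrac{d\gl}{dk}$ is bounded away from $0$. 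A non-stationary-phase analysis then shows that the freely evolving wave packet escapes the fixed region $\chi_{\rm loc}$ faster than any power of $|t|$, with $F_\pm$ ensuring that the escape occurs in the correct time direction for each of the two half-line integrals. This yields the rapid decay, hence integrability, and Lemma \ref{II.8} gives $\gs_{sc}(H)=\emptyset$.
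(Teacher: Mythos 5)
You take essentially the same route as the paper: Lemma \ref{II.8} applied to $\{H_0,H\}$, compactness of the resolvent difference from Proposition \ref{II.2A}, a closed countable exceptional set $\gL$ built from the band edges (your enlargement of $\gL$ by $\gs_{pp}(H_0)$, which makes $(1-P^{ac}_{H_0})\gamma(H_0)=0$, is a harmless simplification of the paper's finite-multiplicity argument), reduction of the time-integrability condition to the contact sites via the resolvent identity and the localisation of $V=V_{el}+V_{ph}$, and finally an integration-by-parts/non-stationary-phase bound exploiting that $\supp\gamma$ avoids the band edges. This is precisely the skeleton of the paper's proof.

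The one genuine gap is the pair $F_\pm$: it is the only hypothesis of Lemma \ref{II.8} that requires an actual construction, and you never produce one — you postulate operators ``separating the asymptotic directions of propagation'' and appeal to a RAGE-type argument, which by itself does not suffice (RAGE controls compactly localized observables, whereas $F_\pm$ are half-channel projections). Moreover, as described, your orientation is inverted: the hypothesis demands $\|F_+e^{itH_0}f\|\to 0$ as $t\to+\infty$, i.e. forward-evolved states must \emph{leave} the $F_+$-channel, not be ``carried into'' it; this is only a relabelling, but it indicates the verification was not done. The paper takes the abstract conjugate-variable choice $F_\pm=\Phi^\ast\cF\,\Pi_\pm\,\cF^\ast\Phi$ in the spectral representation of Lemma \ref{rII.4}; then $F_++F_-=P^{ac}(H_0)$ is immediate, the strong limits follow in two lines because $e^{itH_0}$ becomes translation (the norm is $\int_t^\infty|\psi(x)|^2\,dx\to 0$), and — crucially — this same choice is what powers your ``heart of the matter'': vectors in $\ran F_+$ correspond to functions supported in $\dR_+$, so the phase in the energy integral is $e^{-i\gl(x+t)}$ with $x\ge 0$, and $m$ integrations by parts yield $(x+t)^{-m}\le t^{-m}$, hence integrability. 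Two smaller points: the paper disposes of $V_{ph}$ exactly, via $V_{ph}P^{ac}(H_0)=0$ (the coupling acts only through $e^S_0,e^S_1\in\sh_S$, which is orthogonal to $\sH^{ac}(H_0)=(\sh^{el}_l\oplus\sh^{el}_r)\otimes\sh^{ph}$), which is cleaner than your finite-photon-sector boundedness argument; and your claim that wave packets escape $\chi_{\rm loc}$ faster than any power is false as stated when $\chi_{\rm loc}$ includes the dot, since $H_0$ restricted to $\sh_S\otimes\sh^{ph}$ is pure point and nothing escapes there — it holds in your setting only because the relevant vectors have identically vanishing dot component (as $\ran F_\pm\subset\sH^{ac}(H_0)$ and $\supp\gamma\cap\gs_{pp}(H_0)=\emptyset$), and this needs to be said rather than folded into the propagation estimate.
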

\begin{proof}
At first we have to construct the operators $F_\pm$. To this end,
let $\cF : L^2(\dR) \rightarrow L^2(\dR)$ be the usual Fourier
transform, i.e
\bed
{
(\cF f)(\mu) := \wh f(\mu) := \frac{1}{\sqrt{2\pi}}\int_\dR e^{-i\mu x}f(x) dx, \quad
f \in L^2(\dR,dx), \quad \mu \in \dR.
}
\eed
Further, let $\Pi_\pm$ be the orthogonal projection onto
$L^2(\dR_\pm)$ in $L^2(\dR)$. We set
\begin{equation*}
F_\pm = \Phi^\ast \cF \Pi_\pm \cF^\ast \Phi
\end{equation*}
where $\Phi$ is given by \eqref{3.6}. We immediately obtain $F_- + F_+ = P_{ac}(H_0)$. We still have to show that
\begin{equation*}
s-\lim_{t\rightarrow\infty} \norm{e^{\mp itH_0} \Phi^\ast \cF \Pi_\pm \cF^\ast \Phi e^{\pm itH_0}f} = 0
\end{equation*}
for $f \in \sH^{ac}(H_0)$. We prove the relation only for $F_+$ since the proof for $F_-$ is essentially identical.
We have
\bed
\big(\Pi_+ \cF^\ast \Phi e^{itH_0} f\big)(x)
= (2\pi)^{-\frac{1}{2}} \chi_{{\dR_+}}(x) \int_\dR \de\mu \, e^{i(x+t)\mu} \widehat{f}(\mu)
= \chi_{{\dR_+}}(x) \psi(x+t)
\eed
with $\psi = \cF \widehat{f}$. Now
\bead
\lefteqn{
\norm{e^{-itH_0} \Phi^\ast \cF \Pi_+ \cF^\ast \Phi e^{itH_0}f}^2 =
}\\
& &
\norm{\Pi_+ \cF^\ast \Phi e^{itH_0}f}^2 = {\int_{\dR_+} \de x
  \big\vert \psi(x+t) \big\vert^2
= \int_t^\infty \de x \big\vert \psi(x) \big\vert^2
\stackrel{t\rightarrow\infty}{\longrightarrow} 0.}
\eead
Concerning the compactness condition, we already know that
$(H-i)^{-1} - (H_0-i)^{-1} \in \sL_1(\sH) \subset \sL_\infty(\sH)$ from Proposition \ref{II.2A}. Let
\bed
\gL = \bigcup_{n \in \dN_0} \{v_l + n \omega, v_r + n \omega, v_l + 4 + n \omega, v_r + 4 + n \omega\},
\eed
which is closed and countable. We know from Corollary \ref{II.5}
that $H_0$ has no singular continuous spectrum and the eigenvalues
are of finite multiplicity. It follows that
$(1-P_{ac}(H_0))\gamma(H_0)$ is compact for every
$\gamma \in C_0^\infty(\dR \setminus \gL)$. The remaining assumption of Lemma \ref{II.8} is
\bed
\Big\vert \int_0^{\pm\infty} \de t \, \big\lVert \big((H-i)^{-1} -
(H_0-i)^{-1} \big) \gamma(H_0) e^{-i tH_0} F_\pm \big\rVert \Big\vert
< \infty.
\eed
If we can prove this, then we immediately obtain that $H$ has no
singular continuous spectrum. Now
$(H-i)^{-1} - (H_0-i)^{-1} = (H-i)^{-1}(V_{el} + V_{ph})(H_0-i)^{-1}$. But $(H-i)^{-1}$ is bounded,
\bed
\ran(F_\pm) \subset \sH^{ac}(H_0) = (\sh_l^{el} \oplus \sh_r^{el}) \otimes \sh^{ph},
\eed
and $V_{ph} P^{ac}(H_0) = 0$. Also, $V_{el} = v_{el} \otimes I_{\sh^{ph}}$ and
\begin{equation*}
\ker(v_{el})^\bot \subset \dC \delta_1^l \oplus \sh_{S} \oplus \dC\delta_1^r.
\end{equation*}
Hence, it suffices to prove
\bed
\Big\vert \int_0^{\pm\infty} \de t \, \big\lVert P_1^\ga (H_0-i)^{-1} \gamma(H_0) e^{-i tH_0}
F_\pm\big\rVert \Big\vert < \infty,
\eed
$\ga \in \{l,r\}$, where $P_1^\ga = p_1^\ga \otimes I_{\sh^{ph}}$ and
$p_1^\ga$ is the orthogonal projection onto $\sh^{el}_\ga$. In
the following we treat only the case $F_+$. The calculations
for $F_-$ are completely analogous. We use that $\Phi$ maps
$H_0^{ac}$ into the multiplication operator $\cM$ induced
by $\gl$. Hence we get
\bead
\lefteqn{
\big\lVert P_1^\ga \wt\gamma(H_0) e^{-i tH_0} \Phi^\ast \cF f \big\rVert =
\big\lVert P_1^\ga \Phi^\ast \Phi \wt \gga(H_0) e^{-i tH_0} \Phi^\ast \cF f \big\rVert =}\\
& &
= (2\pi)^{-\frac{1}{2}}\Big(\sum_{n \in \dN_0} \, \Big\vert
\int_{\gd_{\ga,n}} \de \lambda \; g_\ga(1,\lambda - n\omega)\wt\gga(\gl)
\int_{\dR_+} \de x \, e^{-i\lambda(x+t)}} f(x)\Big\vert^2\Big)^{\frac{1}{2}
\eead
where $\supp(f) \subseteq \dR_+$, $\wt \gga(\gl) := (\gl-i)^{-1}\gga(\gl)$, $\gl \in \dR$, and
$\gd_{\ga,n} := [v_\ga+ n\go_0,v_\ga+n\go	+ 4]$. Notice that $\wt\gga(\gl) \in C^\infty_0(\dR \setminus \gL)$.
We find
\bead
\lefteqn{
\int_{\gd_{j,n}} \de \lambda \; g_\ga(1,\lambda - n\omega)\wt\gga(\gl)
\int_{\dR_+} \de x \, e^{-i \lambda(x + t)} f(x) =}\\
& &
\int^{v_\ga+4}_{v_\ga} \de \lambda \; g_\ga(1,\lambda)\wt\gga(\gl + n\go)
\int_{\dR_+} \de x \, e^{-i (\lambda + n\go)(x+t)} f(x)
\eead
which yields
\bead
\lefteqn{
\big\lVert P_1^\ga \Phi^\ast \Phi \wt \gga(H_0) e^{-i tH_0} \Phi^\ast \cF f \big\rVert =}\\
& &
= (2\pi)^{-\frac{1}{2}} \Big(\sum_{n \in \dN_0} \, \Big\vert
\int^{v_\ga+4}_{v_\ga} \de \lambda \; g_\ga(1,\lambda)\wt\gga(\gl +
n\go_0)\times\\
& &
\int_{\dR_+} \de x \, e^{-i (\lambda + n\go_0)(x+ t)} f(x)\Big\vert^2\Big)^{\frac{1}{2}}.
\eead
Since the support of $\gga(\gl)$ is compact we get that the sum $\sum_{n \in \dN_0}$ is finite.
Changing the integrals we get
\bead
\lefteqn{
\int_{\gd_{\ga,n}} \de \lambda \; g_\ga(1,\lambda - n\omega)\wt\gga(\gl)
\int_{\dR_+} \de x \, e^{-i\lambda(x + t)} f(x) =}\\
& &
\int_{\dR_+} \de x \, f(x)e^{-in\go_0(x+t)}
\int^{v_\ga+4}_{v_\ga} \de \lambda \; g_\ga(1,\lambda)\wt\gga(\gl + n\go)
 e^{-i\lambda(x+ t)}
\eead
Integrating by parts $m$-times we obtain
\bead
\lefteqn{
\int_{\gd_{\ga,n}} \de \lambda \; g_\ga(1,\lambda - n\omega)\wt\gga(\gl)
\int_{\dR_+} \de x \, e^{-i\lambda(x + t)} f(x) =}\\
& &
(-i)^m\int_{\dR_+} \de x \, f(x)\frac{e^{-in\go(x+t)}}{(x + t)^m}
\int^{v_\ga+4}_{v_\ga} \de \lambda \; e^{-i\lambda(x + t)}\frac{d^m}{d\gl^m}\left(g_\ga(1,\lambda)\wt
\gga(\gl + n\go)\right)
\eead
Hence
\bed
\begin{split}
\big\lvert \int_{\gd_{\ga,n}} \de \lambda \; g_\ga(1,\lambda - n\omega)\wt\gga(\gl)
\int_{\dR_+} &\de x \, e^{-i\lambda(x + t)} f(x)\big\rvert^2 \\
& \le
C^2_n\left(\int_{\dR_+} \de x \, |f(x)|\frac{1}{(x + t)^m}\right)^2
\end{split}
\eed
which yields
\bed
\begin{split}
\big\lvert \int_{\gd_{\ga,n}} \de \lambda \; g_\ga(1,\lambda - n\omega)\wt\gga(\gl)
\int_{\dR_+} &\de x \, e^{-i\lambda(x + t)} f(x)\big\rvert^2 \\
&\le C^2_n\frac{1}{t^{(2m-1)}}\|f\|^2
\end{split}
\eed
for $m \in \dN$ where
\bed
C_n := \int^{v_\ga+4}_{v_\ga} \de \lambda \; \Big\vert\frac{d^m}{d\gl^m}\left(g_\ga(1,\lambda)\wt\gga(\gl + n\go)
\Big\vert\right).
\eed
Notice that $C_n = 0$ for sufficiently large $n \in \dN$. Therefore
\bed
\big\lVert P_1^\ga \wt\gamma(H_0) e^{-i tH_0} \Phi^\ast \cF f \big\rVert \le
\left(\sum_{n\in\dN_0}C^2_n\right)^{1/2}\frac{1}{t^{m-1/2}}\|f\|,
\quad f \in L^2(\dR_+,dx),
\eed
which shows that $\big\lVert P_1^\ga \wt\gamma(H_0) e^{-i tH_0} F_+\big\rVert \in L^1(\dR_+,dt)$ for $m \ge 2$.
\end{proof}
\section{Landauer-B\"uttiker formula and applications}\la{SecIII}

\subsection{Landauer-B\"uttiker formula}\la{SecIII.1}

The abstract Landauer-B\"uttiker formula can be used to calculate
flows through devices. Usually one considers  a pair
$\pS = \{K,K_0\}$ be of self-adjoint operators where the
unperturbed Hamiltonian $K_0$ describes a totally decoupled system,
that means, the inner system is closed and
the leads are decoupled from it, while the perturbed
Hamiltonian $K$ describes the system where the leads are coupled to the inner
system. An important ingredient is system $\pS = \{K,K_0\}$ is
represents a complete scattering or even a trace class scattering
system.

In \cite{Pillet2007} an abstract Landauer-B\"uttiker formula was
derived in the framework of a trace class scattering theory for
semi-bounded self-adjoint operators which allows to reproduce the
results of \cite{Landauer1957} and \cite{Buettiker1985} rigorously.
In \cite{CNWZ2012} the results of \cite{Pillet2007} were generalized
to non-semi-bounded operators.
Following \cite{Pillet2007} we consider a trace class scattering
system $\pS = \{K,K_0\}$. We recall that $\pS = \{K,K_0\}$ is called a trace class
scattering system if the resolvent difference of $K$ and $K_0$ belongs
to the trace class. If $\pS = \{K,K_0\}$ is a trace class scattering
system, then the wave operators $W_\pm(K,K_0)$ exists and are
complete. The scattering operator is defined by
$S(K,K_0) := W_+(K,K_0)^*W_-(K,K_0)$. The main
ingredients besides the trace class scattering system $\pS = \{K,K_0\}$ are the
density and the charge operators $\rho$ and $Q$, respectively.

The density operator
$\rho$ is a non-negative bounded self-adjoint operator commuting with
$K_0$. The charge $Q$ is a bounded self-adjoint operator commuting
also with $K_0$. If $K$ has no singular continuous spectrum, then the current related to the density
operator $\rho$ and the charge $Q$ is defined by
\be\la{1.0a}
J^\pS_{\rho,Q} = -i\,\tr\left(W_-(K,K_0)\rho W_-(K,K_0)^*[K,Q]\right)
\ee
where $[K,Q]$ is the commutator of $K$ and $Q$. In fact, the
commutator $[K,Q]$ might be not defined. In this case the regularized definition
\be\la{3.2}
J^\pS_{\rho,Q} = -i\,\tr\left(W_-(K,K_0)(I+ K^2_0)\rho W_-(K,K_0)^*\frac{1}{K-i}[K,Q]\frac{1}{K+i}\right)
\ee
is used where it is assumed that $(I + K^2_0)\rho$ is a bounded
operator. Since the condition $(H-i)^{-1}[H,Q](H+ i)^{-1} \in \sL_1(\sH)$
is satisfied the definition \eqref{3.2} makes sense. By $\sL_1(\sH)$ is
the ideal of trace class operators is denoted.

Let $K_0$ be self-adjoint operator on the separable Hilbert space
$\sK$. We call $\rho$ be a density operator for $K_0$ if $\rho$ is a
bounded non-negative self-adjoint operator
commuting with $K_0$. Since $\rho$ commutes with $K_0$ one gets that $\rho$ leave invariant the
subspace $\sK^{ac}(K_0)$. We set
\bed
\rho_{ac} := \rho\upharpoonright\sK^{ac}(K_0).
\eed
call $\rho_{ac}$ the $ac$-density part of $\rho$.

A bounded self-adjoint operator $Q$ commuting with
$K_0$ is called a charge.
If $Q$ is a charge, then
\bed
Q_{ac} := Q\upharpoonright\sK^{ac}(K_0).
\eed
is called its $ac$-charge part.

Let $\Pi(K^{ac}_0) = \{L^2(\dR,d\gl,\sk(\gl)),\cM,\Phi\}$
be a spectral representation of $K^{ac}_0$. If $\rho$ is a density
operator, then there is a measurable family
$\{\rho_{ac}(\gl)\}_{\gl\in\dR}$ of bounded self-adjoint operators
such that the multiplication operator
\bed
(\cM_{\rho_{ac}}\wh f)(\gl) := \rho_{ac}(\gl)\wh f(\gl), \quad \wh f \in \dom(M_{\rho_{ac}}) := L^2(\dR,d\gl,\sk(\gl)),
\eed
is unitarily equivalent to $ac$-part $\rho_{ac}$, that is,
$\cM_{\rho_{ac}} = \Phi \rho_{ac} \Phi^*$. In particular this yields that
$\ess-sup_{\gl\in\dR}\|\rho_{ac}(\gl)\|_{\cB(\sk(\gl)} = \|\rho_{ac}\|_{\cB(\sK^{ac}(K_0))}$.
In the following we call $\{\rho_{ac}(\gl)\}_{\gl\in\dR}$ the density
matrix of $\rho_{ac}$.

Similarly, one gets that if $Q$, then
there is a measurable family $\{Q_{ac}(\gl)\}_{\gl\in\dR}$  of bounded self-adjoint operators
such that the multiplication operator
\bead
(\cM_{Q_{ac}}\wh f)(\gl) & := & Q_{ac}(\gl)\wh f(\gl),\\
\wh f\in \dom(Q_{ac}) & := & \{ f \in L^2(\dR,d\gl,\sk(\gl)): Q_{ac}(\gl)\wh f(\gl) \in L^2(\dR,d\gl,\sk(\gl))\},
\eead
is unitarily equivalent to $Q_{ac}$, i.e. $\cM_{Q_{ac}}
= \Phi Q_{ac}\Phi^*$. In particular, one has
\be\la{3.1}
\ess-sup_{\gl \in\dR}\|Q_{ac}(\gl)\|_{\cB(\sk(\gl))} = \|Q_{ac}\|_{\cB(\sK^{ac}(K_0))}.
\ee
If $Q$ is a charge, then
the family $\{Q_{ac}(\gl)\}_{\gl\in\dR}$ is called the charge matrix
of the $ac$-part of $Q$.

Let $\pS = \{K,K_0\}$ be a trace scattering system.
By $\{S(\gl)\}_{\gl\in\dR}$ we denote the scattering matrix which
corresponds to the scattering operator $S(K,K_0)$ with respect to the
spectral representation $\Pi(K^{ac}_0)$.
The operator $T := S(K,K_0) - P^{ac}(K_0)$ is called the transmission
operator. By $\{T(\gl)\}_{\gl \in \dR}$ we denote the transmission
which is related to the transmission operator. Scattering and
transmission matrix are related by $S(\gl) = T_{\sk(\gl)} + T(\gl)$
for a.e. $\gl \in \dR$. Notice
that $T(\gl)$ belongs for to the trace class  a.e. $\gl \in \dR$.
\bt[{\cite[Corollary 2.14]{CNWZ2012}}]\la{rrII.1}
Let $\pS := \{K,K_0\}$ be a trace class scattering system and let
$\{S(\gl)\}_{\gl \in \dR}$ be the scattering matrix of $\pS$ with respect to the spectral
representation $\Pi(K^{ac}_0)$. Further let $\rho$ and $Q$ be density
and charge operators and let
$\{\rho_{ac}(\gl)\}_{\gl\in\dR}$ and $\{Q_{ac}(\gl)\}_{\gl\in\dR}$
be the density and charge matrices of the $ac$-parts
$\rho_{ac}$ and charge $Q_{ac}$ with respect to $\Pi(K^{ac}_0)$, respectively.
If $(I + K^2_0)\rho$ is bounded, then the current $J^\pS_{\rho,Q}$
defined by \eqref{3.2} admits the representation
\be\la{3.7}
J^\pS_{\rho,Q} = \frac{1}{2\pi} \int_\dR
\tr\big(\rho_{ac}(\lambda)(Q_{ac}(\lambda) -
S^\ast(\lambda)Q_{ac}(\lambda)S(\lambda))\big) \de\lambda
\ee
where the integrand on the right hand side and  the current
$J^\pS_{\rho,Q}$ satisfy the estimate
\bea\la{3.7c}
\lefteqn{
\left |\tr\left(\rho_{ac}(\lambda)(Q_{ac}(\lambda)
-S^\ast(\lambda)Q_{ac}(\lambda)S(\lambda))\right)\right| \le}\\
& &
4\|\rho(\gl)\|_{\sL(\sk(\gl))}\|T(\gl)\|_{\sL_1(\sk(\gl))}\|Q(\gl)\|_{\sL(\sk(\gl))}
\nonumber
\eea
for a.e. $\gl \in \dR$ and
\be\la{3.7b}
|J^\pS_{\rho,Q}| \le C_0\|(H+ i)^{-1} - (H_0 + i)^{-1}\|_{\sL_1(\sK)}
\ee
where $C_0 := \frac{2}{\pi}\|(1 +H^2_0)\rho\|_{\sL(\sK)}$.
\et

In applications not every charge $Q$ is a bounded operator. We say the
self-adjoint operator $Q$ commuting with $K_0$ is a $p$-tempered charge if $Q(H_0-i)^{-p}$
is a bounded operator for $p \in \dN_0$. As above we can introduce
$Q_{ac} := Q \upharpoonright\dom(Q) \cap \sK^{ac}(K_0)$. It turns out
that $QE_{K_0}(\gD)$ is a bounded  operator for any bounded Borel set $\gD$.
This yields that the corresponding charge matrix
$\{Q_{ac}(\gl)\}_{\gl \in \dR}$ is a measurable family of bounded
self-adjoint operators such that
\bed
\ess-sup_{\gl \in \dR}(1 +\gl^2)^{p/2}\|Q_{ac}(\gl)\|_{\sL(\sk(\gl))} < \infty.
\eed
To generalize the current $J^\pS_{\rho,Q}$ to tempered charges $Q$ one
uses the fact that $Q(\gD) := QE_{K_0}(\gD)$ is a charge for any
bounded Borel set $\gD$. Hence the current $J^\pS_{\rho,Q(\gD)}$ is
well-defined by \eqref{3.2} for any bounded Borel set $\gD$. Using
Theorem \ref{rrII.1} one gets that for $p$-tempered charges the
limit
\be\la{3.4xx}
J^\pS_{\rho,Q} := \lim_{\gD \to \dR}J^\pS_{\rho,Q(\gD)}
\ee
exists provided $(H_0 - i)^{p+2}\rho$ is a bounded operator. This
gives rise for the following corollary.
\bc\la{III.2}
Let the assumptions of the Theorem \ref{rrII.1} be satisfied. If
for some $p \in \dN_0$ the operator
$(H_0 - i)^{p+2}\rho$ is bounded and $Q$ is a
$p$-tempered charge for $K_0$, then the current defined by
\eqref{3.4xx} admits the representation \eqref{3.7} where the right
hand side of \eqref{3.7} satisfies the estimate
\eqref{3.7c}. Moreover, the current $J^{\pS}_{\rho,Q}$ can be
estimated by
\be
|J^\pS_{\rho,Q}| \le C_p \|(H+ i)^{-1} - (H_0 +i)^{-1}\|_{\sL_1(\sK)}
\ee
where $C_p := \frac{2}{\pi}\|(1+H^2_0)^{p+2/2}\rho\|_{\sL(\sK)}\|Q(I+
H^2_0)^{-p/2}\|_{\sL(\sK)}$.
\ec

At first glance the formula \eqref{3.7} is not very similar to the original
Landauer-B\"uttiker formula of \cite{Buettiker1985,Landauer1957}.
To make the formula more convenient  we recall that
a standard application example for the Landauer-B\"uttiker formula is the
so-called black-box model, cf. \cite{Pillet2007}. In this case the
Hilbert space $\sK$ is given by
\be\la{1.1a}
\sK = \sK_S \oplus \bigoplus^N_{j=1} \sK_j, \quad 2 \le N < \infty.
\ee
and $K_0$ by
\be\la{1.1b}
K_0 = K_S \oplus \bigoplus^N_{j=1} K_j, \quad 2 \le N < \infty.
\ee
The Hilbert space $\sK_S$ is called the sample or dot and $K_S$ is the
sample or dot Hamiltonian. The Hilbert spaces $\sK_j$ are called reservoirs or leads
and $K_j$ are the reservoir or lead Hamiltonians. For simplicity we assume that
the reservoir Hamiltonians $K_j$ are absolutely continuous and the
sample Hamiltonian $K_S$ has point spectrum.
A typical choice for the density operator is
\be\la{1.2}
\rho = f_S(K_S) \oplus \bigoplus^N_{j=1} f_j(K_j),
\ee
where $f_S(\cdot)$ and $f_j(\cdot)$ are non-negative bounded Borel
functions, and for the charge
\be\la{1.3}
Q = g_S(H_s) \oplus \bigoplus^N_{j=1} g_j(H_j),
\ee
where $g_S(\cdot)$ and $g_j(\cdot)$ a bounded Borel
functions. Making this
choice the Landauer-B\"uttiker formula \eqref{3.7} takes the form
\be\la{1.3a}
J^{\pS}_{\rho,Q} = \frac{1}{2\pi} \sum^N_{j,k=1} \int_\dR (f_j(\gl) -
f_k(\gl))g_j(\gl)\gs_{jk}(\gl)d\gl
\ee
where
\be\la{1.6}
\gs_{jk}(\gl) := \tr(T_{jk}(\gl)^*T_{jk}(\gl)), \quad j,k =1,\ldots,N,
\quad \gl \in \dR,
\ee
are called the {\it total transmission probability} from reservoir $k$
to reservoir $j$, cf. \cite{Pillet2007}. We call it the
{\it cross-section} of the scattering process going from channel $k$
to channel $j$ at energy $\gl \in \dR$. $\{T_{jk}(\gl)\}_{\gl \in\dR}$ is
called the transmission matrix from channel $k$ to channel $j$ at energy $\gl \in
\dR$ with respect to the spectral representation $\Pi(K^{ac}_0)$.
We note  that $\{T_{jk}(\gl)\}_{\gl \in \dR}$ corresponds to
the transmission operator
\be\la{1.5}
T_{jk} :=  P_jT(K,K_0)P_k, \quad T(K,K_0) := S(K,K_0) - P^{ac}(K_0),
\ee
acting from the reservoir $k$ to reservoir $j$ where $T(K,K_0)$ is
called the transmission operator. Let $\{T(\gl)\}_{\gl\in\dR}$ be the
transmission matrix.  Following \cite{Pillet2007} the current
$J^\pS_{\rho,Q}$ given by \eqref{1.3a} is directed from the reservoirs into
the sample.

The quantity $\|T(\gl)\|_{\sL_2} = \tr(T(\gl)^*T(\gl))$ is
well-defined and is called the cross-section of the scattering system
$\pS$ at energy $\gl \in \dR$. Notice that
\bed
\gs(\gl) = \|T(\gl)\|_{\sL_2} = \tr(T(\gl)^*T(\gl)) = \sum^N_{j,k=1}\gs_{jk}(\gl).
\quad \gl \in \dR,
\eed
We point out that the channel cross-sections $\gs_{jk}(\gl)$
admit the property
\be\la{3.9}
\sum^N_{j=1}\gs_{jk}(\gl) = \sum^N_{j=1}\gs_{kj}(\gl), \quad \gl \in \dR,
\ee
which is a consequence of the unitarity of the scattering matrix.
Moreover, if there is a conjugation $J$ such that $KJ = JK$ and $K_0J = JK_0$
holds, that is, if the scattering system $\pS$ is time reversible
symmetric, then we have even more, namely, it holds
\be\la{3.10}
\gs_{jk}(\gl) = \gs_{kj}(\gl), \quad \gl \in \dR.
\ee

Usually the Landauer-B\"uttiker formula
\eqref{1.3a} is used to calculated the electron current entering the
reservoir $j$ from the sample. In this case one has to choose $Q := Q^{el}_j :=
-\se P_j$ where $P_j$ is the orthogonal projection form $\sK$ onto
$\sK_j$ and $\se > 0$ is the magnitude of the elementary charge. This
is equivalent to choose $g_j(\gl) = -\se$ and
$g_k(\gl) = 0$ for $k \not= j$, $\gl \in \dR$. Doing so we get the
Landauer-B\"uttiker formula simplifies to
\be\la{1.7}
J^{\pS}_{\rho,Q^{el}_j} = -\frac{\se}{2\pi} \sum^N_{k=1} \int_\dR (f_j(\gl) -
f_k(\gl))\gs_{jk}(\gl)d\gl.
\ee
To restore the original Landauer-B\"uttiker formula one sets
\be\la{1.9}
f_j(\gl) = f(\gl- \mu_j), \quad \gl \in \dR,
\ee
where $\mu_j$ is the chemical potential of the reservoir $\sK_j$ and
$f(\cdot)$ is a bounded non-negative Borel function called the distribution function. This
gives to the formula
\be\la{1.10}
J^{\pS}_{\rho,Q^{el}_j} = -\frac{\se}{2\pi} \sum^N_{k=1} \int_\dR (f(\gl - \mu_j) -
f(\gl-\mu_k))\gs_{jk}(\gl)d\gl.
\ee
In particular, if we choose one
\be\la{5.2}
f(\gl) := f_{FD}(\gl) := \frac{1}{1 + e^{\gb\gl}}, \quad \gb > 0,
\quad \gl \in \dR,
\ee
where $f_{FD}(\cdot)$ is the Fermi-Dirac distribution function, and
inserting \eqref{5.2} into \eqref{1.10} we obtain
\be\la{1.12}
J^{\pS}_{\rho,Q^{el}_j} = -\frac{\se}{2\pi} \sum^N_{k=1} \int_\dR (f_{FD}(\gl - \mu_j) -
f_{FD}(\gl-\mu_k))\gs_{jk}(\gl)d\gl.
\ee
If we have only two reservoirs, then they are usually denoted by $l$ (left)
and $r$ (right). Let $j = l$ and $k = r$.  Then
\be\la{1.13}
J^{\pS}_{\rho,Q^{el}_l} = -\frac{\se}{2\pi}\int_\dR (f_{FD}(\gl - \mu_l) -
f_{FD}(\gl-\mu_r))\gs_{lr}(\gl)d\gl.
\ee
One easily checks that $J^{\pS}_{\rho,Q_l} \le 0$ if $\mu_l \ge
\mu_r$. That means, the current is leaving the left reservoir and is
entering the right one which is accordance with physical intuition.
\bexam\la{III.3}
{\rm
Notice that $\ps_c := \{h^{el},h^{el}_0\}$ is a $\sL_1$
scattering system. The Hamiltonian $h^{el}$   takes into
account the effect of coupling of reservoirs or leads $\sh_l :=
l^2(\dN)$ and $\sh_r := l^2(\dN)$ to the sample $\sh_S = \dC^2$
which is also called the quantum dot.
The leads Hamiltonian are given by $h^{el}_\ga = -\gD^D + v_\ga$, $\ga
= l,r$. The sample or quantum dot Hamiltonian is given by $h^{el}_S$.
The wave operators are given by
\be\la{3.6x}
w_\pm(h^{el},h^{el}_0) :=
\slim_{t\to\infty}e^{ith^{el}}e^{-ith^{el}_0}P^{ac}(h^{el}_0)
\ee
The scattering operator is given by $s_c :=
w_+(h^{el},h^{el}_0)^*w_-(h^{el},h^{el}_0)$.
Let $\Pi(h^{el,ac}_0)$ the spectral representation of $h^{el,ac}_0$
introduced in Section \ref{SecII.3}. If $\rho^{el}$ and $q^{el}$ are
density and charge operators for $h^{el}_0$, then the
Landauer-B\"uttiker formula takes the form
\be\la{eq:3.18}
J^{\ps_c}_{\rho^{el},q^{el}} = \frac{1}{2\pi}\int_\dR
\tr\left(\rho^{el}_{ac}(\gl)\left(q^{el}_{ac} -  s_c(\gl)^*q^{el}_{ac}(\gl)s_c(\gl)\right)\right)
\ee
where $\{s_c(\gl)\}_{\gl\in\dR}$, $\{q^{el}(\gl)\}_{\gl\in\dR}$ and
$\{\rho^{el}(\gl)\}_{\gl\in\dR}$ are the scattering, charge and
density matrices with respect to $\Pi(h^{el,ac}_0)$, respectively.
The condition that $((h^{el}_0)^2 + I_{\sh^{el}})\rho^{el}$ is a
bounded operator is superfluous because $h^{el}_0$ is a bounded
operator. For the same reason we have that every $p$-tempered charge
$q^{el}$ is in fact a charge, that means, $q^{el}$ is a bounded
self-adjoint operator.

The scattering system $\ps_c$ is a black-box model with reservoirs
$\sh^{el}_l$ and $\sh^{el}_r$. Choosing
\bed
\rho^{el} = f_l(h^{el}) \oplus f_S(h^{el}_S) \oplus f_r(h^{el}_r)
\eed
where $f_\ga(\cdot)$, $\ga = l,r$, are bounded Borel functions, and
\bed
q^{el} = g_l(h^{el}_l) \oplus g_S(h^{el}_S) \oplus g_r(h^{el}_r),
\eed
where $g_\ga(\cdot)$, $\ga \in \{l,r\}$, are locally bounded Borel functions, then
from \eqref{1.3a} it follows that
\bed
J^{\ps_c}_{\rho^{el},q^{el}} = \frac{1}{2\pi}
\sum_{\substack{
\ga,\gk\in\{l,r\}\\
\ga\not=\gk
}}
\int_\dR (f_\ga(\gl)-f_\gk(\gl))g_\ga(\gl)\gs_c(\gl)d\gl
\eed
where $\{\gs_c(\gl)\}_{\gl \in \dR}$ is the channel cross-section from
left to right and vice versa. Indeed, let  $\{t_c(\gl)\}_{\gl \in\dR}$
the transition matrix which corresponds to the transition operator
$t_c := s_c - I_{\sh^{el}}$. Obviously, one has
$t_c(\gl) = I_{\sh(\gl)} - s_c(\gl)$, $\gl \in \dR$.
Let $\{p^{el}_\ga(\gl)\}_{\gl\in\dR}$ be the matrix which corresponds
to the orthogonal projection $p^{el}_\ga$ from $\sh^{el}$ onto
$\sh^{el}_\ga$. Further, let $t^c_{rl}(\gl) := p^{el}_r(\gl)t_c(\gl)p^{el}_l$ and
$t^c_{lr} := p^{el}_l(\gl)t_c(\gl)p^{el}_r$. Notice that both
quantities are in fact scalar functions.
Obviously, the channel cross-sections $\gs^c_{lr}(\gl)$ and
$\gs^c_{rl}(\gl)$ at energy $\gl \in \dR$ are given by
$\gs_c(\gl) := \gs^c_{lr}(\gl)  = |t^c_{lr}(\gl)|^2 = |t^c_{rl}(\gl)|^2 = \gs^c_{rl}(\gl)$, $\gl \in
\dR$.

In particular,  if $g_l(\gl) = 1$ and $g_r = 0$, then
\be\la{3.6a}
J^{\ps_c}_{\rho^{el},q^{el}_l} = \frac{1}{2\pi} \int_\dR (f_l(\gl) - f_r(\gl))\gs_c(\gl)d\gl,
\ee
and $q^{el}_l := p^{el}_l$. Following \cite{Pillet2007}
$J^{\ps_c}_{\rho^{el},q^{el}_l}$ denotes the current entering
the quantum dot from the left lead.
}
\eexam

\subsection{Application to the $JCL$-model}\la{SecIII.2}

Let $\pS = \{H,H_0\}$ be now the $JCL$-model.
Further, let $\rho$ and $Q$ be a density operator and a charge
for $H_0$, respectively.  Under these assumptions the current
$J^\pS_{\rho,Q}$ is defined by
\be\la{3.7a}
J^\pS_{\rho,Q} := -i\tr\left(W_-(H,H_0)(I + H^2_0)\rho W_-(H,H_0)^*\frac{1}{H-i}[H,Q]\frac{1}{H+i}\right),
\ee
and admits representation \eqref{3.7}. If $Q$ is a $p$-tempered charge
and $(H_0 - i)^{p+2}\rho$ is a bounded operator, then
the current $J^\pS_{\rho,Q}$ is defined in accordance with \eqref{3.4xx}
and the Landauer-B\"uttiker formula \eqref{3.7} is valid, too.

We introduce the intermediate
scattering system $\pS_c := \{H,H_c\}$ where
\bed
H_c := h^{el} \otimes I_{\sh^{ph}} + I_{\sh^{el}} \otimes h^{ph} = H_0 + V_{el}.
\eed
The Hamiltonian $H_c$  describes the coupling
of the leads to quantum dot but under the assumption that
the photon interaction is not switched on.

Obviously, $\pS_{ph} := \{H,H_c\}$ and $\pS_c := \{H_c,H_0\}$ are $\sL_1$-scattering systems. The corresponding
scattering operators are denote by $S_{ph}$ and $S_c$,
respectively. Let $\Pi(H^{ac}_c) =\{L^2(\dR,d\gl,\sh_c(\gl)),\cM,\Phi_c\}$ of $H^{ac}_c$
be a spectral representation of $H_c$.
The scattering matrix of the scattering system $\{H,H_c\}$ with respect to $\Pi(H^{ac}_c)$ is denoted by
$\{S_{ph}(\gl)\}_{\gl \in \dR}$.
The scattering matrix of the scattering system $\{H_c,H_0\}$ with
respect to
$\Pi(H^{ac}_0) = \{L^2(\dR,d\gl,\sh_0(\gl)),\cM,\Phi_0\}$ is denoted by
$\{S_c(\gl)\}_{\gl\in\dR}$.

Since $\pS_c$ is a $\sL_1$-scattering system the wave operators $W_\pm(H_c,H_0)$ exists and are complete and
since $\Phi_cW_\pm(H_c,H_0)\Phi^*_0$ commute with $\cM$, there is a
measurable families $\{W_\pm(\gl)\}_{\gl \in \dR}$ of isometries acting
from $\sh_0(\gl)$ onto $\sh_c(\gl)$ for a.e. $\gl \in \dR$ such that
\bed
(\Phi_cW_\pm(H_c,H_0)\Phi^*_0\wh f)(\gl) = W_\pm(\gl)\wh f(\gl), \quad
\gl \in \dR, \quad \wh f \in L^2(\dR,d\gl,\sh_0(\gl)).
\eed
The families $\{W_\pm(\gl)\}_{\gl\in\dR}$ are called wave matrices.

A straightforward computation shows that
$\wh{S_{ph}} := W_+(H_c,H_0)^*S_{ph}W_+(H_c,H_0)$ commutes with
$H_0$. Hence, with respect to the spectral representation
$\Pi(H^{ac}_0)$ the operator $\wh{S_{ph}}$ is unitarily equivalent
to a multiplication induced by a measurable family
$\{\wh{S_{ph}}(\gl)\}_{\gl\in\dR}$ of unitary operators in
$\sh_0(\gl)$. A straightforward computation shows that
\be\la{r3.1}
\wh{S_{ph}(\gl)} = W_+(\gl)^*S_{ph}(\gl)W_+(\gl)
\ee
for a.e. $\gl \in \dR$. Roughly speaking,
$\{\wh{S_{ph}}(\gl)\}_{\gl\in\dR}$
is the scattering matrix of $S_{ph}$ with respect to the spectral
representation $\Pi(H^{ac}_0)$.

Furthermore, let
\be\la{rr3.2}
\rho^c := W_-(H_c,H_0)\rho W_-(H_c,H_0)^*
\ee
and
\be\la{rr3.3}
Q^c := W_+(H_c,H_0)Q W_+(H_c,H_0)^*.
\ee
The operators $\rho^c$ and $Q^c$ are density and tempered charge operators for
the scattering system $\pS_{ph}$. Indeed, one easily verifies that
$\rho^c$ and $Q^c$ are commute with $H_c$. Moreover, $\rho^c$ is
non-negative. Furthermore, if $Q$ is a charge, then $Q^c$ is
a charge, too.
This gives rise to introduce the currents $J^c_{\rho,Q} :=  J^{\pS_c}_{\rho,Q}$,
\be\la{3.11x}
J^c_{\rho,Q} := -i\tr\left(W_-(H_c,H_0)\rho W_-(H_c,H_0)^*\frac{1}{H_c-i}[H_c,Q]\frac{1}{H_c+i}\right),
\ee
and $J^{ph}_{\rho,Q} :=  J^{\pS_{ph}}_{\rho^c,Q^c}$
\be\la{3.12x}
J^{ph}_{\rho,Q} := -i\tr\left(W_-(H,H_c)\rho^cW_-(H,H_c)^*\frac{1}{H-i}[H,Q^c]\frac{1}{H+i}\right)
\ee
which are well defined. If $Q$ is $p$-tempered charge and $(H_0 - i)^{p+2}\rho$ is a bounded
operator, then one easily checks that $Q^c$ is a $p$-tempered charge and $(H_c -
i)^{p+2}\rho^c$ is a bounded operator. Hence the definition of the
currents $J^{\pS_c}_{\rho^c,Q^c}$ can be extended to this case and the
Landauer-B\"uttiker formula \eqref{3.7} holds.

Finally we note that the corresponding matrices $\{\rho^c_{ac}(\gl)\}_{\gl\in\dR}$
and $\{Q^c_{ac}(\gl)\}_{\gl\in\dR}$ are related to the matrices
$\{\rho_{ac}(\gl)\}_{\gl\in\dR}$ and $\{Q_{ac}(\gl)\}_{\gl\in\dR}$ by
\be\la{rr3.4}
\rho^c_{ac}(\gl) = W_-(\gl)\rho_{ac}(\gl)W_-(\gl)^*
\quad \mbox{and} \quad
Q^c_{ac}(\gl) = W_+(\gl)Q_{ac}(\gl)W_+(\gl)^*
\ee
for a.e. $\gl \in \dR$.
\begin{proposition}[Current decomposition]\la{dIII.1}
Let $\pS = \{H,H_0\}$ be the $JCL$-model.
Further, let $\rho$ and $Q$ be a density operator and a $p$-tempered
charge, $p \in \dN_0$, for $H_0$, respectively.
If $(H_0 - i)^{p+2}\rho$ is a bounded operator, then the decomposition
\be\la{cc3.2}
J^\pS_{\rho,Q} = J^c_{\rho,Q} + J^{ph}_{\rho,Q}
\ee
holds where $J^c_{\rho,Q}$ and $J^{ph}_{\rho,Q}$ are given
by \eqref{3.11x} and \eqref{3.12x}.

In particular, let  $\{S_c(\gl)\}_{\gl\in\dR}$,
$\{\rho_{ac}(\gl)\}_{\gl\in\dR}$ and
$\{Q_{ac}(\gl)\}_{\gl\in\dR}$ be scattering, density and charge
matrices of $S_c$, $\rho$ and $Q$ with respect to $\Pi(H^{ac}_0)$ and let
$\{S_{ph}(\gl)\}_{\gl\in\dR}$, $\{\rho^c_{ac}(\gl)\}_{\gl\in\dR}$ and
$\{Q^c_{ac}(\gl)\}_{\gl\in\dR}$ be the scattering, density and charge
matrices of the scattering operator $S_{ph}$, density operator
$\rho^c$, cf. \eqref{rr3.2}, and charge operator $Q^c$,
cf. \eqref{rr3.3}, with respect to the spectral representation
$\Pi(H^{ac}_c\}$. Then the representations
\bea
J^c_{\rho,Q} & := &\frac{1}{2\pi}\int_\dR\tr(\rho_{ac}(\gl)(Q_{ac}(\gl) - S_c(\gl)^*Q_{ac}(\gl)S_c(\gl))d\gl,\la{cc3.3}\\
J^{ph}_{\rho,Q} & := & \frac{1}{2\pi}\int_\dR \tr(\rho^c_{ac}(\gl)(Q^c_{ac}(\gl) - S_{ph}(\gl)^*Q^c_{ac}
(\gl)S_{ph}(\gl)))d\gl,\la{cc3.4}
\eea
take place.
\end{proposition}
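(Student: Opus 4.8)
The plan is to derive a Landauer--B\"uttiker integral representation for each of the three currents $J^\pS_{\rho,Q}$, $J^c_{\rho,Q}$ and $J^{ph}_{\rho,Q}$ separately, and then to show that the two partial integrands add up pointwise to the full one, by factorizing the total scattering matrix and invoking cyclicity of the trace. First I would record that all hypotheses needed to apply Corollary \ref{III.2} are in place for each of the systems $\pS$, $\pS_c=\{H_c,H_0\}$ and $\pS_{ph}=\{H,H_c\}$: all three are $\sL_1$-scattering systems, $H$ (and $H_c$) have no singular continuous spectrum in the relevant pairing, $Q$ is a $p$-tempered charge with $(H_0-i)^{p+2}\rho$ bounded, and—as already noted before the statement—$Q^c$ is then a $p$-tempered charge for $H_c$ with $(H_c-i)^{p+2}\rho^c$ bounded. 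Corollary \ref{III.2} therefore yields at once the representations \eqref{cc3.3} for $J^c_{\rho,Q}$ and \eqref{cc3.4} for $J^{ph}_{\rho,Q}$, together with
\[
J^\pS_{\rho,Q} = \frac{1}{2\pi}\int_\dR \tr\big(\rho_{ac}(\gl)(Q_{ac}(\gl) - S(\gl)^*Q_{ac}(\gl)S(\gl))\big)\,d\gl,
\]
where $\{S(\gl)\}_{\gl\in\dR}$ is the scattering matrix of $\pS$ in $\Pi(H^{ac}_0)$. This already settles the ``in particular'' part; it remains to prove the identity \eqref{cc3.2}.

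The key algebraic input is a factorization of $\{S(\gl)\}$. Since all three scattering systems are complete, the wave operators obey the chain rule $W_\pm(H,H_0)=W_\pm(H,H_c)W_\pm(H_c,H_0)$, so that
\[
S(H,H_0) = W_+(H,H_0)^*W_-(H,H_0) = W_+(H_c,H_0)^*\,S_{ph}\,W_-(H_c,H_0).
\]
Passing to $\Pi(H^{ac}_0)$ and using that $\Phi_cW_\pm(H_c,H_0)\Phi_0^*$ acts as multiplication by the wave matrices $\{W_\pm(\gl)\}$, which are unitary from $\sh_0(\gl)$ onto $\sh_c(\gl)$ by completeness, while $\Phi_cS_{ph}\Phi_c^*$ is multiplication by $\{S_{ph}(\gl)\}$, I would obtain the pointwise identity
\[
S(\gl) = W_+(\gl)^*\,S_{ph}(\gl)\,W_-(\gl)
\]
for a.e. $\gl\in\dR$; note also that $S_c(\gl)=W_+(\gl)^*W_-(\gl)$.

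Finally I would carry out the pointwise trace computation. Writing $R:=\rho_{ac}(\gl)$, $A:=Q_{ac}(\gl)$ and suppressing $\gl$, and inserting $\rho^c_{ac}=W_-RW_-^*$ and $Q^c_{ac}=W_+AW_+^*$ from \eqref{rr3.4}, the integrand of $J^{ph}_{\rho,Q}$ becomes, after cyclicity,
\[
\tr(RW_-^*W_+AW_+^*W_-) - \tr(RW_-^*S_{ph}^*W_+AW_+^*S_{ph}W_-),
\]
while the integrand of $J^c_{\rho,Q}$ equals $\tr(RA) - \tr(RW_-^*W_+AW_+^*W_-)$ since $S_c=W_+^*W_-$. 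Adding the two, the common middle term cancels and, using $S=W_+^*S_{ph}W_-$, what survives is $\tr(RA)-\tr(RS^*AS)=\tr(R(A-S^*AS))$, precisely the integrand of $J^\pS_{\rho,Q}$. Integrating over $\gl$ gives \eqref{cc3.2}.

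The main obstacle is the bookkeeping across the two distinct spectral representations $\Pi(H^{ac}_0)$ and $\Pi(H^{ac}_c)$: one must verify that the wave matrices $W_\pm(\gl)$ are genuinely \emph{unitary} (not merely isometric) between $\sh_0(\gl)$ and $\sh_c(\gl)$—which is exactly where completeness of $W_\pm(H_c,H_0)$ enters—so that the cyclic rearrangements of the trace are legitimate, and that $\{S_{ph}(\gl)\}$, defined in $\Pi(H^{ac}_c)$, is correctly transported to an operator on $\sh_0(\gl)$ by the factorization. Once the identity $S(\gl)=W_+(\gl)^*S_{ph}(\gl)W_-(\gl)$ is secured a.e., the decomposition reduces to the short computation above.
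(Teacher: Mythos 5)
Your proposal is correct and takes essentially the same route as the paper's own proof: both obtain \eqref{cc3.3} and \eqref{cc3.4} from the abstract Landauer--B\"uttiker theorem, then use the chain rule $W_\pm(H,H_0)=W_\pm(H,H_c)W_\pm(H_c,H_0)$ to get the factorization $S(\gl)=W_+(\gl)^*S_{ph}(\gl)W_-(\gl)$ together with $S_c(\gl)=W_+(\gl)^*W_-(\gl)$, insert the relations \eqref{rr3.4}, and cancel the common term $\tr(\rho_{ac}(\gl)S_c(\gl)^*Q_{ac}(\gl)S_c(\gl))$ by cyclicity of the trace. The only difference is cosmetic ordering (the paper rewrites the $J^{ph}$ integrand first and derives the factorization afterwards), so there is nothing to add.
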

\begin{proof}
Since $\pS_c$ and $\pS_{ph}$ are $\sL_1$-scattering systems from
Theorem \ref{rrII.1} the representations \eqref{cc3.3} and
\eqref{cc3.4} are easily follow. Taking into account \eqref{rr3.4} we get
\bead
\lefteqn{
\tr(\rho^c_{ac}(\gl)(Q^c_{ac}(\gl) - S_{ph}(\gl)^*Q^c_{ac}(\gl)S_{ph}(\gl))) =}\\
& &
\tr(W_-(\gl)\rho_{ac}W_-(\gl)^*(W_+(\gl)Q_{ac}(\gl)W_+(\gl) - S_{ph}(\gl)^*Q^c_{ac}(\gl)S_{ph}(\gl))).
\eead
Using $S_c(\gl) = W_+(\gl)^*W_-(\gl)$ we find
\bea\la{d3.8}
\lefteqn{
\tr(\rho^c_{ac}(\gl)(Q^c_{ac}(\gl) - S_{ph}(\gl)^*Q^c_{ac}(\gl)S_{ph}(\gl))) =
\tr\left(\rho_{ac}(\gl)\times\right.
}\\
& &
\hspace{-3mm}\left.\left(S_c(\gl)^*Q_{ac}(\gl)S_c(\gl) -
W_-(\gl)^*S_{ph}(\gl)^*W_+(\gl)Q_{ac}(\gl)W_+(\gl)^*S_{ph}(\gl)W_-(\gl)\right)\right).
\nonumber
\eea
Since $\{H_c,H_0\}$ and $\{H,H_c\}$ are $\sL_1$-scattering systems the
existence of the wave operators
$W_\pm(H,H_c)$ and $W_\pm(H_c,H_0)$ follows. Using the chain rule we find
$W_\pm(H,H_0) = W_\pm(H,H_c)W_\pm(H_c,H_0)$ which yields
\bead
S & = & W_+(H,H_0)^*W_+(H,H_0)\\
  & = & W_+(H_c,H_0)^*W_+(H,H_c)W_-(H,H_c)W_-(H_c,H_0)\nonumber\\
  & = & W_+(H_c,H_0)^*S_{ph}W_-(H_c,H_0). \nonumber
\eead
Hence the scattering matrix $\{S(\gl)\}_{\gl\in\dR}$ of $\{H,H_0\}$ admits the
representation
\be\la{c3.4}
S(\gl) = W_+(\gl)^*S_{ph}(\gl)W_-(\gl), \quad \gl \in \dR.
\ee
Inserting \eqref{c3.4} into \eqref{d3.8} we get
\be\la{d3.10}
J^{ph}_{\rho,Q} =
\frac{1}{2\pi}\int_\dR \tr(\rho_{ac}(\gl)(S_c(\gl)^*Q_{ac}(\gl)S_c(\gl) - S(\gl)^*Q_{ac}(\gl)S(\gl)))d\gl
\ee
Using \eqref{d3.10} we obtain
\bed
J^c_{\rho,Q} + J^{ph}_{\rho,Q} =
\frac{1}{2\pi}\int_\dR \tr(\rho_{ac}(\gl)(Q_{ac}(\gl) -
S(\gl)^*Q_{ac}(\gl)S(\gl)))d\gl.
\eed
Finally, taking into account \eqref{3.7} we obtain \eqref{cc3.2}.
\end{proof}
{
\samepage
\begin{remark}\la{rem:III.1}
{\rm
\item[\;\;(i)]
The current $J^c_{\rho,Q}$ is due to the coupling of the leads to the quantum dot
and is therefore called the {\it contact induced current}.
\item[\;\; (ii)]
The current $J^{ph}_{\rho,Q}$ is due to the interaction of photons with electrons
and is therefore called the {\it photon induced current}. Notice the this current is calculated under the assumption
that the leads already contacted to the dot.
}
\end{remark}
}
\bc\la{cor:III.1}
Let the assumptions of Proposition \ref{dIII.1} be satisfied. With
respect to the spectral representation $\Pi(H^{ac}_0)$ of $H^{ac}_0$
the photon induced current $J^{ph}_{\rho,Q}$ can be represented by
\be\la{d3.11}
J^{ph}_{\rho,Q} :=  \frac{1}{2\pi}\int_\dR \tr(\wh{\rho_{ac}(\gl)}(Q_{ac}(\gl) -
\wh{S_{ph}(\gl)^*}Q_{ac}(\gl)\wh{S_{ph}}(\gl)))d\gl
\ee
where the measurable families $\{\wh{S_{ph}(\gl)}\}_{\gl\in\dR}$ and
$\{\wh{\rho_{ac}(\gl)}\}_{\gl\in\dR}$ are given by \eqref{r3.1} and
\be\la{d3.12}
\wh{\rho_{ac}(\gl)} := S_c(\gl)\rho_{ac}(\gl)S_c(\gl)^* \quad \gl \in \dR,
\ee
respectively.
\ec
\begin{proof}
Using \eqref{rr3.4} and $S_c(\gl) = W_+(\gl)^*W_-(\gl)$ we find
\bead
\lefteqn{
\tr(\rho^c_{ac}(\gl)(Q^c_{ac}(\gl) -
S_{ph}(\gl)^*Q^c_{ac}(\gl)S_{ph}(\gl))) =
\tr\left(S_c(\gl)\rho_{ac}(\gl)S_c(\gl)^*\right.\times}\\
& &
\left.\left(Q_{ac}(\gl) - W_+(\gl)^*S_{ph}(\gl)^*W_+(\gl)Q_{ac}(\gl)W_+(\gl)^*S_{ph}(\gl)W_+(\gl)\right)\right).
\eead
Taking into account the representations \eqref{r3.1} and \eqref{d3.12}
we get
\bead
\lefteqn{
\tr(\rho^c_{ac}(\gl)(Q^c_{ac}(\gl) -
S_{ph}(\gl)^*Q^c_{ac}(\gl)S_{ph}(\gl))) =}\\
& &
\tr(S_c(\gl)\rho_{ac}(\gl)S_c(\gl)^*(Q_{ac}(\gl) - \wh{S_{ph}(\gl)^*}Q_{ac}(\gl)\wh{S_{ph}(\gl)}))
\eead
which immediately yields \eqref{d3.11}.
\end{proof}
\begin{remark}
{\rm
In the following we call $\{\wh{\rho_{ac}(\gl)}\}_{\gl\in\dR}$,
cf. \eqref{d3.12}, the photon modified electron density matrix. Notice that
$\{\wh{\rho_{ac}(\gl)}\}_{\gl\in\dR}$ might be non-diagonal
even if the electron density matrix
$\{\rho_{ac}(\gl)\}_{\gl\in\dR}$ is diagonal.
}
\end{remark}
\section{Analysis of currents}\la{SecIV}

In the following we analyze currents $J^c_{\rho,Q}$ and $J^{ph}_{\rho,Q}$ under the assumption that
$\rho$ and $Q$ have the tensor product structure
\be\la{4.1}
\rho = \rho^{el} \otimes \rho^{ph} \quad \mbox{and} \quad Q = q^{el} \otimes q^{ph}
\ee
where $\rho^{el}$ and $\rho^{ph}$ as well as $q^{el}$ and $q^{ph}$
are density operators and (tempered) charges for $h^{el}_0$ and $h^{ph}$, respectively.
Since $\rho^{ph}$ commutes with $h^{ph}$, which is discrete, the operator $\rho^{ph}$has the form
\be\la{4.4}
\rho^{ph} = \rho^{ph}(n)(\cdot,\gY_n)\gY_n, \quad n \in \dN_0,
\ee
where $\rho^{ph}(n)$ are non-negative numbers. Similarly, $q^{ph}$ can be represented by
\be
q^{ph} = q^{ph}(n)(\cdot,\gY_n)\gY_n, \quad n \in \dN_0,
\ee
where $q^{ph}(n)$ are real numbers.
\bl\la{IV.1}
Let $\pS = \{H,H_0\}$ be the $JCL$-model.
Assume that $\rho \not= 0$ and $Q$ have the structure
\eqref{4.1} where $\rho^{el}$ is a density operator
and $q^{el}$ is a charge for $h^{el}_0$.

\item[\rm (i)]
The operator $(H_0 - i)^{p+2}\rho$, $p \in \dN_0$, is bounded if and
only if  the condition
\be\la{4.5}
\sup_{n\in\dN_0} \rho^{ph}(n) n^{p+2} < \infty
\ee
is satisfied.

\item[\rm (ii)]
The charge $Q$ is $p$-tempered if and only if
\be\la{4.6}
\sup_{n\in\dN}|q^{ph}(n)|n^{-p} < \infty.
\ee
is valid
\el
\begin{proof}
(i) The operator $(H_0 - i)^{p+2}\rho$ admits the representation
\bed
(H_0 - i)^{p+2}\rho = \bigoplus_{p\in\dN_0} \rho^{ph}(n)(h^{el}_0 + n\go - i)^{p+2}\rho^{el}.
\eed
We have
\bea\la{4.7}
\|(H_0 - i)^{p+2}\rho\|_{\sL(\sH)} & = & \sup_{p\in\dN_0}\rho^{ph}(n)\|(h^{el}_0 + n\go - i)^{p+2}
\rho^{el}\|_{\sL(\sh^{el})}\\
& = & \sup_{p\in\dN_0}\rho^{ph}(n)n^{p+2}n^{-(p+2)}\left\|(h^{el}_0 + n\go - i)^{p+2}\rho^{el}\right\|_{\sL(\sh^{el})}.
\nonumber
\eea
Since $\lim_{n\to\infty}n^{-(p+2)}\left\|(h^{el}_0 + n\go - i)^{p+2}\rho^{el}\right\|_{\sL(\sh^{el})} =
\go^{p+2}\|\rho^{el}\|_{\sL(\sh^{el})}$
we get for sufficiently large $n \in \dN_0$ that
\bed
\frac{\go^{p+2}}{2}\|\rho^{el}\|_{\sL(\sh^{el})} \le n^{-(p+2)}\|(h^{el}_0 + n\go -i)^{p+2}\rho^{el}\|_{\sL(\sh^{el})}\,.
\eed
Using that and \eqref{4.7} we immediately obtain \eqref{4.5}.
Conversely, from \eqref{4.7} and \eqref{4.5} we obtain that $(H_0 -
i)^{p+2}\rho$ is a bounded operator.

(ii) As above we have
\bed
Q(H_0 - i)^{-p} = \bigoplus_{n\in\dN_0} q^{ph}(n) q^{el}
\eed
Hence
\bed
\|Q(H_0 - i)^{-p}\|_{\sL(\sH)} = \sup_{n\in\dN_0}|q^{ph}(n)|\|q^{el}(h^{el}_0 +n\go - i)^{-p}\|_{\sL(\sh^{el})}.
\eed
Since $\lim_{n\to\infty}n^{p}\|(h^{el}_0 +n\go - i)^{-p}\|_{\sL(\sh^{el})} = \go^{-p}\|q^{el}\|_{\sL(\sh^{el})}$
we get similarly as above that \eqref{4.6} holds. The converse is obvious.
\end{proof}

\subsection{Contact induced current}

Let us recall that $\pS_c = \{H_c,H_0\}$ is
a $\sL_1$-scattering system. An obvious computations shows that
\bed
W_\pm(H_c,H_0) = w_\pm(h^{el},h^{el}_0)\otimes I_{\sh^{ph}}
\eed
where $w_\pm(h^{el},h^{el}_0)$ is given by \eqref{3.6x}. Hence
\bed
S_c = s_c \otimes I_{\sh^{ph}}, \quad \mbox{where} \quad s_c :=
w_+(h^{el}_c,h^{el}_0)^*w_-(h^{el}_c,h^{el}_0).
\eed
\begin{proposition}\la{dIII.2}
Let $\pS = \{H,H_0\}$ be the $JCL$-model. Assume
that $\rho$ and $Q$ are given by \eqref{4.1} where $\rho^{el}$ and
$q^{el}$ are density  and charge operators for $h^{el}_0$ and
$\rho^{ph}$ and $q^{ph}$ for $h^{ph}$, respectively.
If for  some $p\in \dN_0$ the conditions \eqref{4.5} and \eqref{4.6}
are satisfied, then the current $J^c_{\rho,Q}$ is well defined and
admits the representation
\be\la{eq:3.17}
J^c_{\rho,Q}  = \gga J^{\ps_c}_{\rho^{el},q^{el}}, \qquad \gga
:= \sum_{n\in\dN_0} q^{ph}(n)\rho^{ph}(n)
\ee
where $J^{\ps_c}_{\rho^{el},q^{el}}$ is defined by \eqref{3.2}. In particular, if
$\tr(\rho^{ph}) =1$ and $q^{ph} = I_{\sh^{ph}}$, then $J^c_{\rho,Q} = J^{\ps_c}_{\rho^{el},q^{el}}$.
\end{proposition}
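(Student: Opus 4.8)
The plan is to begin from the Landauer-B\"uttiker representation \eqref{cc3.3} of the contact-induced current and to exploit the complete factorisation of all objects entering it across the photon channels. First I would record well-definedness: by Lemma \ref{IV.1}, the hypotheses \eqref{4.5} and \eqref{4.6} ensure that $(H_0-i)^{p+2}\rho$ is bounded and that $Q$ is a $p$-tempered charge for $H_0$, so that $J^c_{\rho,Q}=J^{\pS_c}_{\rho,Q}$ is well defined via \eqref{3.4xx} and admits the representation \eqref{cc3.3} with respect to $\Pi(H^{ac}_0)$.

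The core of the argument is to compute the three matrices appearing in \eqref{cc3.3} in the spectral representation $\Pi(H^{ac}_0)$ of Lemma \ref{rII.4}, whose fibre is $\sh(\gl)=\bigoplus_{n\in\dN_0}\sh^{el}(\gl-n\go)$, cf. \eqref{r2.6}. Using the factorisation $S_c=s_c\otimes I_{\sh^{ph}}$ recorded just before the proposition, together with the tensor structure \eqref{4.1} and the diagonal form \eqref{4.4} of $\rho^{ph}$ and $q^{ph}$, I would show that on the fibre $\sh(\gl)$ all three families are block-diagonal in the photon index $n$:
\begin{equation*}
\rho_{ac}(\gl)=\bigoplus_{n}\rho^{ph}(n)\,\rho^{el}_{ac}(\gl-n\go),\quad
Q_{ac}(\gl)=\bigoplus_{n}q^{ph}(n)\,q^{el}_{ac}(\gl-n\go),\quad
S_c(\gl)=\bigoplus_{n}s_c(\gl-n\go),
\end{equation*}
where the electronic matrices are taken with respect to the spectral representation $\Pi(h^{el,ac}_0)$ of Section \ref{SecII.3}. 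The mechanism is that the $n$-th photon channel contributes the electronic data evaluated at the shifted energy $\gl-n\go$ --- a shift already built into the isometry $\Phi$ of \eqref{3.6} --- while the scalars $\rho^{ph}(n)$ and $q^{ph}(n)$ factor out of the corresponding block.

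Inserting these decompositions into the trace in \eqref{cc3.3} and using that the blocks do not mix, the integrand becomes $\sum_{n}\rho^{ph}(n)q^{ph}(n)$ times the electronic integrand $\tr(\rho^{el}_{ac}(\gl-n\go)(q^{el}_{ac}(\gl-n\go)-s_c(\gl-n\go)^*q^{el}_{ac}(\gl-n\go)s_c(\gl-n\go)))$. For each fixed $n$ I would then substitute $\mu=\gl-n\go$; since this electronic integrand vanishes outside $[\gl^{el}_{\rm min},\gl^{el}_{\rm max}]$, cf. \eqref{eq:2.40}, the shifted integral over $\gl\in\dR$ equals the integral over $\mu\in\dR$, which by the Landauer-B\"uttiker formula \eqref{eq:3.18} of Example \ref{III.3} equals $2\pi\,J^{\ps_c}_{\rho^{el},q^{el}}$. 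Summation over $n$ then gives $J^c_{\rho,Q}=\big(\sum_{n}q^{ph}(n)\rho^{ph}(n)\big)J^{\ps_c}_{\rho^{el},q^{el}}=\gga\,J^{\ps_c}_{\rho^{el},q^{el}}$, and the choice $\tr(\rho^{ph})=1$, $q^{ph}=I_{\sh^{ph}}$ makes $\gga=1$.

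The step I expect to be the main obstacle is the rigorous justification of the block-diagonal formulas above, in particular the claim that the scattering matrix $S_c(\gl)$, computed with respect to $\Pi(H^{ac}_0)$, acts on the $n$-th channel as $s_c(\gl-n\go)$; this requires carefully tracking how the tensor factor $I_{\sh^{ph}}$ interacts with the energy shift $n\go$ encoded in the fibration \eqref{r2.6} and in $\Phi$. Once this is established, the interchange of the summation over $n$ with the $\gl$-integration is harmless, since the tempering conditions \eqref{4.5}--\eqref{4.6} render $\sum_{n}\rho^{ph}(n)q^{ph}(n)$ absolutely convergent while each electronic integrand is uniformly bounded with support in the fixed bounded interval $[\gl^{el}_{\rm min},\gl^{el}_{\rm max}]$.
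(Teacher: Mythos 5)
Your proposal is correct and follows essentially the same route as the paper's proof: well-definedness via Lemma \ref{IV.1}, the block-diagonal representations $\rho_{ac}(\gl)=\bigoplus_n\rho^{ph}(n)\rho^{el}_{ac}(\gl-n\go)$, $Q_{ac}(\gl)=\bigoplus_n q^{ph}(n)q^{el}_{ac}(\gl-n\go)$, $S_c(\gl)=\bigoplus_n s_c(\gl-n\go)$ coming from $S_c=s_c\otimes I_{\sh^{ph}}$, the channel-wise factorisation of the trace, interchange of sum and integral justified by \eqref{4.5}--\eqref{4.6}, and identification of each shifted integral with $J^{\ps_c}_{\rho^{el},q^{el}}$ via \eqref{eq:3.18}. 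The only cosmetic difference is that you invoke compact support of the electronic integrand in $[\gl^{el}_{\rm min},\gl^{el}_{\rm max}]$ where the paper uses the integrability bound $\int_\dR\|\rho^{el}_{ac}(\gl)\|\,\|q^{el}_{ac}(\gl)\|\,d\gl<\infty$; both serve the same purpose.
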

\begin{proof}
First of all we note that by lemma \ref{IV.1} the operator $(H_0 -i)^{p+2}\rho$
is bounded and $Q$ is $p$-tempered. Hence the current
$J^{\pS_c}_{\rho,Q}$ is correctly defined and the Landauer-B\"uttiker
formula \eqref{3.7} is valid.

With respect to the spectral representation $\Pi(H^{ac}_0)$ of Lemma
\ref{rII.4} the charge matrix $\{Q_{ac}(\gl)\}_{\gl\in\dR}$ of
$Q_{ac} = q^{el}_{ac}\otimes q^{ph}$ admits the representation
\be\la{r3.4}
Q_{ac}(\gl) = \bigoplus_{n\in\dN_0} q^{el}_{ac}(\gl- n\go)q^{ph}(n), \quad \gl \in \dR.
\ee
Since $S_c = s_c \otimes I_{\sh^{ph}}$ the scattering matrix
$\{S_c(\gl)\}_{\gl \in \dR}$ admits the representation
\bed
S_c(\gl) = \bigoplus_{n\in\dN_0} s_c(\gl-n\go), \quad \gl \in \dR.
\eed
Hence
\bea\la{4.9x}
\lefteqn{
Q_{ac}(\gl) - S_c(\gl)^*Q_{ac}(\gl)S_c(\gl) = }\\
& &
\bigoplus_{n\in\dN_0} q^{ph}(n)\left(q^{el}_{ac}(\gl- n\go) -s_c(\gl - n\go)^*q^{el}_{ac}(\gl-\go n)
s_c(\gl - n\go)\right).
\nonumber
\eea
Moreover, the density matrix $\{\rho_{ac}(\gl)\}_{\gl\in\dR}$ admits
the representation
\be\la{r3.5}
\rho_{ac}(\gl) = \bigoplus_{n\in\dN_0} \rho^{ph}(n)\rho^{el}_{ac}(\gl-n\go)
\ee
Inserting \eqref{r3.5} into \eqref{4.9x} we find
\bead
\lefteqn{
\rho^{ac}(\gl)\left(Q_{ac}(\gl) - S_c(\gl)^*Q_{ac}(\gl)S_c(\gl)\right)
=
\bigoplus_{n\in\dN_0} q^{ph}(n)\rho^{ph}(n)\times}\\
& &
\rho^{el}_{ac}(\gl-n\go)\left(q^{el}_{ac}(\gl-\go n) -s_c(\gl -
  n\go)^*q^{el}_{ac}(\gl-\go n)s_c(\gl - n\go)\right)
\eead
Since $\gga = \sum_{n\in\dN_0}q^{ph}(n)\rho^{ph}(n)$ is absolutely
convergent by \eqref{4.5} and \eqref{4.6} we obtain that
\bea\la{4.11}
\lefteqn{
\tr\left(\rho^{ac}(\gl)\left(Q_{ac}(\gl) - S_c(\gl)^*Q_{ac}(\gl)S_c(\gl)\right)\right)
=
\sum_{n\in\dN_0} q^{ph}(n)\rho^{ph}(n)\times}\\
& &
\tr\left(\rho^{el}_{ac}(\gl-n\go)\left(q^{el}_{ac}(\gl-\go n) -s_c(\gl -
  n\go)^*q^{el}_{ac}(\gl-\go n)s_c(\gl - n\go)\right)\right)
\nonumber
\eea
Obviously, we have
\bed
\begin{split}
&\left|\tr\left(\rho^{el}_{ac}(\gl-n\go)\left(q^{el}_{ac}(\gl-\go n) -s_c(\gl - n\go)^*
q^{el}_{ac}(\gl-\go n)s_c(\gl - n\go)\right)\right)\right| \le\\
& \quad \quad 4\|\rho^{el}_{ac}(\gl-n\go)\|_{\sL(\sh_n(\gl))}\|q^{el}_{ac}(\gl-n\go)\|_{\sL(\sh_n(\gl))},
\quad \gl \in \dR.
\end{split}
\eed
We insert \eqref{4.11} into the Landauer-B\"uttiker formula \eqref{cc3.3}.
Using \eqref{4.5} and \eqref{4.6} as well as
\bed
\int_\dR
\|\rho^{el}_{ac}(\gl)\|_{\sL(\sh_n(\gl))}\|q^{el}_{ac}(\gl)\|_{\sL(\sh_n(\gl))}d\gl
< \infty
\eed
we see that we can interchange the integral and the sum. Doing
so we get
\bead
\lefteqn{
J^c_{\rho,Q} =
\sum_{n\in\dN_0}q^{ph}(n)\rho^{ph}(n)\frac{1}{2\pi}\int_\dR \tr\left(\rho^{el}_{ac}(\gl-n\go)\times\right.}\\
& &
\left.\left(q^{el}_{ac}(\gl-\go n) -s_c(\gl - n\go)^*q^{el}_{ac}(\gl-\go n)s_c(\gl - n\go)\right)\right)d\gl.
\eead
Using \eqref{eq:3.18} we prove \eqref{eq:3.17}.

If $\tr(\rho^{ph}) = 1$, then
$\sum_{\dN_0}\rho^{ph}(n) =1$. Further, if $\rho^{ph} = I_{\sh^{ph}}$, then
$q^{ph}(n) = 1$. Hence $\gga = 1$.
\end{proof}

\subsection{Photon induced current}

To calculate the current $J^{ph}_{\rho,Q}$ we used the representation
\eqref{d3.11}. We set
\bed
\wh{S^{ph}_{mn}}(\gl) :=
P_m(\gl)\wh{S_{ph}}(\gl)\upharpoonright\sh_n(\gl), \quad \gl \in
\dR.
\eed
where $\{\wh{S_{ph}}(\gl)\}_{\gl\in\dR}$ is defined by \eqref{r3.1}
and $P_m(\gl)$ is the orthogonal projection from $\sh(\gl)$,
cf. \eqref{r2.6}, onto $\sh_m(\gl) := \sh^{el}(\gl - m\go)$, $\gl \in \dR$.
\begin{proposition}\la{IV.3}
Let $\pS = \{H,H_0\}$ be the $JCL$-model.
Assume that $\rho$ and $Q$ are given by \eqref{4.1} where $\rho^{el}$ and
$q^{el}$ are density  and charge operators for $h^{el}_0$ and
$\rho^{ph}$ and $q^{ph}$ for $h^{ph}$, respectively. If for some
$p \in \dN_0$ the conditions \eqref{4.5} and \eqref{4.6} are
satisfied, then the current $J^{ph}_{\rho,Q}$ is well-defined and
admits the representation
\begin{align}\la{r3.7}
J^{ph}_{\rho,Q} =&\sum_{m\in\dN_0}\rho^{ph}(m)\sum_{n\in\dN_0}q^{ph}(n)
\frac{1}{2\pi}\int_\dR d\gl\;\tr\left(\wh{\rho^{el}_{ac}(\gl-m\go)}\times\right.\\
&\left.\left(q^{el}_{ac}(\gl - n\go)\gd_{mn} - \wh{S^{ph}_{nm}}(\gl)^*q^{el}_{ac}(\gl-n\go)\wh{S^{ph}_{nm}}
(\gl)\right)\right).
\nonumber
\end{align}
where $\{\wh{\rho^{el}_{ac}(\gl)}\}_{\gl\in\dR}$ is the photon
modified electron density defined, cf. \eqref{d3.12}, which takes the
form
\be\la{4.13}
\wh{\rho^{el}_{ac}(\gl)} = s_c(\gl)\rho^{el}(\gl)s_c(\gl)^*, \quad \gl \in \dR.
\ee
\end{proposition}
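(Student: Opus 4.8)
The plan is to feed the block structure of all objects relative to the photon-number decomposition $\sh(\gl) = \bigoplus_{n\in\dN_0}\sh_n(\gl)$ into the closed expression \eqref{d3.11} for $J^{ph}_{\rho,Q}$ furnished by Corollary \ref{cor:III.1}. Before doing so I would settle well-definedness: by Lemma \ref{IV.1} the hypotheses \eqref{4.5} and \eqref{4.6} are exactly equivalent to $(H_0-i)^{p+2}\rho$ being bounded and $Q$ being a $p$-tempered charge. As observed in Section \ref{SecIII.2}, this forces $Q^c$ to be $p$-tempered and $(H_c-i)^{p+2}\rho^c$ to be bounded, so that $J^{ph}_{\rho,Q}$ is defined via the limit \eqref{3.4xx} and Corollary \ref{cor:III.1} applies, giving the representation \eqref{d3.11}.

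Next I would compute the photon-modified density. Since $S_c = s_c\otimes I_{\sh^{ph}}$ is diagonal in the photon index, $S_c(\gl) = \bigoplus_n s_c(\gl-n\go)$, and $\rho_{ac}(\gl) = \bigoplus_m \rho^{ph}(m)\rho^{el}_{ac}(\gl-m\go)$ by \eqref{r3.5}, the conjugation $\wh{\rho_{ac}(\gl)} = S_c(\gl)\rho_{ac}(\gl)S_c(\gl)^*$ acts blockwise and produces $\wh{\rho_{ac}(\gl)} = \bigoplus_m \rho^{ph}(m)\wh{\rho^{el}_{ac}(\gl-m\go)}$ with $\wh{\rho^{el}_{ac}(\gl)} = s_c(\gl)\rho^{el}(\gl)s_c(\gl)^*$, which is \eqref{4.13}. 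The crucial contrast is that $\wh{S_{ph}}(\gl)$ is genuinely off-diagonal in the photon index, because the dipole term $V_{ph}$ creates and annihilates photons; its blocks are precisely the operators $\wh{S^{ph}_{nm}}(\gl) = P_n(\gl)\wh{S_{ph}}(\gl)\upharpoonright\sh_m(\gl)$ defined before the statement.

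The heart of the proof is the blockwise expansion of the integrand in \eqref{d3.11}. Writing $Q_{ac}(\gl) = \bigoplus_n q^{ph}(n)q^{el}_{ac}(\gl-n\go)$ from \eqref{r3.4} and inserting the resolution $Q_{ac}(\gl) = \sum_n P_n(\gl)Q_{ac}(\gl)P_n(\gl)$ into the product $\wh{S_{ph}}(\gl)^*Q_{ac}(\gl)\wh{S_{ph}}(\gl)$, its $(m,m)$-block equals $\sum_n \wh{S^{ph}_{nm}}(\gl)^*\, q^{ph}(n)q^{el}_{ac}(\gl-n\go)\,\wh{S^{ph}_{nm}}(\gl)$, since $P_n\wh{S_{ph}}P_m = \wh{S^{ph}_{nm}}$ and $P_m\wh{S_{ph}}^*P_n = \wh{S^{ph}_{nm}}{}^*$. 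Because $\wh{\rho_{ac}(\gl)}$ is diagonal in the photon index, the trace collapses onto the diagonal blocks, and the diagonal term $q^{el}_{ac}(\gl-n\go)\gd_{mn}$ appearing in \eqref{r3.7} is simply the contribution of $\tr(\wh{\rho_{ac}(\gl)}Q_{ac}(\gl))$. This reproduces exactly the summand of \eqref{r3.7}.

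The last and main technical point is to interchange the double sum over $m,n$ with the $\gl$-integration. For this I would bound each summand by the H\"older-type inequality underlying \eqref{3.7c}, namely by $4\rho^{ph}(m)|q^{ph}(n)|\,\|\wh{\rho^{el}_{ac}(\gl-m\go)}\|\,\|\wh{S^{ph}_{nm}}(\gl)\|_{\sL_1}\,\|q^{el}_{ac}(\gl-n\go)\|$. The photonic prefactors are summable by \eqref{4.5} and \eqref{4.6}, while the transmission blocks $\wh{S^{ph}_{nm}}(\gl)$ are parts of the trace-class transmission operator of the $\sL_1$-scattering system $\{H,H_c\}$, so their $\sL_1$-norms are integrable in $\gl$; moreover the off-diagonal blocks vanish for all but finitely many $n$ at each $\gl$ by the finite-dimensionality of $\sh(\gl)$ established in Lemma \ref{rII.4}. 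Dominated convergence then legitimizes Fubini, and rearranging the resulting absolutely convergent series yields \eqref{r3.7}. The only genuine obstacle beyond this bookkeeping is the summability--integrability estimate; the algebra is forced once the off-diagonal block structure of $\wh{S_{ph}}$ is laid out.
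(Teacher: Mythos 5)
Your proposal is correct and follows essentially the same route as the paper's own proof: well-definedness via Lemma \ref{IV.1} and Corollary \ref{III.2}, the representation \eqref{d3.11} from Corollary \ref{cor:III.1}, the block-diagonal structure of $\wh{\rho_{ac}(\gl)}$ and $Q_{ac}(\gl)$ played against the off-diagonal blocks $\wh{S^{ph}_{nm}}(\gl)$, and finally a summability estimate (using \eqref{4.5}, \eqref{4.6} and the finite band width $|m-n|\le d_{\rm max}$ from Lemma \ref{rII.4}) to justify interchanging the sums with the $\gl$-integral. The only difference is cosmetic: in the domination step the paper bounds the blocks by operator norms (unitarity) together with the compact $\gl$-support of $\rho^{el}_{ac}$ and $q^{el}_{ac}$, rather than by $\sL_1$-norms of the transmission blocks as you do, but both mechanisms yield the same absolute convergence.
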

\begin{proof}
By Lemma \ref{IV.1} we get that that the charge $Q$ is $p$-tempered
and $(H_0 - i)^p\rho$ is a bounded operator. By Corollary \ref{III.2}
the current $J^{ph}_{\rho,Q} := J^{\pS_{ph}}_{\rho^c,Q^c}$ is
well-defined.

Since $\left(Q_{ac}(\gl) - \wh{S_{ph}(\gl)^*}Q_{ac}(\gl)\wh{S_{ph}}(\gl)\right)$ is a trace
class operator for $\gl \in \dR$ we get from
\eqref{d3.11} and \eqref{r3.5} that
\bed
\begin{split}
\tr&\left(\wh{\rho_{ac}(\gl)}\left(Q_{ac}(\gl) - \wh{S_{ph}(\gl)^*}Q_{ac}(\gl)\wh{S_{ph}}(\gl)\right)\right) =
\sum_{m\in\dN_0} \rho^{ph}(m) \times\\
&\tr\left(\wh{\rho^{el}(\gl - m\go)}
P_m(\gl)\left(Q_{ac}(\gl) - \wh{S_{ph}(\gl)^*}Q_{ac}(\gl)\wh{S_{ph}}(\gl)\right)P_m(\gl)\right)
\end{split}
\eed
Further we have
\bed
\begin{split}
P_m(\gl)&\left(Q_{ac}(\gl) - \wh{S_{ph}(\gl)^*}Q_{ac}(\gl)\wh{S_{ph}}(\gl)\right)P_m(\gl)\\
        & = q^{ph}(m)\left(q^{el}(\gl - m\go) -P_m(\gl)\wh{S_{ph}(\gl)^*}Q_{ac}(\gl)\wh{S_{ph}}(\gl)\right)P_m(\gl)\\
        & = q^{ph}(m)q^{el}(\gl - m\go) - \sum_{n\in\dN_0}q^{ph}(n)\wh{S^{ph}_{nm}(\gl)^*}q^{el}(\gl -n\go)
        \wh{S^{ph}_{nm}(\gl)}
\end{split}
\eed
for $\gl \in \dR$ where $\wh{S^{ph}_{nm}(\gl)^*} := P_n(\gl)\wh{S_{ph}}(\gl)P_m(\gl)$, $\gl \in \dR$. Notice that $\sum_{n\in\dN_0}$ is a sum with a
finite number of summands.  Hence
\bed
\begin{split}
\tr&\left(\wh{\rho_{ac}(\gl)}\left(Q_{ac}(\gl) - \wh{S_{ph}(\gl)^*}Q_{ac}(\gl)\wh{S_{ph}}(\gl)\right)\right) = \sum_{m\in\dN_0} \rho^{ph}(m)\sum_{n\in\dN_0}q^{ph}(n)\times\\
& \tr\left(\wh{\rho^{el}(\gl-m\go)}\left(q^{el}(\gl - m\go)\gd_{mn} - \wh{S^{ph}_{nm}(\gl)^*}q^{el}(\gl -n\go)
\wh{S^{ph}_{nm}(\gl)}\right)\right)\nonumber
\end{split}
\eed
We are going to show that
\bed
\begin{split}
\sum_{m\in\dN_0} \rho^{ph}(m) &\sum_{n\in\dN_0} |q^{ph}(n)|
\int_\dR\left|\tr\left(\wh{\rho^{el}(\gl-m\go)}\times\right.\right.\\
& \left.\left.\left(q^{el}(\gl - m\go)\gd_{mn} - \wh{S^{ph}_{nm}(\gl)^*}q^{el}(\gl-n\go)\wh{S^{ph}_{nm}
(\gl)}\right)\right)\right|d\gl < \infty.
\end{split}
\eed
Obviously one has the estimate
\bed
\begin{split}
&\left|\tr\right.\left.\left(\wh{\rho^{el}(\gl-m\go)}\left(q^{el}(\gl - m\go)\gd_{mn} - \wh{S^{ph}_{nm}(\gl)^*}q^{el}(\gl-n\go)\wh{S^{ph}_{nm}(\gl)}\right)\right)\right| \le\\
&2\|\wh{\rho^{el}(\gl-m\go)}\|_{\sL(\sh_m(\gl))}\left(\|q^{el}(\gl - m\go)\|_{\sL(\sh_m(\gl))}\gd_{nm}+\|q^{el}(\gl -n\go)\|_{\sL(\sh_n(\gl))}\right).
\end{split}
\eed
Further, we get
\bead
\lefteqn{
\int_{\gl\in\dR}\|\wh{\rho^{el}(\gl-m\go)}\|_{\sL(\sh_m(\gl))}\|q^{el}(\gl - m\go)\|_{\sL(\sh_m(\gl))}\gd_{nm}
\le }\\
& &
\int_{\gl\in\dR}\|\wh{\rho^{el}(\gl)}\|_{\sL(\sh_m(\gl))}\|q^{el}(\gl)\|_{\sL(\sh_m(\gl))}d\gl
\eead
and
\bead
\lefteqn{
\int_\dR\|\wh{\rho^{el}(\gl-m\go)}\|_{\sL(\sh_m(\gl))}\|q^{el}(\gl  -n\go)\|_{\sL(\sh_n(\gl))}d\gl \le}\\
& &
\|q^{el}_{ac}\|_{\sL(\sh^{el})}\int_{\gl\in\dR}\|\wh{\rho^{el}(\gl-(m-n)\go)}\|_{\sL(\sh_{m-n}(\gl))}d\gl
\eead
If the conditions \eqref{4.5} and \eqref{4.6} are satisfied, then
\bed
\sum_{m\in\dN_0} \rho^{ph}(m) |q^{ph}(m)|\int_\dR
\|\wh{\rho^{el}(\gl)}\|_{\sL(\sh_m(\gl))}\|q^{el}(\gl)\|_{\sL(\sh_m(\gl))}d\gl
< \infty
\eed
Further, we have
\bed
\begin{split}
&\sum_{m\in\dN_0}\rho^{ph}(m)\sum_{n\in\dN_0}|q^{ph}(n)|\int_{\gl\in\dR}\|\wh{\rho^{el}(\gl-(m-n)\go)}\|_{\sL(\sh_{m-n}(\gl))}d\gl \le \\
&(v_{\rm max} - v_{\rm min} + 4)\|\rho^{el}_{ac}\|_{\sL(\sh^{el})} \sum_{m\in\dN_0}\rho^{ph}(m)\sum_{|m - n|\le d_{\rm max}}| q^{ph}(n)| < \infty
\end{split}
\eed
where $d_{\rm max}$ is introduced by Lemma \ref{rII.4}. To prove
\bed
\sum_{m\in\dN_0} \rho^{ph}(m)\sum_{|m - n|\le d_{\rm max}}| q^{ph}(n)|
< \infty
\eed
we use again \eqref{4.5} and \eqref{4.6}. The last step admits to interchange
the integral and the sums which immediately proves \eqref{r3.7}
\end{proof}
\bc\la{IV.4}
Let $\pS = \{H,H_0\}$ be the  $JCL$-model.
Assume that $\rho$ and $Q$ are given by \eqref{4.1} where $\rho^{el}$ and
$q^{el}$ are density  and charge operators for $h^{el}_0$ and
$\rho^{ph}$ and $q^{ph}$ for $h^{ph}$, respectively.
If $\rho^{el}$ is an equilibrium state, i.e. $\rho^{el} =
f^{el}(h^{el}_0)$, then
\begin{align}
J^{ph}_{\rho,Q} = &\sum_{m,n\in\dN_0}q^{ph}(n)
\frac{1}{2\pi}\int_\dR \left(\rho^{ph}(n)f^{el}(\gl-n\go)-\rho^{ph}(m)f^{el}(\gl-m\go)\right)\times
\nonumber\\
&\tr\left(\wh{S^{ph}_{nm}}(\gl)^*q^{el}_{ac}(\gl-n\go)\wh{S^{ph}_{nm}}(\gl)\right)d\gl.
\la{r3.9}
\end{align}
\ec
\begin{proof}
From \eqref{r3.7} we get
\begin{align}\nonumber
J^{ph}_{\rho,Q} =&\sum_{n\in\dN_0}q^{ph}(n)\sum_{m\in\dN_0}\rho^{ph}(m)
\frac{1}{2\pi}\int_\dR d\gl\;f^{el}(\gl - m\go)\times\\
&\tr\left(q^{el}_{ac}(\gl - n\go)\gd_{mn} - \wh{S^{ph}_{nm}}(\gl)^*q^{el}_{ac}(\gl-n\go)\wh{S^{ph}_{nm}}(\gl)\right).
\nonumber
\end{align}
Hence
\bed
\begin{split}
J^{ph}_{\rho,Q}& = \sum_{n\in\dN_0}q^{ph}(n)\frac{1}{2\pi}\int_\dR d\gl\;\sum_{m\in\dN_0}\rho^{ph}(m)f^{el}
(\gl - m\go)\times\\
&\tr\left(q^{el}_{ac}(\gl - n\go)\gd_{mn} -
  \wh{S^{ph}_{nm}}(\gl)^*q^{el}_{ac}(\gl-n\go)\wh{S^{ph}_{nm}}(\gl)\right).
\end{split}
\eed
This gives
\begin{align}\la{4.15}
J^{ph}_{\rho,Q}= & \sum_{n\in\dN_0}q^{ph}(n)\frac{1}{2\pi}\int_\dR d\gl\;\bigl(\rho^{ph}(n)f^{el}(\gl - n\go)
\tr\left(q^{el}_{ac}(\gl - n\go)\right) -\\
& \sum_{m\in\dN_0}\rho^{ph}(m)f^{el}(\gl - m\go)\tr\left(\wh{S^{ph}_{nm}}(\gl)^*q^{el}_{ac}(\gl-n\go)
\wh{S^{ph}_{nm}}(\gl)\right)\bigr).
\nonumber
\end{align}
Since
\bed
\begin{split}
\sum_{m\in\dN_0}\rho^{ph}(m)&f^{el}(\gl
-m\go)\tr\left(\wh{S^{ph}_{nm}}(\gl)^*q^{el}_{ac}(\gl-n\go)\wh{S^{ph}_{nm}}(\gl)\right)
=\\
\sum_{m\in\dN_0}&\left(\rho^{ph}(m)f^{el}(\gl -m\go) -
\rho^{ph}(n)f^{el}(\gl -n\go)\right)\times\\
&
\tr\left(\wh{S^{ph}_{nm}}(\gl)^*q^{el}_{ac}(\gl-n\go)\wh{S^{ph}_{nm}}(\gl)\right)
+\\
\rho^{ph}(n)&f^{el}(\gl
-n\go)\sum_{m\in\dN_0}\tr\left(\wh{S^{ph}_{nm}}(\gl)^*q^{el}_{ac}(\gl-n\go)\wh{S^{ph}_{nm}}(\gl)\right)
\end{split}
\eed
Inserting this into \eqref{4.15} we obtain \eqref{r3.9}.
\end{proof}

\section{Electron and photon currents}\label{SecV}

\subsection{Electron current}\label{SecV.1}

To calculate the electron current induced by contacts and photons
contact we make the following choice throughout this section. We set
\be\la{5.0}
Q^{el}_\ga := q^{el}_\ga \otimes q^{ph}, \quad q^{el}_\ga := -\se
p^{el}_\ga \quad \mbox{and} \quad q^{ph} := I_{\sh^{ph}},
\quad\ga \in
\{l,r\},
\ee
where $p^{el}_\ga$ denotes the orthogonal projection from $\sh^{el}$ onto $\sh_\ga^{el}$.
By $\se > 0$ we denote the magnitude of the elementary charge. Since $p^{el}_\ga$ commutes with
$h^{el}_\ga$ one easily verifies that $Q^{el}_\ga$ commutes with $H_0$ which
shows that $Q^{el}_\ga$ is a charge. Following \cite{Pillet2007} the flux
related to $Q^{el}_\ga$ gives us the electron current $J^{\pS}_{\rho,Q^{el}_\ga}$ entering the lead
$\ga$ from the sample. Notice $Q^{el}_\ga = -\se P_\ga$ where $P_\ga$
is the orthogonal projection from $\sH$ onto $\sH_\ga :=
\sh^{el}_\ga  \otimes \sh^{ph}$. Since $q^{ph} = I_{\sh^{ph}}$ the condition
\eqref{4.6} is immediately satisfied for any $p \ge 0$.

Let $f(\cdot): \dR \longrightarrow \dR$ be a non-negative bounded measurable
function. We set
\be\la{5.1}
\rho^{el} = \rho^{el}_l \oplus \rho^{el}_S \oplus \rho^{el}_r, \quad \rho^{el}_\ga := f(h^{el}_\ga - \mu_\ga),
\quad \ga \in \{l,r\}.
\ee
and $\rho = \rho^{el} \otimes \rho^{ph}$. By $\mu_\ga$ the chemical potential of the lead $\ga$ is denoted. In
applications one sets $f(\gl) := f_{FD}(\gl)$, $\gl \in \dR$, where $f_{FD}(\gl)$ is the so-called
Fermi-Dirac distribution given by \eqref{5.2}.
If $\gb = \infty$, then $f_{FD}(\gl) := \chi_{\dR_-}(\gl)$, $\gl \in \dR$.
Notice that $[\rho^{el},p^{el}] = 0$. For $\rho^{ph}$ we choose the Gibbs state
\be\la{5.3}
\rho^{ph} := \frac{1}{Z}e^{-\gb h^{ph}}, \quad Z = \tr(e^{-\gb h^{ph}}) = \frac{1}{1 - e^{-\gb\go}},
\ee
Hence $\rho^{ph} = (1 - e^{-\gb\go})e^{-\gb h^{ph}}$.
If $\gb = \infty$, then $\rho^{ph} := (\cdot,\gY_0)\gY_0$.
Obviously, $\tr(\rho^{ph}) = 1$. We note that $\rho^{ph}(n) =
(1-e^{-\gb\go})e^{-n\gb\go}$, $n \in \dN_0$, satisfies the condition
\eqref{4.5} for any $p \ge 0$. Obviously, $\rho_0 = \rho^{el} \otimes \rho^{ph}$ is a density
operator for $H_0$.
\bd
{\rm
Let $\pS = \{H,H_0\}$ be the $JCL$-model. If $Q := Q^{el}_\ga$, where $Q^{el}_\ga$ is given by
\eqref{5.0},  and $\rho := \rho_0 := \rho^{el}\otimes \rho^{ph}$,
where $ \rho^{el}$  and $\rho^{ph}$ are given by \eqref{5.1} and
\eqref{5.3}, then $J^{el}_{\rho_0,Q^{el}_\ga} :=
J^\pS_{\rho_0,Q^{el}_\ga}$ is called the electron current entering the
lead $\ga$. The currents
$J^c_{\rho_0,Q^{el}_\ga}$ and $J^{ph}_{\rho_0,Q^{el}_\ga}$ are called the
contact induced  and photon induced electron currents.
}
\ed

\subsubsection{Contact induced electron current}

The following proposition immediately follows from Proposition \ref{dIII.2}.
\begin{proposition}
Let $\pS = \{H,H_0\}$ be the $JCL$-model.
Then the contact induced electron current $J^c_{\rho_0,Q^{el}_\ga}$,
$\ga \in \{l,r\}$, is given by
$J^c_{\rho_0,Q^{el}_\ga} = J^{\ps_c}_{\rho^{el},q^{el}_\ga}$.
In particular, one has
\be\la{5.5}
J^c_{\rho_0,Q^{el}_\ga} = -\frac{\se}{2\pi} \int_\dR
(f(\gl-\mu_\ga) - f(\gl - \mu_\gk)\gs_c(\gl)d\gl, \;\; \ga,\gk \in
\{l,r\},\;\; \ga \not= \gk,
\ee
where $\{\gs_c(\gl)\}_{\gl\in\dR}$ is the channel cross-section
from left to the right of the scattering system $\ps_c =
\{h^{el},h^{el}_0\}$, cf. Example \ref{III.3}.
\end{proposition}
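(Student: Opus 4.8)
The plan is to read this off as a direct specialization of Proposition~\ref{dIII.2}, so the work is entirely a matter of verifying hypotheses and tracking normalizations rather than establishing any new estimate. First I would confirm that the data $\rho_0 = \rho^{el}\otimes\rho^{ph}$ and $Q^{el}_\ga = q^{el}_\ga \otimes q^{ph}$ fit the tensor-product framework \eqref{4.1}: the electronic factor $\rho^{el}$ from \eqref{5.1} is a density operator and $q^{el}_\ga = -\se\,p^{el}_\ga$ is a bounded charge for $h^{el}_0$, while the photon factors are the normalized Gibbs state $\rho^{ph}$ and $q^{ph} = I_{\sh^{ph}}$ from \eqref{5.3} and \eqref{5.0}. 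Because $\rho^{ph}(n) = (1-e^{-\gb\go})e^{-n\gb\go}$ decays exponentially, condition \eqref{4.5} holds for every $p\ge 0$; and since $q^{ph}(n)\equiv 1$, condition \eqref{4.6} holds trivially. Thus all hypotheses of Proposition~\ref{dIII.2} are in force.

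Next I would evaluate the scalar $\gga = \sum_{n\in\dN_0}q^{ph}(n)\rho^{ph}(n)$ appearing in \eqref{eq:3.17}. With $q^{ph}(n) = 1$ and $\sum_{n}\rho^{ph}(n) = \tr(\rho^{ph}) = 1$ for the normalized state \eqref{5.3}, this gives $\gga = 1$. Proposition~\ref{dIII.2} then yields at once $J^c_{\rho_0,Q^{el}_\ga} = J^{\ps_c}_{\rho^{el},q^{el}_\ga}$, which is the first assertion.

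For the explicit formula \eqref{5.5} I would invoke Example~\ref{III.3}, where $\ps_c = \{h^{el},h^{el}_0\}$ is treated as a black-box model with the two leads $\sh^{el}_l$, $\sh^{el}_r$ as reservoirs and formula \eqref{3.6a} gives the corresponding electron current in terms of the single channel cross-section $\gs_c$. I would specialize that formula to the present data: the equilibrium density $\rho^{el}_\ga = f(h^{el}_\ga - \mu_\ga)$ corresponds to the lead distribution $f_\ga(\gl) = f(\gl-\mu_\ga)$, while the present charge $q^{el}_\ga = -\se\,p^{el}_\ga$ differs from the choice $q^{el}_l = p^{el}_l$ used in \eqref{3.6a} only by the constant factor $-\se$. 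Carrying this factor through \eqref{3.6a} and using that the two-lead index set $\{l,r\}$ contributes the single complementary term $\gk\ne\ga$ converts \eqref{3.6a} into \eqref{5.5}.

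I do not expect a genuine obstacle, since the statement is an immediate corollary of results already established; the only point needing care is bookkeeping. One must match the normalization conventions of Example~\ref{III.3} (where effectively $g_l = 1$, $g_r = 0$) with the charge $q^{el}_\ga = -\se\,p^{el}_\ga$, so that the prefactor $-\se/2\pi$ and the correct sign of the difference $f(\gl-\mu_\ga) - f(\gl-\mu_\gk)$ emerge, and then reduce the reservoir sum in \eqref{1.7} to the single term displayed in \eqref{5.5}.
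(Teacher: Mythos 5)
Your proposal is correct and follows essentially the same route as the paper: apply Proposition \ref{dIII.2} with $\gga = \sum_n q^{ph}(n)\rho^{ph}(n) = \tr(\rho^{ph}) = 1$ to reduce $J^c_{\rho_0,Q^{el}_\ga}$ to $J^{\ps_c}_{\rho^{el},q^{el}_\ga}$, then specialize formula \eqref{3.6a} of Example \ref{III.3} to the charge $q^{el}_\ga = -\se\,p^{el}_\ga$ and the densities $f(\gl-\mu_\ga)$ to obtain \eqref{5.5}. Your explicit verification of conditions \eqref{4.5} and \eqref{4.6} and of the sign and $-\se$ bookkeeping is exactly what the paper's terser proof leaves implicit.
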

\begin{proof}
Since $\tr(\rho^{ph}) =1$ it follows from
Proposition \ref{dIII.2} that $J^c_{\rho_0,Q^{el}_\ga} =
J^{\ps_c}_{\rho^{el},q^{el}_\ga}$. From
\eqref{3.6a}, cf. Example \ref{III.3}, we find \eqref{5.5}.
\end{proof}

If $\mu_l > \mu_r$ and $f(\cdot)$ is decreasing,
then $J^c_{\rho_0,Q^{el}_l}  < 0$. Hence the
electron contact current is going from the
left lead to the right which is in accordance with the physical
intuition. In particular, this is valid for the Fermi-Dirac distribution.
\begin{proposition}\la{prop:III.4}
Let $\pS = \{H,H_0\}$ be the $JCL$-model.
Further, let $\rho^{el}$ and $\rho^{ph}$ be given by \eqref{5.1} and
\eqref{5.3}, respectively. If the charge $Q^{el}_\ga$ is given by \eqref{5.0}, then
the following holds:
\begin{enumerate}

\item[\em{(E)}] If $\mu_l = \mu_r$, then
  $J^c_{\rho_0,Q^{el}_\ga} = 0$, $\ga \in \{l,r\}$.

\item[\em{(S)}] If $v_l \ge v_r + 4$, then
  $J^c_{\rho_0,Q^{el}_\ga} = 0$, $\ga \in \{l,r\}$, even if $\mu_l \not= \mu_l$.

\item[\em{(C)}] If $e^S_0 = \gd^S_0$ and $e^S_1 = \gd^S_1$, then
$J^c_{\rho_0,Q^{el}_\ga}
= 0$, $\ga \in \{l,r\}$, even if $\mu_l \not= \mu_l$.
\end{enumerate}
\end{proposition}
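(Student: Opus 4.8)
The plan is to reduce all three assertions to the explicit formula \eqref{5.5} for the contact induced current,
\bed
J^c_{\rho_0,Q^{el}_\ga} = -\frac{\se}{2\pi}\int_\dR \big(f(\gl-\mu_\ga) - f(\gl - \mu_\gk)\big)\gs_c(\gl)\,d\gl, \qquad \ga \not= \gk,
\eed
and then to show, case by case, that the integrand vanishes for almost every $\gl$. There are two mechanisms available: either the occupation difference $f(\gl-\mu_\ga) - f(\gl-\mu_\gk)$ is identically zero, or the channel cross-section $\gs_c(\gl)$ vanishes a.e. Recall from Example \ref{III.3} that $\gs_c(\gl) = |t^c_{lr}(\gl)|^2$ with $t^c_{lr}(\gl) = p^{el}_l(\gl)t_c(\gl)p^{el}_r(\gl)$ and $t_c = s_c - I_{\sh^{el}}$, so $\gs_c$ measures precisely the transmission between the left and right leads.

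For (E) nothing beyond the formula is needed: with $\mu_l = \mu_r$ the factor $f(\gl-\mu_\ga)-f(\gl-\mu_\gk)$ is identically zero, hence $J^c_{\rho_0,Q^{el}_\ga}=0$.

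For (S) I would invoke Lemma \ref{II.3Aa}, which gives $\gs(h^{el}_\ga)=\gs_{ac}(h^{el}_\ga)=[v_\ga,v_\ga+4]$, together with the spectral representation of Section \ref{SecII.3}: the fibre projection $p^{el}_\ga(\gl)$ onto $\sh^{el}_\ga(\gl)$ is nonzero only when $\gl\in[v_\ga,v_\ga+4]$. The hypothesis $v_l\ge v_r+4$ forces $[v_l,v_l+4]\cap[v_r,v_r+4]$ to consist of at most a single point, so for a.e.\ $\gl$ at least one of $p^{el}_l(\gl)$, $p^{el}_r(\gl)$ vanishes; consequently $t^c_{lr}(\gl)=p^{el}_l(\gl)t_c(\gl)p^{el}_r(\gl)=0$ and $\gs_c(\gl)=0$ for a.e.\ $\gl$. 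The integral in \eqref{5.5} then vanishes, irrespective of $\mu_l,\mu_r$.

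The main work, and the step I expect to be the principal obstacle, is (C). Here the plan is to exhibit an orthogonal splitting $\sh^{el}=\cK_L\oplus\cK_R$ with $\cK_L:=\sh^{el}_l\oplus\dC e^S_0$ and $\cK_R:=\sh^{el}_r\oplus\dC e^S_1$ that reduces both $h^{el}_0$ and $h^{el}$. Since in the eigenbasis $\{e^S_0,e^S_1\}$ the dot Hamiltonian $h_S$ is diagonal, $h^{el}_0$ clearly preserves each summand. Under the hypothesis $e^S_0=\gd^S_0$, $e^S_1=\gd^S_1$, inspection of the coupling \eqref{Int-D-L} shows that the left-lead term couples $\delta_1^l$ only to $e^S_0$ and the right-lead term couples $\delta_1^r$ only to $e^S_1$, so $v_{el}$ too leaves $\cK_L$ and $\cK_R$ invariant. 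Hence $h^{el}=h^{el}\!\upharpoonright\!\cK_L\ \oplus\ h^{el}\!\upharpoonright\!\cK_R$, and the wave operators $w_\pm(h^{el},h^{el}_0)$ and the scattering operator $s_c$ inherit this block-diagonal structure. Because $p^{el}_l$ projects into $\cK_L$ while $p^{el}_r$ projects into $\cK_R$, the off-diagonal transmission $t^c_{lr}=p^{el}_l(s_c-I_{\sh^{el}})p^{el}_r$ vanishes identically, so $\gs_c\equiv 0$ and \eqref{5.5} yields $J^c_{\rho_0,Q^{el}_\ga}=0$. The delicate point is precisely the invariance of $v_{el}$ under the splitting: this is where the identification of the contact vectors $\gd^S_j$ with the eigenvectors $e^S_j$ of $h_S$ is essential, and it is exactly this alignment that a generic choice of the contact basis would destroy, reinstating left--right transmission.
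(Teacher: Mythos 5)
Your proof is correct and follows essentially the same route as the paper: case (E) by cancellation of the occupation factors in \eqref{5.5}, case (S) by noting that non-overlapping lead bands make the fibre space one-dimensional a.e.\ so that $\gs_c$ vanishes, and case (C) by the decomposition of $h^{el}$ into two non-interacting blocks forcing $s_c$ to be diagonal. The only difference is that you spell out explicitly the invariant splitting $\cK_L = \sh^{el}_l\oplus\dC e^S_0$, $\cK_R = \sh^{el}_r\oplus\dC e^S_1$ and the vanishing of the off-diagonal projections, details the paper's terser proof leaves implicit.
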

\begin{proof}
(E) If $\mu_l = \mu_r$, then $f(\gl -\mu_l)  = f(\gl-\mu_r)$.
Applying formula \eqref{5.5} we obtain
$J^c_{\rho_0,Q^{el}_\ga} = 0$.

(S) If $v_l \ge v_r + 4$, then $h^{el,ac}_0$ has simple spectrum.
Hence the scattering matrix $\{s_c(\gl)\}_{\gl \in \dR}$ of the
scattering system $\ps_c = \{h^{el},h^{el}_0\}$ is a scalar
function which immediately yields $\gs_c(\gl) = 0$, $\gl \in \dR$,
which yields $J^c_{\rho_0,Q^{el}_\ga} = 0$.

(C) In this case the Hamiltonian $h^{el}$ decomposes into
  a direct sum of two Hamiltonians which do not interact. Hence the
  scattering matrix of $\{s_c(\gl)\}_{\gl \in \dR}$ of the scattering
  system $\ps_c = \{h^{el},h^{el}_0\}$ is diagonal which immediately yields
$J^c_{\rho_0,Q^{el}_\ga} = 0$.
\end{proof}

\subsubsection{Photon induced electron current }

To analyze \eqref{r3.7} is hopeless if we make no assumptions
concerning $\rho^{el}$ and the scattering operator $s_c$. The simplest
assumptions is that $\rho^{el}$ and $s_c$ commute. In this
case we get $\wh{\rho^{el}}(\gl) = \rho^{el}(\gl)$, $\gl \in \dR$.
\bl\la{V.2a}
Let $\pS = \{H,H_0\}$ be the $JCL$-model.
Further let $\rho^{el}$ be given by \eqref{5.1}.
If one of the cases $(E)$, $(S)$ or $(C)$ of Proposition
\ref{prop:III.4} is realized, then the $\rho^{el}$ and $s_c$ commute.
\el
\begin{proof}
If $(E)$ holds, then $\rho^{el}= f(h^{el}_0)$ which yields
$[\rho^{el},s_c] = 0$. If $(S)$ is valid, then the scattering matrix
$\{s_c(\gl)\}_{\gl \in \dR}$ is a scalar function which shows $[\rho^{el},s_c] = 0$.
Finally, if $(C)$ is realized, then the scattering matrix
$\{s_c(\gl)\}_{\gl \in \dR}$ diagonal. Since the $\rho^{el}$ is given
by \eqref{5.1} we get $[\rho^{el},s_c] = 0$.
\end{proof}

We are going to calculate the current $J^{ph}_{\rho_0,Q^{el}_\ga}$,
see \eqref{r3.7}. Obviously, we have $P_\ga(\gl) = \sum_{n\in\dN_0}p^{el}_\ga(\gl-n\go)$
and $I_{\sh(\gl)} = P_l(\gl) + P_r(\gl)$, $\gl \in \dR$. We set
\bed
P_{n_\ga}(\gl) := P_\ga(\gl)P_n(\gl) = P_n(\gl)P_\ga(\gl) =
p^{el}_\ga(\gl-n\go), \; \ga\in \{l,r\},
\eed
$n \in \dN_0$, $\gl \in \dR$. In the following we use the notation
$\wh{T_{ph}}(\gl) = \wh{S_{ph}}(\gl) - I_{\sh(\gl)}$,
$\gl \in \dR$, where $\{\wh{T_{ph}(\gl)}\}_{\gl\in\dR}$ is called the
transition matrix and $\{\wh{S_{ph}(\gl)}\}_{\gl \in \dR}$ is given by \eqref{r3.1}.
We set
\bed
\wh{T^{ph}_{k_\ga m_\gk}(\gl)} :=
P_{k_\ga}(\gl)\wh{T_{ph}}(\gl)P_{m_\gk}(\gl), \quad \gl \in \dR, \quad \ga,\gk
\in \{l,r\},\quad k,m \in \dN_0.
\eed
and
\be\la{eq:3.31}
\wh{\gs^{ph}_{k_\ga m_\gk}} (\gl) = \tr(\wh{T^{ph}_{k_\ga m_\gk}(\gl)^*}\wh{T^{ph}_{k_\ga m_\gk}(\gl)}),
\quad \gl \in \dR,
\ee
which is the cross-section between the channels $k_\ga$ and $m_\gk$.
\begin{proposition}\la{V.4}
Let $\pS = \{H,H_0\}$ be the $JCL$-model.

\item[\;\;\rm (i)] If $\rho^{el}$ commutes with the scattering operator
$s_c$ and $q^{el}$, then
\begin{align}\la{r3.18b}
&J^{ph}_{\rho_0,Q^{el}_\ga}= -\sum_{\substack{m,n\in\dN_0\\ \gk\in\{l,r\}}}
\frac{\se}{2\pi}\int_\dR \times\\
& \left(\rho^{ph}(n)f(\gl-\mu_\ga-n\go)- \rho^{ph}(m)f(\gl-\mu_\gk-m\go)\right)\wh{\gs^{ph}_{n_\ga m_\gk}(\gl)}d\gl.
\nonumber
\end{align}

\item[\;\;\rm (ii)] If in addition $\pS = \{H,H_0\}$ is time reversible
  symmetric, then
\begin{align}\la{5.8}
&J^{ph}_{\rho_0,Q^{el}_\ga}= -\sum_{m,n\in\dN_0}
\frac{\se}{2\pi}\int_\dR \times\\
& \left(\rho^{ph}(n)f(\gl-\mu_\ga-n\go)- \rho^{ph}(m)f(\gl-\mu_{\ga'}-m\go)\right)\wh{\gs^{ph}_{n_{\ga} m_{\ga'}}(\gl)}d\gl,
\nonumber
\end{align}
$\ga,\ga' \in \{l,r\}$, $\ga \not= \ga'$.
\end{proposition}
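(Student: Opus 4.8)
The plan is to read off the photon-induced current from the representation \eqref{d3.11} of Corollary \ref{cor:III.1} and to turn the trace under the integral into a black-box Landauer-B\"uttiker sum of the type \eqref{1.3a}, with the scattering channels now labelled by pairs $(n,\gk)$, $n\in\dN_0$, $\gk\in\{l,r\}$. The two commutation hypotheses are used first: since $\rho^{el}$ commutes with the scattering operator $s_c$, formula \eqref{d3.12} together with \eqref{4.13} and the unitarity of $s_c(\gl)$ on $\sh^{el}(\gl)$ gives $\wh{\rho_{ac}(\gl)}=\rho_{ac}(\gl)$ for a.e.\ $\gl\in\dR$, so that \eqref{d3.11} becomes the genuine current integrand $\tr\big(\rho_{ac}(\gl)\big(Q_{ac}(\gl)-\wh{S_{ph}}(\gl)^*Q_{ac}(\gl)\wh{S_{ph}}(\gl)\big)\big)$; and since $\rho^{el}$ commutes with $q^{el}=-\se p^{el}_\ga$ (automatic for the choice \eqref{5.1}, where $[\rho^{el},p^{el}]=0$), the density matrix is diagonal in the channel decomposition.

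Concretely, I would record that $Q_{ac}(\gl)=-\se\sum_{n\in\dN_0}P_{n_\ga}(\gl)$ by \eqref{5.0}, while $\rho_{ac}(\gl)=\sum_{m,\gk}\rho^{ph}(m)f(\gl-\mu_\gk-m\go)P_{m_\gk}(\gl)$; because each channel subspace $\sh_{m_\gk}(\gl)$ is at most one-dimensional, both families act as scalars $f_{m_\gk}:=\rho^{ph}(m)f(\gl-\mu_\gk-m\go)$ and $g_{n_\gk}:=-\se\,\gd_{\gk\ga}$ on the individual channels. Writing $\wh{S_{ph}}(\gl)=I_{\sh(\gl)}+\wh{T_{ph}}(\gl)$ and abbreviating channels by $c,c'$, I would expand the integrand, use cyclicity of the trace and the unitarity identity $\sum_{c'}\tr\big((P_{c'}\wh{S_{ph}}P_c)^*(P_{c'}\wh{S_{ph}}P_c)\big)=\dim\sh_c(\gl)$ to bring it into the form $\sum_{c,c'}g_{c'}\big(f_{c'}\,\gs^S_{cc'}-f_c\,\gs^S_{c'c}\big)$, where $\gs^S_{c'c}:=\tr\big((P_{c'}\wh{S_{ph}}P_c)^*(P_{c'}\wh{S_{ph}}P_c)\big)$.

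The remaining step is to replace $\gs^S$ by the cross-sections $\wh{\gs^{ph}_{c'c}}$ of \eqref{eq:3.31}. Off the diagonal one has $P_{c'}\wh{S_{ph}}P_c=\wh{T^{ph}_{c'c}}$, hence $\gs^S_{c'c}=\wh{\gs^{ph}_{c'c}}$ whenever $c'\neq c$; the diagonal discrepancy enters the integrand only through the combination $f_c g_c-f_c g_c=0$ and therefore drops out, so $\gs^S$ may be replaced by $\wh{\gs^{ph}}$ everywhere. Applying the unitarity relation \eqref{3.9} in the form $\sum_c\wh{\gs^{ph}_{cc'}}=\sum_c\wh{\gs^{ph}_{c'c}}$ symmetrizes the orientation of the cross-section and collapses the integrand to $\sum_{c,c'}g_{c'}(f_{c'}-f_c)\wh{\gs^{ph}_{c'c}}$; inserting $g_{c'}=-\se$ on the $\ga$-channels, $c'=(n,\ga)$, $c=(m,\gk)$, and the explicit $f$-values yields \eqref{r3.18b}. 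The interchange of the (finite) channel sum with the $\gl$-integration is legitimate by the estimates already established in Proposition \ref{IV.3}. I expect the only delicate point to be the bookkeeping of the diagonal blocks, i.e.\ checking exactly which contributions cancel when $\wh{S_{ph}}=I+\wh{T_{ph}}$ is substituted.

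For part (ii) I would argue that time reversibility forces the same-lead cross-sections to be symmetric in the photon index. Applying Lemma \ref{II.13}(i) to the transition operator $Z=\wh{T_{ph}}$ --- which is bounded, commutes with $H^{ac}_0$, and satisfies $\gG\wh{T_{ph}}\gG=\wh{T_{ph}}^*$ because, by Definition \ref{II.2a} and Lemma \ref{II.4y}, the conjugation $\gG$ commutes with $H_0$, $H_c$ and $H$ and the subspaces $\sH_{n_\ga}$ reduce $\gG$ --- gives $\wh{\gs^{ph}_{n_\ga m_\ga}}(\gl)=\wh{\gs^{ph}_{m_\ga n_\ga}}(\gl)$, the channel version of \eqref{3.10}. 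In \eqref{r3.18b} the terms with $\gk=\ga$ then form a double sum that is antisymmetric under the exchange $n\leftrightarrow m$: the factor $\rho^{ph}(n)f(\gl-\mu_\ga-n\go)-\rho^{ph}(m)f(\gl-\mu_\ga-m\go)$ changes sign while $\wh{\gs^{ph}_{n_\ga m_\ga}}$ is unchanged, so this part vanishes and only the terms with $\gk=\ga'\neq\ga$ remain, which is exactly \eqref{5.8}. Here the point needing care is the verification that $\gG$ commutes with the channel projections $P_{n_\ga}$ and intertwines $\wh{T_{ph}}$ with its adjoint, so that Lemma \ref{II.13}(i) indeed applies.
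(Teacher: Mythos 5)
Your proposal follows essentially the same route as the paper's own proof: part (i) is the paper's computation (inlining the channel decomposition behind Proposition \ref{IV.3}, using diagonality of the density and charge matrices in the one-dimensional channels $\sh_{n_\gk}(\gl)$, unitarity of $\wh{S_{ph}}(\gl)$ to convert the diagonal term into cross-section sums, and the sum rule \eqref{3.9}), and part (ii) is the paper's antisymmetrization under $n\leftrightarrow m$ combined with the same-lead symmetry $\wh{\gs^{ph}_{n_\ga m_\ga}}=\wh{\gs^{ph}_{m_\ga n_\ga}}$ obtained from Lemma \ref{II.13}(i). The hypothesis you flag as delicate --- that $\gG\wh{T_{ph}}\gG=\wh{T_{ph}}^*$ so that Lemma \ref{II.13}(i) applies to $Z=\wh{T_{ph}}$ --- is precisely the step the paper also uses without verification, so your argument matches the paper's proof in both structure and level of rigor.
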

\begin{proof}
(i) Let us assume that
\bed
q^{el} = \sum_{\gk\in\{l,r\}}g_\gk(h^{el}_\gk),
\eed
Notice that
\be\la{5.12}
q^{el}_{ac}(\gl) = \sum_{\gk\in\{l,r\}}g_{\gk}(\gl)p^{el}_\gk(\gl),
\quad \gl\in\dR.
\ee
Inserting \eqref{5.12} into \eqref{r3.7} and
using $q^{ph} = I_{\sh^{ph}}$ we get
\bed
\begin{split}
J^{ph}_{\rho_0,Q} =&\sum_{\substack{m\in\dN_0\\\ga \in \{l,r\}}}\rho^{ph}(m)\sum_{\substack{n\in\dN_0\\\gk\in\{l,r\}}}
\frac{1}{2\pi}\int_\dR d\gl\;\phi_\ga(\gl-m\go)g_\gk(\gl-n\go)\times\\
&\tr\left(p^{el}_\ga(\gl-m\go)\left(p^{el}_\gk(\gl-n\go)\gd_{mn} - \wh{S^{ph}_{nm}}(\gl)^*p^{el}_\gk(\gl-n\go)
\wh{S^{ph}_{nm}}(\gl)\right)\right)
\end{split}
\eed
where for simplicity we have set
\be\la{5.12a}
\phi_\ga(\gl) := f(\gl-\mu_\ga), \quad \gl \in
\dR, \quad n \in \dN_0, \quad \ga \in \{l,r\}.
\ee
Obviously, we have
\begin{align}
J^{ph}_{\rho_0,Q} =&\sum_{\substack{n\in\dN_0\\\gk\in\{l,r\}}}\rho^{ph}(n)
\frac{1}{2\pi}\int_\dR d\gl\;\phi_\gk(\gl-n\go)g_\gk(\gl-n\go)\tr\left(p^{el}_\gk(\gl-n\go)\right)-\nonumber\\
&\sum_{\substack{n\in\dN_0\\\gk\in\{l,r\}}}\sum_{\substack{m\in\dN_0\\\ga \in \{l,r\}}}\rho^{ph}(m)
\frac{1}{2\pi}\int_\dR d\gl\;\phi_\ga(\gl-m\go)g_\gk(\gl-n\go)\times\la{eq:5.10}\\
&\tr\left(p^{el}_\ga(\gl-m\go)\wh{S^{ph}_{nm}}(\gl)^*p^{el}_\gk(\gl-n\go)\wh{S^{ph}_{nm}}(\gl)
p^{el}_\ga(\gl-m\go)\right).\nonumber
\end{align}
Since the scattering matrix $\{\wh{S^{ph}}(\gl)\}_{\gl \in \dR}$ is unitary we have
\be\la{eq:5.11}
p^{el}_\gk(\gl - n\go) = \sum_{\substack{m\in\dN_0 \\ \ga \in \{l,r\}}} p^{el}_\gk(\gl - n\go)\wh{S^{ph}_{mn}}(\gl)^* p^{el}_\ga(\gl - m\go) \wh{S^{ph}_{mn}}(\gl)p^{el}_\gk(\gl - n\go)
\ee
for $n\in \dN_0$ and $\gk \in \{l,r\}$.
Inserting \eqref{eq:5.11} into \eqref{eq:5.10} we find
\bed
\begin{split}
J^{ph}_{\rho_0,Q}
=&\sum_{\substack{n\in\dN_0\\\gk\in\{l,r\}}}\sum_{\substack{m\in\dN_0\\\ga\in\{l,r\}}}\rho^{ph}(n)
\frac{1}{2\pi}\int_\dR d\gl\;\phi_\gk(\gl-n\go)g_\gk(\gl-n\go)\times\\
&\tr\left(p^{el}_\gk(\gl-n\go)\wh{S^{ph}_{nm}}(\gl)^*p^{el}_\ga(\gl-m\go)\wh{S^{ph}_{mn}}(\gl)
p^{el}_\gk(\gl-n\go)\right)-\\
&\sum_{\substack{n\in\dN_0\\\gk\in\{l,r\}}}\sum_{\substack{m\in\dN_0\\\ga \in \{l,r\}}}\rho^{ph}(m)
\frac{1}{2\pi}\int_\dR d\gl\;\phi_\ga(\gl-m\go)g_\gk(\gl-n\go)\times\\
&\tr\left(p^{el}_\ga(\gl-m\go)\wh{S^{ph}_{nm}}(\gl)^*p^{el}_\gk(\gl-n\go)\wh{S^{ph}_{nm}}
(\gl)p^{el}_\ga(\gl-m\go)\right).
\end{split}
\eed
Using the notation \eqref{eq:3.31} we find
\bed
\begin{split}
J^{ph}_{\rho_0,Q}
=&\sum_{\substack{n\in\dN_0\\\gk\in\{l,r\}}}\sum_{\substack{m\in\dN_0\\\ga\in\{l,r\}}}\rho^{ph}(n)
\frac{1}{2\pi}\int_\dR d\gl\;\phi_\gk(\gl-n\go)g_\gk(\gl-n\go)\wh{\gs^{ph}_{m_\ga n_\gk}(\gl)}-\\
&\sum_{\substack{n\in\dN_0\\\gk\in\{l,r\}}}\sum_{\substack{m\in\dN_0\\\ga \in \{l,r\}}}\rho^{ph}(m)
\frac{1}{2\pi}\int_\dR d\gl\;\phi_\ga(\gl-m\go)g_\gk(\gl-n\go)\wh{\gs^{ph}_{n_\gk m_\ga}(\gl)}:
\end{split}
\eed
By \eqref{3.9} we find
\bed
\sum_{\substack{m\in\dN_0\\\ga\in\{l,r\}}}\wh{\gs^{ph}_{m_\ga n_\gk}(\gl)} =
\sum_{\substack{m\in\dN_0\\\ga\in\{l,r\}}}\wh{\gs^{ph}_{n_\gk m_\ga}(\gl)}
\quad \gl \in \dR.
\eed
Using that we get
\begin{align}\la{eq:3.37}
&J^{ph}_{\rho_0,Q} =
\sum_{\substack{m,n\in\dN_0\\\ga,\gk\in\{l,r\}}}\
\frac{1}{2\pi}\int_\dR \times\\
&
\left(\rho^{ph}(n)\phi_\gk(\gl-n\go) - \rho^{ph}(m)\phi_\ga(\gl-m\go)\right)g_\gk(\gl-n\go)
\wh{\gs^{ph}_{n_\gk m_\ga}(\gl)}d\gl.
\nonumber
\end{align}
Setting $g_\ga(\gl) = -\se$ and $g_\gk(\gl) \equiv 0$, $\gk \not=
\ga$, we obtain \eqref{r3.18b}.

(ii) A straightforward computation shows that
\bead
\lefteqn{
\sum_{n,m\in\dN_0}\int_\dR \left(\rho^{ph}(n)f(\gl -\mu_a - n\go) -\rho^{ph}(m)f(\gl -\mu_a - m\go)\right)
\wh{\gs^{ph}_{n_\ga m_\ga}(\gl)}d\gl=}\\
& &
\sum_{n,m\in\dN_0}\int_\dR \left(\rho^{ph}(m)f(\gl -\mu_a - m\go) -\rho^{ph}(n)f(\gl -\mu_a - n\go)\right)
\wh{\gs^{ph}_{m_\ga n_\ga}(\gl)}d\gl
\eead
Since $\gs^{ph}_{m_\ga n_\ga}(\gl) = \gs^{ph}_{n_\ga m_\ga}(\gl)$,
$\gl \in\dR$, we get
\bed
\begin{split}
&
\sum_{n,m\in\dN_0}\int_\dR \left(\rho^{ph}(n)f(\gl -\mu_a - n\go) -\rho^{ph}(m)f(\gl -\mu_a - m\go)\right)
\wh{\gs^{ph}_{n_\ga m_\ga}(\gl)}d\gl=\\
&
-\sum_{n,m\in\dN_0}\int_\dR \left(\rho^{ph}(n)f(\gl -\mu_a - n\go) -\rho^{ph}(m)f(\gl -\mu_a - m\go)\right)
\wh{\gs^{ph}_{n_\ga m_\ga}(\gl)}d\gl
\end{split}
\eed
which yields
\bed
\sum_{n,m\in\dN_0}\int_\dR \left(\rho^{ph}(n)f(\gl -\mu_a - n\go)
  -\rho^{ph}(m)f(\gl -\mu_a - m\go)\right)\wh{\gs^{ph}_{n_\ga
    m_\ga}(\gl)}d\gl = 0.
\eed
Using that we get immediately the representation \eqref{5.8} from
\eqref{r3.18b}.
\end{proof}
\bc\la{V.6}
Let $\pS = \{H,H_0\}$ be the$JCL$-model.

\item[\;\;\rm (i)]
If the cases cases $(E)$, $(S)$ or $(C)$ of Proposition
\ref{prop:III.4} are realized, then the representation
 \eqref{r3.18b} holds.

\item[\;\;\rm (ii)] If the case (E) of Proposition
\ref{prop:III.4} is realized and the system $\pS = \{H,H_0\}$ is time
reversible symmetric, then
\begin{align}\la{5.16}
&J^{ph}_{\rho_0,Q^{el}_\ga} = \\
& -\sum_{m,n\in\dN_0}\frac{\se}{2\pi}\int_\dR (\rho^{ph}(n)f(\gl - \mu- n\go) - \rho^{ph}(m)f(\gl - \mu - m\go))
\wh{\gs^{ph}_{n_\ga m_{\ga'}}}(\gl)d\gl
\nonumber
\end{align}
$n \in \dN_0$, $\ga \in \{l,r\}$ where $\mu := \mu_l = \mu_r$ and $\ga
\not= \ga'$.

\item[\;\;\rm (iii)] If the case (E) of Proposition
\ref{prop:III.4} is realized and the system $\pS = \{H,H_0\}$ is time
reversible and mirror symmetric, then $J^{ph}_{\rho_0,Q^{el}_\ga} =
0$.
\ec
\begin{proof}
(i) The statement follows from Proposition \ref{V.4}(i) and Lemma \ref{V.2a}.

(ii) Setting $\mu_\ga = \mu_{\ga'}$ formula \eqref{5.16} follows
\eqref{5.8}.

(iii) If $\pS = \{H,H_0\}$ is time reversible and mirror symmetric we
get from Lemma \ref{II.13}\,(ii) that $\wh{\gs^{ph}_{n_\ga m_{\ga'}}}(\gl) =
\wh{\gs^{ph}_{n_{\ga'} m_\ga}}(\gl)$, $\gl \in \dR$, $n,m\in\dN_0$,
$\ga,\ga' \in \{l,r\}$, $\ga \not= \ga'$. Using that we get from \eqref{5.16} that
\bed
\begin{split}
&J^{ph}_{\rho_0,Q^{el}_\ga} = \\
& -\sum_{m,n\in\dN_0}\frac{\se}{2\pi}\int_\dR (\rho^{ph}(n)f(\gl - \mu- n\go) - \rho^{ph}(m)f(\gl - \mu - m\go))
\wh{\gs^{ph}_{n_{\ga'} m_\ga}}(\gl)d\gl.
\end{split}
\eed
Interchanging $m$ and $n$ we get
\bed
\begin{split}
&J^{ph}_{\rho_0,Q^{el}_\ga} = \\
& -\sum_{m,n\in\dN_0}\frac{\se}{2\pi}\int_\dR (\rho^{ph}(m)f(\gl - \mu- m\go) - \rho^{ph}(n)f(\gl - \mu - n\go))
\wh{\gs^{ph}_{m_{\ga'} n_\ga}}(\gl)d\gl.
\end{split}
\eed
Using that $\pS$ is time reversible symmetric we get from Lemma
\ref{II.13}\,(i) that
\bed
\begin{split}
&J^{ph}_{\rho_0,Q^{el}_\ga} = \\
& -\sum_{m,n\in\dN_0}\frac{\se}{2\pi}\int_\dR (\rho^{ph}(m)f(\gl -
\mu- m\go) - \rho^{ph}(n)f(\gl - \mu - n\go))\wh{\gs^{ph}_{n_\ga m_{\ga'}}}(\gl)d\gl.
\end{split}
\eed
which shows that $J^{ph}_{\rho_0,Q^{el}_\ga} =
-J^{ph}_{\rho_0,Q^{el}_\ga}$. Hence $J^{ph}_{\rho_0,Q^{el}_\ga} =0$.
\end{proof}
We note that by Proposition \ref{prop:III.4} the contact induced
current is zero, i.e. $J^c_{\rho_0,Q^{el}_\ga} = 0$. Hence, if the
$\pS$ is time reversible and mirror symmetric, then the total current is
zero, i.e. $J^\pS_{\rho_0,Q^{el}_\ga} = 0$.
\begin{remark}\la{rem:III.2}
{\rm
Let the case $(E)$ of Proposition \ref{prop:III.4} be realized, that is, $\mu_l = \mu_r$.
Moreover, we assume for simplicity that $0 =: v_r \le
v := v_l$.
\begin{enumerate}

\item [(i)] If $\gb = \infty$, then  $\rho^{ph}(n) = \gd_{0n}$, $n
  \in \dN_0$. From \eqref{r3.18b} we immediately get that
  $J^{ph}_{\rho^{el},Q^{el}_\ga} = 0$. That means, if the temperature
  is zero, then the photon induced electron current is zero.

\item[(ii)] The photon induced electron current might be zero even if $\gb < \infty$. Indeed,
let $\pS = \{H,H_0\}$ be time reversible symmetric and let the case (E) be realized.
If $\go \ge v + 4$ and , then $\sh^{el}(\gl) := \sh^{el}_n(\gl) = \sh^{el}(\gl-n\go)$,
$n \in \dN_0$. Hence one always has $n=m$ in formula
\eqref{5.16} which immediately yields $J^{ph}_{\rho_0,Q^{el}_\ga} = 0$.

\item[(iii)] The photon induced electron
current might be different from zero.  Indeed, let $\pS = \{H,H_0\}$ be time
reversible symmetric and let $v=2$ and $\go = 4$,
then one gets that to calculate the $J^{ph}_{\rho_0,Q^{el}_l}$ one has to take into
account $m = n+1$ in formula \eqref{5.16}. Therefore we find
\bed
\begin{split}
&J^{ph}_{\rho_0,Q^{el}_l} = -\sum_{n\in\dN_0}\frac{\se}{2\pi}\int_\dR d\gl\;\times\\
&
\left(\rho^{ph}(n)f(\gl-\mu - n\go) -
  \rho^{ph}(n+1)f(\gl - \mu - (n+1)\go)\right)
\wh{\gs^{ph}_{n_l\,(n+1)_r}}(\gl).
\end{split}
\eed
If $\rho^{ph}$ is given by \eqref{5.3} and
$f(\gl) = f_{FD}(\gl)$, cf. \eqref{5.2}, then one easily verifies that
\bed
\frac{\partial}{\partial x} \rho^{ph}(x)f_{FD}(\gl - \mu - x\go) <
0, \quad x,\mu,\gl \in \dR.
\eed
Hence $\rho^{ph}(n)f_{FD}(\gl-\mu -n\go)$ is decreasing in $n \in
\dN_0$  for $\gl,\mu \in \dR$ which yields
$\left(\rho^{ph}(n)f(\gl-\mu - n\go) - \rho^{ph}(n+1)f(\gl - \mu -
  (n+1)\go)\right) \ge 0$. Therefore $J^{ph}_{\rho_0,Q^{el}_l} \le
0$ which means that the photon induced current leaves the
left-hand side and enters the right-hand side. In fact
$J^{ph}_{\rho_0,Q^{el}_l} = 0$ implies that
$\wh{\gs^{ph}_{n_l\,(n+1)_r}}(\gl) = 0$ for $n \in \dN_0$ and $\gl \in
\dR$ which means that there is no scattering from the left-hand side
to the right one and vice versa which can be excluded generically.
\end{enumerate}
}
\end{remark}

\subsection{Photon current}	\label{sec:LandBuettApplicationPhotonCur}

The photon current is related to the charge
\begin{equation*}
Q := Q^{ph} = - I_{\sh^{el}}\otimes \sn,
\end{equation*}
where $\sn = \de\Gamma(1) = b^\ast b$ is the photon number operator
on $\sh^{ph} = \sF_+(\dC)$, which is self-adjoint and commutes with
$h^{ph}$. It follows that $Q^{ph}$ is also self-adjoint and commutes with
$H_0$.  It is not bounded, but since $\dom(\sn) = \dom(h^{ph})$,
it is immediately obvious that $Q^{ph}(H_0 + \theta)^{-1}$ is
bounded, whence $\sN$ is a tempered charge. Its charge matrix with respect to the
spectral representation $\Pi(H^{ac}_0)$ of Lemma \ref{rII.4} is given by
\bed
Q^{ph}_{ac}(\lambda) = - \bigoplus_{n \in \dN_0} n P_n(\gl).
\eed
We recall that $P_n(\gl)$ is the orthogonal projection form
$\sh(\gl)$ onto $\sh_n(\gl) = \sh^{el}(\gl-n\go)$, $\gl\in\dR$.
We are going to calculate the photon current or, how it is also
called, the photon production rate.

\subsubsection{Contact induced photon current}

The following proposition is in accordance with the physical intuition.
\begin{proposition}\la{V.10}
Let $\pS = \{H,H_0\}$ be the $JCL$-model.
Then $J^c_{\rho_0,Q^{ph}} = 0$.
\end{proposition}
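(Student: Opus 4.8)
The plan is to invoke the Landauer--B\"uttiker representation \eqref{cc3.3} for the contact-induced current and to exploit the fact that the contact scattering leaves every photon sector invariant, so that the photon-number charge is unaffected by it.

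First I would verify that $J^c_{\rho_0,Q^{ph}}$ is well defined so that the formula may be applied. The charge $Q^{ph} = -I_{\sh^{el}}\otimes\sn$ has the product structure \eqref{4.1} with $q^{el} = -I_{\sh^{el}}$ and $q^{ph} = \sn$, so that $q^{ph}(n) = n$ and condition \eqref{4.6} holds with $p=1$; hence $Q^{ph}$ is a $1$-tempered charge. Since $\rho^{ph}$ is the Gibbs state \eqref{5.3}, condition \eqref{4.5} is satisfied for every $p$, so $(H_0-i)^{p+2}\rho_0$ is bounded. By Proposition \ref{dIII.1} the current $J^c_{\rho_0,Q^{ph}}$ is therefore defined and admits the representation
\[
J^c_{\rho_0,Q^{ph}} = \frac{1}{2\pi}\int_\dR \tr\big(\rho_{ac}(\gl)\big(Q^{ph}_{ac}(\gl) - S_c(\gl)^*Q^{ph}_{ac}(\gl)S_c(\gl)\big)\big)\,d\gl .
\]

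The decisive observation is that the contact scattering matrix is block diagonal in the photon index. Indeed $S_c = s_c\otimes I_{\sh^{ph}}$, so $S_c(\gl) = \bigoplus_{n\in\dN_0} s_c(\gl-n\go)$ maps each fibre $\sh_n(\gl) = \sh^{el}(\gl-n\go)$ onto itself and therefore commutes with every projection $P_n(\gl)$, $n\in\dN_0$. Since $Q^{ph}_{ac}(\gl) = -\sum_{n\in\dN_0} n\,P_n(\gl)$ and $S_c(\gl)$ is unitary on $\sh(\gl)$, I obtain
\[
S_c(\gl)^*Q^{ph}_{ac}(\gl)S_c(\gl) = -\sum_{n\in\dN_0} n\,S_c(\gl)^*P_n(\gl)S_c(\gl) = -\sum_{n\in\dN_0} n\,P_n(\gl) = Q^{ph}_{ac}(\gl)
\]
for a.e.\ $\gl\in\dR$. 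Thus the integrand in the representation above vanishes identically, whence $J^c_{\rho_0,Q^{ph}} = 0$.

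Alternatively, the same conclusion follows from Proposition \ref{dIII.2}: with $q^{el} = -I_{\sh^{el}}$ one has $J^c_{\rho_0,Q^{ph}} = \gga\,J^{\ps_c}_{\rho^{el},-I_{\sh^{el}}}$, and the electron contact current with charge $-I_{\sh^{el}}$ is zero because $q^{el}_{ac}(\gl) = -I_{\sh(\gl)}$ commutes with the unitary $s_c(\gl)$, so the integrand in \eqref{eq:3.18} vanishes. I do not expect a genuine obstacle; the only point requiring care is the preliminary bookkeeping that makes $J^c_{\rho_0,Q^{ph}}$ well defined before the Landauer--B\"uttiker formula is invoked. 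Physically the statement merely records that the contact coupling $V_{el}$ carries no photon interaction, hence conserves the photon number and produces no photon flux.
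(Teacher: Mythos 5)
Your proposal is correct and matches the paper's argument: the paper's own proof is precisely your alternative route, namely observing that $q^{el}_{ac}(\gl) = I_{\sh^{el}(\gl)}$ makes the integrand in \eqref{eq:3.18} vanish by unitarity of $s_c(\gl)$, and then applying Proposition \ref{dIII.2}. Your primary route --- showing directly in \eqref{cc3.3} that the block-diagonal $S_c(\gl) = \bigoplus_{n\in\dN_0} s_c(\gl-n\go)$ commutes with $Q^{ph}_{ac}(\gl) = -\bigoplus_{n\in\dN_0} n\,P_n(\gl)$ --- is the same unitarity/commutation observation carried out before, rather than after, the tensor factorization, so the two arguments are essentially identical.
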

\begin{proof}
We note that $q^{el}_{ac}(\gl) = I_{\sh^{el}(\gl)}$, $\gl \in \dR$.
Inserting this into \eqref{eq:3.18} we get
$J^{\ps_c}_{\rho^{el},q^{el}} = 0$. Applying Proposition \ref{dIII.2} we prove $J^c_{\rho_0,Q^{ph}} = 0$.
\end{proof}

The result reflects the fact that the lead contact does not contributed to
the photon current which is plausible.

\subsubsection{Photon current}

From the Proposition \ref{V.10} we get that only the photon induced
photon current $J^{ph}_{\rho_0,Q^{ph}}$ contributes to the photon
current $J^\pS_{\rho_0,Q^{ph}}$. Since $J^\pS_{\rho_0,Q^{ph}} = J^{ph}_{\rho_0,Q^{ph}}$
we call $J^{ph}_{\rho_0,Q^{ph}}$ simply the {\em photon current}.

Using the notation $\wh{T^{ph}_{nm}}(\gl) :=
P_n(\gl)\wh{T_{ph}}(\gl)\upharpoonright\sh^{el}(\gl-m\go)$, $\gl \in \dR$, $m,n\in\dN_0$. We set
\be\la{5.18}
\wt{T^{ph}_{nm}}(\gl) = \wh{T^{ph}_{nm}}(\gl)s_c(\gl - m\go), \quad
\gl \in \dR, \quad m,n \in \dN_0
\ee
and
\be\la{5.19}
\wt{T^{ph}_{n_\gk m_\ga}}(\gl) :=
P_{n_\gk}(\gl)\wt{T^{ph}_{nm}}(\gl)\upharpoonright\sh^{el}_\ga(\gl - m\go), \quad
\gl \in \dR,
\ee
$m,n \in \dN_0$, $\ga,\gk \in \{l,r\}$, as well as
$\wt{\gs^{ph}_{n_\gk\,m_\ga}}(\gl) := \tr(\wt{T^{ph}_{n_\gk m_\ga}}(\gl)^*\wt{T^{ph}_{n_\gk m_\ga}}(\gl))$,
$\gl \in \dR$.
\begin{proposition}\la{V.11}
Let $\pS = \{H,H_0\}$ be the $JCL$-model.

\item[\;\;\rm (i)]
Then
\be\la{5.20}
J^{ph}_{\rho_0,Q^{ph}} = \sum_{\substack{m,n \in \dN_0\\
\ga,\gk \in \{l,r\}}}(n-m)\rho^{ph}(m)
\frac{1}{2\pi}\int_\dR f(\gl - \mu_\ga - m\go) \wt{\gs^{ph}_{n_\gk
    m_\ga}}(\gl)d\gl
\ee

\item[\;\;\rm (ii)] If $\rho^{el}$ commutes with $s_c$, then
\be\la{5.15}
J^{ph}_{\rho_0,Q^{ph}} = \sum_{\substack{m,n \in \dN_0\\
\ga,\gk \in \{l,r\}}}(n-m)\rho^{ph}(m)
\frac{1}{2\pi}\int_\dR f(\gl - \mu_\ga - m\go) \wh{\gs^{ph}_{n_\gk
    m_\ga}}(\gl)d\gl
\ee

\item[\;\;\rm (iii)]
If $\rho^{el}$ commutes with $s_c$ and $\pS = \{H,H_0\}$ is
time reversible symmetric, then
\begin{align}\la{5.16x}
&J^{ph}_{\rho_0,Q^{ph}} = \sum_{\substack{m,n \in \dN_0, n > m\\
\gk,\ga\in \{l,r\}}}
\frac{1}{2\pi}\int_\dR d\gl\times\\
& (n-m)\left(\rho^{ph}(m)f(\gl - \mu_\ga - m\go)
  - \rho^{ph}(n)f(\gl - \mu_\gk - n\go)\right)\wh{\gs^{ph}_{n_\gk m_\ga}}(\gl)
\nonumber
\end{align}
where $\ga' \in\{l,r\}$ and $\ga' \not= \ga$.
\end{proposition}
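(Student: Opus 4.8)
The plan is to derive all three formulas from the master representation \eqref{r3.7} of Proposition \ref{IV.3} by feeding in the photon charge $Q^{ph} = -I_{\sh^{el}}\otimes\sn$, whose data are $q^{el}_{ac}(\gl) = I_{\sh^{el}(\gl)}$ and $q^{ph}(n) = -n$. First I would check the hypotheses: since $\sup_{n\in\dN}|{-n}|n^{-1} = 1$, the charge $Q^{ph}$ is $1$-tempered by \eqref{4.6} and Lemma \ref{IV.1}(ii), and the Gibbs state $\rho^{ph}$ of \eqref{5.3} together with $\rho^{el}$ of \eqref{5.1} satisfies \eqref{4.5} for every $p\in\dN_0$, so Proposition \ref{IV.3} applies and $J^{ph}_{\rho_0,Q^{ph}}$ is well defined. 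Substituting $q^{el}_{ac}=I$ and $q^{ph}(n)=-n$ collapses the charge structure in \eqref{r3.7} to
\begin{align*}
J^{ph}_{\rho_0,Q^{ph}} = -\sum_{m,n\in\dN_0} n\,\rho^{ph}(m)\frac{1}{2\pi}\int_\dR &\tr\Big(\wh{\rho^{el}_{ac}(\gl-m\go)}\times\\
&\left(\gd_{mn}I - \wh{S^{ph}_{nm}}(\gl)^*\wh{S^{ph}_{nm}}(\gl)\right)\Big)d\gl.
\end{align*}

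For part (i) the decisive step is to generate the weight $(n-m)$. I would treat the diagonal contribution $-\sum_m m\,\rho^{ph}(m)\frac{1}{2\pi}\int_\dR\tr(\wh{\rho^{el}_{ac}(\gl-m\go)})d\gl$ by reinserting a sum over the output index through the unitarity of $\wh{S_{ph}}(\gl)$, in the column-sum form $\sum_{n}\wh{S^{ph}_{nm}}(\gl)^*\wh{S^{ph}_{nm}}(\gl) = I_{\sh_m(\gl)}$ (equivalently \eqref{eq:5.11}); adding it to the off-diagonal term fuses the two sums into one carrying the factor $(n-m)$. Next I would pass to the tilde-transmission of \eqref{5.18}: using \eqref{4.13} to write $\wh{\rho^{el}_{ac}(\gl-m\go)} = s_c(\gl-m\go)\rho^{el}(\gl-m\go)s_c(\gl-m\go)^*$ and cyclicity of the trace, the two copies of $s_c$ move onto $\wh{S^{ph}_{nm}}$; because the weight $(n-m)$ kills the diagonal block, $\wh{S^{ph}_{nm}}$ may be replaced by $\wh{T^{ph}_{nm}}$, whence $\wh{S^{ph}_{nm}}(\gl)s_c(\gl-m\go) = \wt{T^{ph}_{nm}}(\gl)$ exactly. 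Finally I would decompose $\rho^{el}(\gl-m\go) = \sum_{\ga}f(\gl-\mu_\ga-m\go)\,p^{el}_\ga(\gl-m\go)$ and sum the traces over the output lead $\gk$ via $\sum_{\gk}P_{n_\gk}(\gl) = P_n(\gl)$; this produces exactly the cross-sections $\wt{\gs^{ph}_{n_\gk m_\ga}}$ of \eqref{5.19} and hence \eqref{5.20}.

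Part (ii) is the same computation run under the extra hypothesis $[\rho^{el},s_c]=0$, inserted from the start: then $\wh{\rho^{el}_{ac}(\gl)} = s_c(\gl)\rho^{el}(\gl)s_c(\gl)^* = \rho^{el}(\gl)$, so no conjugating $s_c$ is present and the same cyclicity-and-decomposition argument yields $\wh{T^{ph}_{nm}}$ in place of $\wt{T^{ph}_{nm}}$, i.e. the hatted cross-sections $\wh{\gs^{ph}_{n_\gk m_\ga}}$ of \eqref{5.15}. For part (iii) I would start from the already-established \eqref{5.15}, split the $(m,n)$-sum into the blocks $n>m$, $n<m$ and $n=m$ (the last vanishing by the factor $n-m$), relabel $m\leftrightarrow n$ and $\ga\leftrightarrow\gk$ in the block $n<m$, and invoke the time-reversal symmetry $\wh{\gs^{ph}_{m_\ga n_\gk}}(\gl) = \wh{\gs^{ph}_{n_\gk m_\ga}}(\gl)$ from Lemma \ref{II.13}(i). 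The two blocks then combine into a single sum over $n>m$ with the difference $\rho^{ph}(m)f(\gl-\mu_\ga-m\go) - \rho^{ph}(n)f(\gl-\mu_\gk-n\go)$, which is \eqref{5.16x}.

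The step I expect to be the main obstacle is the bookkeeping in part (i): extracting the factor $(n-m)$ requires applying unitarity to the diagonal term in precisely the column-sum form, and the replacement of $\wh{S^{ph}_{nm}}$ by $\wh{T^{ph}_{nm}}$ is legitimate only because $(n-m)$ annihilates the block $n=m$ on which these two differ by the identity --- so I must keep the diagonal contribution explicit until that cancellation becomes visible rather than discarding it prematurely. The absolute convergence needed to interchange the $m$- and $n$-sums with the $\gl$-integral is not a separate difficulty: it is furnished verbatim by the estimates already carried out in the proof of Proposition \ref{IV.3} under \eqref{4.5} and \eqref{4.6}, both of which hold here for every $p$.
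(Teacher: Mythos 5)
Your proposal is correct and follows essentially the same route as the paper: specialize \eqref{r3.7} to $q^{el}_{ac}=I$, $q^{ph}(n)=-n$, use unitarity of $\wh{S_{ph}}(\gl)$ to turn the diagonal term into a sum producing the weight $(n-m)$, pass from $\wh{S^{ph}_{nm}}$ to $\wh{T^{ph}_{nm}}$ (legitimate since $n-m$ annihilates the diagonal), absorb $s_c$ via \eqref{4.13} and cyclicity into the tilde cross-sections, and then obtain (ii) from $[\rho^{el},s_c]=0$ and (iii) by the split--relabel--time-reversal argument with Lemma \ref{II.13}(i). The only cosmetic differences are that you apply unitarity in column-sum form directly to $\tr(\wh{\rho^{el}_{ac}})$ instead of to the difference $P_m(\gl)-\wh{S^{ph}_{mm}}(\gl)^*P_m(\gl)\wh{S^{ph}_{mm}}(\gl)$ as the paper does, and the identity you state as ``$\wh{S^{ph}_{nm}}(\gl)s_c(\gl-m\go)=\wt{T^{ph}_{nm}}(\gl)$'' should read $\wh{T^{ph}_{nm}}(\gl)s_c(\gl-m\go)=\wt{T^{ph}_{nm}}(\gl)$, which is what your off-diagonal replacement already justifies.
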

\begin{proof}
(i) From \eqref{r3.7} we get
\bed
\begin{split}
J^{ph}_{\rho_0,Q^{ph}} =&-\sum_{m,n\in\dN_0}n\rho^{ph}(m)
\frac{1}{2\pi}\int_\dR d\gl\;\tr\left(\wh{\rho^{el}_{ac}(\gl-m\go)}\times\right.\\
&\left.\left(P_n(\gl)\gd_{mn} - \wh{S^{ph}_{nm}}(\gl)^*q^{el}_{ac}(\gl-n\go)\wh{S^{ph}_{nm}}(\gl)\right)\right).
\end{split}
\eed
Hence
\bead
\lefteqn{
J^{ph}_{\rho_0,Q^{ph}} =
-\sum_{m\in \dN_0}m\rho^{ph}(m)\times}\\
& &
\frac{1}{2\pi}\int_\dR \tr\left(\wh{\rho^{el}_{ac}(\gl-m\go)}
\left(P_m(\gl) -\wh{S^{ph}_{mm}(\gl)^*}P_m(\gl)\wh{S^{ph}_{mm}(\gl)}\right)\right)d\gl
+\\
& &
\sum_{\substack{m,n \in \dN_0\\ m \not= n}}
n\rho^{ph}(m)\frac{1}{2\pi}\int_\dR \tr\left(\wh{\rho^{el}_{ac}(\gl-m\go)}\wh{S^{ph}_{nm}(\gl)^*}P_n(\gl)
\wh{S^{ph}_{nm}(\gl)}\right)d\gl.
\eead
Using the relation $P_m(\gl) = I_{\sh(\gl)} - \sum_{n \in \dN_0, m \not= n}P_n(\gl)$, $\gl \in \dR$,  we get
\bed
\begin{split}
&J^{ph}_{\rho_0,Q^{ph}} =\\
&-\sum_{\substack{m,n\in \dN_0\\m \not= n}}
m\rho^{ph}(m)\frac{1}{2\pi}\int_\dR \tr\left(\wh{\rho^{el}_{ac}(\gl-m\go)}
\left(\wh{S^{ph}_{nm}(\gl)^*}P_n(\gl)\wh{S^{ph}_{nm}(\gl)}\right)\right)d\gl+\\
&
\sum_{\substack{m,n \in \dN_0\\m \not= n}}
n\rho^{ph}(m)\frac{1}{2\pi}\int_\dR \tr\left(\wh{\rho^{el}_{ac}(\gl-m\go)}\wh{S^{ph}_{nm}(\gl)^*}P_n(\gl)
\wh{S^{ph}_{nm}(\gl)}\right)d\gl.
\end{split}
\eed
Since $\wh{T_{ph}}(\gl) = \wh{S_{ph}}(\gl) - I_{\sh(\gl)}$, $\gl \in \dR$, we find
\bed
\begin{split}
&J^{ph}_{\rho_0,Q^{ph}} =\\
&
-\sum_{m,n\in \dN_0}
(m - n)\rho^{ph}(m)\frac{1}{2\pi}\int_\dR
\tr\left(\wh{\rho^{el}_{ac}(\gl-m\go)}\wh{T^{ph}_{nm}(\gl)^*}\wh{T^{ph}_{nm}(\gl)}\right)d\gl.
\end{split}
\eed
Using \eqref{4.13} and definition \eqref{5.18} one gets
\bed
\begin{split}
&J^{ph}_{\rho_0,Q^{ph}} =\\
&
-\sum_{m,n\in \dN_0}
(m - n)\rho^{ph}(m)\frac{1}{2\pi}\int_\dR
\tr\left(\rho^{el}_{ac}(\gl-m\go)\wt{T^{ph}_{nm}(\gl)^*}\wt{T^{ph}_{nm}(\gl)}\right)d\gl\,.
\nonumber
\end{split}
\eed
Since $\rho^{el}_{ac} = \rho^{el}_l \oplus \rho^{el}_r$ where
$\rho^{el}_\ga$ is given by \eqref{5.1} we find
\bed
\begin{split}
&J^{ph}_{\rho_0,Q^{ph}} =\\
&
-\sum_{\substack{m,n\in \dN_0\\ \ga,\gk \in \{l,r\}}}
(m - n)\rho^{ph}(m)\frac{1}{2\pi}\int_\dR f(\gl - \mu_\ga - m\go)
\tr\left(\wt{T^{ph}_{n_\gk m_\ga}(\gl)^*}\wt{T^{ph}_{n_\gk m_\ga}(\gl)}\right)d\gl
\end{split}
\eed
where we have used \eqref{5.19}. Using
$\wt{\gs^{ph}_{n_\gk\,m_\ga}}(\gl) = \tr(\wt{T^{ph}_{n_\gk m_\ga}}(\gl)^*\wt{T^{ph}_{n_\gk m_\ga}}(\gl))$
we prove \eqref{5.20}.

(ii) If $\rho^{el}_{ac}$ commutes with $s_c$, then $\wh{\rho^{el}_{ac}}(\gl) = \rho^{el}_{ac}(\gl)$, $\gl \in \dR$
which yields that one can replace  $\wt{\gs^{ph}_{n_\gk m_\ga}}(\gl)$ by $\wh{\gs^{ph}_{n_\gk m_\ga}}(\gl)$, $\gl \in \dR$. Therefore \eqref{5.15} holds.

(iii) Obviously we have
\bea\la{5.17}
\lefteqn{
J^{ph}_{\rho_0,Q^{ph}} =}\\
& & \sum_{\substack{m,n \in \dN_0, n > m\\ \ga,\gk \in \{l,r\}}}(n-m)\rho^{ph}(m)
\frac{1}{2\pi}\int_\dR f(\gl - \mu_\ga - m\go) \wh{\gs^{ph}_{n_\gk
    m_\ga}}(\gl)d\gl+\nonumber \\
& &  \sum_{\substack{m,n \in \dN_0, n <m\\ \ga,\gk \in \{l,r\}}}(n-m)\rho^{ph}(m)
\frac{1}{2\pi}\int_\dR f(\gl - \mu_\ga - m\go) \wh{\gs^{ph}_{n_\gk
    m_\ga}}(\gl)d\gl\,.
\nonumber
\eea
Moreover, a straightforward computation shows that
\bead
\lefteqn{
\sum_{\substack{m,n \in \dN_0, n < m\\
\ga,\gk \in \{l,r\}}}(n-m)\rho^{ph}(m)
\frac{1}{2\pi}\int_\dR f(\gl - \mu_\ga - m\go) \wh{\gs^{ph}_{n_\gk
    m_\ga}}(\gl)d\gl =}\\
& &
\sum_{\substack{m,n \in \dN_0, n > m\\
\ga,\gk \in \{l,r\}}}(m-n)\rho^{ph}(n)
\frac{1}{2\pi}\int_\dR f(\gl - \mu_\gk - n\go) \wh{\gs^{ph}_{m_\ga n_\gk}}(\gl)d\gl.
\eead
Since $\pS = \{H,H_0\}$ is time reversible symmetric we find
\bea\la{5.18x}
\lefteqn{
\sum_{\substack{m,n \in \dN_0, n < m\\
\ga,\gk \in \{l,r\}}}(n-m)\rho^{ph}(m)
\frac{1}{2\pi}\int_\dR f(\gl - \mu_\ga - m\go) \wh{\gs^{ph}_{m_\ga n_\gk}}(\gl)d\gl =}\\
& &
\sum_{\substack{m,n \in \dN_0, n >m\\
\ga,\gk \in \{l,r\}}}(m-n)\rho^{ph}(n)
\frac{1}{2\pi}\int_\dR f(\gl - \mu_\gk - n\go) \wh{\gs^{ph}_{n_\gk m_\ga}}(\gl)d\gl.
\nonumber
\eea
Inserting \eqref{5.18x} into \eqref{5.17} we obtain \eqref{5.16x}.
\end{proof}
\bc\la{V.10a}
Let $\pS = \{H,H_0\}$ be the $JCL$-model
and let $f = f_{FD}$. If case $(E)$ of Proposition \ref{prop:III.4} is realized
and $\pS = \{H,H_0\}$ is time reversible symmetric, then
$J^{ph}_{\rho_0,Q^{ph}} \ge 0$.
\ec
\begin{proof}
We set $\mu := \mu_l = \mu_r$. One has
\bead
\lefteqn{
\rho^{ph}(m)f(\gl - \mu - m\go)
  - \rho^{ph}(n)f(\gl - \mu - n\go) = }\\
& &
e^{-m\gb\go}(1 -e^{-(n-m)\gb\go})f_{FD}(\gl - \mu - m\go)f_{FD}(\gl -
\mu - n\go) \ge 0
\eead
for $n > m$. From \eqref{5.16x} we get $J^{ph}_{\rho_0,Q^{ph}} \ge 0$.
\end{proof}

\begin{remark}
{\rm
Let us comment the results. If $J^{ph}_{\rho_0,Q^{ph}} \ge 0$, then
system $\pS$ is called light emitting. Similarly, if
$J^{ph}_{\rho_0,Q^{ph}} \le 0$, then we call it light absorbing.
Of course if $\pS$ is light emitting and absorbing, then
$J^{ph}_{\rho_0,Q^{ph}} =0$.

\begin{enumerate}

\item[(i)] If $\gb = \infty$, then $\rho^{ph}(m) = \gd_{0m}$, $m \in
  \dN_0$. Inserting this into \eqref{5.20} we get
\bed
J^{ph}_{\rho_0,Q^{ph}} = \sum_{\substack{n \in \dN_0\\
\ga,\gk \in \{l,r\}}}n
\frac{1}{2\pi}\int_\dR f(\gl - \mu_\ga) \wt{\gs^{ph}_{n_\gk
    0_\ga}}(\gl)d\gl \ge 0
\eed
Hence $\pS$ is light emitting.

\item[(ii)] Let us show $\pS$ might be light emitting even if $\gb <
  \infty$. We consider the case $(E)$ of
  Proposition \ref{prop:III.4}. If $\pS$ is time reversible symmetric, then it follows from
Corollary \ref{V.10a} that the system is light emitting.

If the system $\pS$ is time reversible and mirror symmetric, then
$J^{ph}_{\rho_0,Q^{el}_\ga} = 0$, $\ga \in \{l,r\}$, by Corollary \ref{V.6}(iii) . Since
$J^c_{\rho_0,Q^{el}} = 0$ by Proposition \ref{prop:III.4} we get that
$J^\pS_{\rho_0,Q^{el}_\ga} = 0$ but the photon
current is larger than zero. So our $JCL$-model is light
emitting by a zero total electron current
$J^\pS_{\rho_0,Q^{el}_\ga}$.

Let $v_r = 0$, $v_l = 2$ and $\go = 4$. Hence $\pS$ is not mirror
symmetric. Then we get from Remark
\ref{rem:III.2}(iii) that $J^{ph}_{\rho_0,Q^{el}_l} = -
J^{ph}_{\rho_0,Q^{el}_r} \le 0$.  Hence there is an electron current
from the left to the right lead. Notice that by Proposition
\ref{prop:III.4} $J^c_{\rho_0,Q^{el}_l} = 0$. Hence
$J^\pS_{\rho_0,Q^{el}_l} \le 0$.

\item[(iii)] To realize a light absorbing situation we consider the
  case $(S)$ of Proposition \ref{prop:III.4} and assume that $\pS$ is
  time reversible symmetric. Notice that by Lemma
  \ref{V.2a} $s_c$ commutes with $\rho^{el}$. We make the choice
\bed
v_r = 0, \quad v_l \ge 4, \quad \go = v_l, \quad \mu_l = 0,\quad \mu_r =
\go = v_l.
\eed
It turns out that with respect to the representation \eqref{5.16x} one has only to
$m = n-1$, $\gk = r$ and $\ga = l$. Hence
\bed
\begin{split}
&J^{ph}_{\rho_0,Q^{ph}} = \sum_{n \in \dN}
\frac{1}{2\pi}\int_\dR d\gl\times\\
& \left(\rho^{ph}(n-1)f(\gl - (n-1)\go) - \rho^{ph}(n)f(\gl - (n+1)\go)\right)\wh{\gs^{ph}_{n_l (n-1)_r}}(\gl)
\end{split}
\eed
Since $f(\gl) = f_{FD}(\gl)$ we find
\bead
\lefteqn{
\rho^{ph}(n-1)f(\gl - (n-1)\go) - \rho^{ph}(n)f(\gl - (n+1)\go) =}\\
& &
\rho^{ph}(n-1)f(\gl - (n-1)\go)f(\gl - (n+1)\go)\times\\
& &
\left(1 + e^{\gb(\gl - (n+1)\go)} - e^{-\gb\go}(1 + e^{\gb(\gl - \go(n-1))})\right)
\eead
or
\bead
\lefteqn{
\rho^{ph}(n-1)f(\gl - (n-1)\go) - \rho^{ph}(n)f(\gl - (n+1)\go) =}\\
& &
\rho^{ph}(n-1)f(\gl - (n-1)\go)f(\gl - (n+1)\go)(1 - e^{-\gb\go})(1 -
e^{\gb(\gl-\go n)}).
\eead
Since $\gl -n\go \ge 0$ we find $\rho^{ph}(n-1)f(\gl - (n-1)\go) -
\rho^{ph}(n)f(\gl - (n+1)\go) \le 0$ which yields
$J^{ph}_{\rho_0,Q^{ph}} \le 0$.

To calculate $J^{ph}_{\rho_0, Q^{el}_l}$ we use formula
\eqref{5.8}. Setting $\ga = l$ we get $\ga' = r$ which yields
\bed
\begin{split}
&J^{ph}_{\rho_0,Q^{el}_l}= -\sum_{m,n\in\dN_0}
\frac{\se}{2\pi}\int_\dR d\gl\times\\
& \left(\rho^{ph}(n)f(\gl-\mu_r-n\go)- \rho^{ph}(m)f(\gl-\mu_l-m\go)\right)\wh{\gs^{ph}_{n_l m_r}(\gl)},
\end{split}
\eed
One checks that $\wh{\gs^{ph}_{0_l 0_r}(\gl)} = 0$
and $\wh{\gs^{ph}_{n_l m_r}(\gl)} = 0$ for $m \not= n+1$, $n \in \dN$. Hence
\bed
\begin{split}
&J^{ph}_{\rho_0,Q^{el}_l}= -\sum_{n\in\dN}
\frac{\se}{2\pi}\int_\dR d\gl\times\\
& \left(\rho^{ph}(n)f(\gl-\mu_r-n\go)- \rho^{ph}(n-1)f(\gl-\mu_l-(n+1)\go)\right)\wh{\gs^{ph}_{n_l (n+1)_r}(\gl)},
\end{split}
\eed
Since $\mu_r = \go$ and $\mu_l = 0$ we find
\bed
\begin{split}
&J^{ph}_{\rho_0,Q^{el}_l}= -\sum_{n\in\dN}
\frac{\se}{2\pi}\int_\dR \times\\
& f(\gl-(n+1)\go)\rho^{ph}(n-1)(1 - e^{-\gb\go})\wh{\gs^{ph}_{n_l
    (n+1)_r}(\gl)}d\gl \le 0.
\end{split}
\eed
Hence there is a current going from the left to right induced by
photons. We recall that $J^c_{\rho_0, Q^{el}_l} = 0$.
\end{enumerate}
}
\end{remark}

\section*{Acknowledgments}

The first two authors would like to thank the University of Aalborg and the
Centre de Physique Th\'{e}orique - Luminy for hospitality and financial support.
In particular, we thank Horia D. Cornean for making us familiar with the $JCL$-model.


\begin{thebibliography}{10}

\bibitem{Pillet2007}
W.~Aschbacher, V.~Jak{\v{s}}i{\'c}, Y.~Pautrat, and C.-A. Pillet.
\newblock Transport properties of quasi-free fermions.
\newblock {\em J. Math. Phys.}, 48(3):032101, 28, 2007.

\bibitem{AJPI}
S.~Attal, A.~Joye, and C.-A. Pillet, editors.
\newblock {\em Open quantum systems. {I}}, volume 1880 of {\em Lecture Notes in
  Mathematics}.
\newblock Springer-Verlag, Berlin, 2006.
\newblock The Hamiltonian approach, Lecture notes from the Summer School held
  in Grenoble, June 16--July 4, 2003.

\bibitem{AJPII}
S.~Attal, A.~Joye, and C.-A. Pillet, editors.
\newblock {\em Open quantum systems. {II}}, volume 1881 of {\em Lecture Notes
  in Mathematics}.
\newblock Springer-Verlag, Berlin, 2006.
\newblock The Markovian approach, Lecture notes from the Summer School held in
  Grenoble, June 16--July 4, 2003.

\bibitem{AJPIII}
S.~Attal, A.~Joye, and C.-A. Pillet, editors.
\newblock {\em Open quantum systems. {III}}, volume 1882 of {\em Lecture Notes
  in Mathematics}.
\newblock Springer-Verlag, Berlin, 2006.
\newblock Recent developments, Lecture notes from the Summer School held in
  Grenoble, June 16--July 4, 2003.

\bibitem{Baro2004}
M.~Baro, H.-Chr. Kaiser, H.~Neidhardt, and J.~Rehberg.
\newblock A quantum transmitting {S}chr\"odinger-{P}oisson system.
\newblock {\em Rev. Math. Phys.}, 16(3):281--330, 2004.

\bibitem{Baumgaertel1983}
H.~Baumg{\"a}rtel and M.~Wollenberg.
\newblock {\em Mathematical scattering theory}.
\newblock Akademie-Verlag, Berlin, 1983.

\bibitem{Buettiker1985}
M.~B\"uttiker, Y.~Imry, R.~Landauer, and S.~Pinhas.
\newblock {G}eneralized many-channel conductance formula with application to
  small rings.
\newblock {\em Phys. Rev. B}, 31(10):6207--6215, May 1985.

\bibitem{Nenciu2008}
H.~D. Cornean, P.~Duclos, G.~Nenciu, and R.~Purice.
\newblock Adiabatically switched-on electrical bias and the
  {L}andauer-{B}\"uttiker formula.
\newblock {\em J. Math. Phys.}, 49(10):102106, 20, 2008.

\bibitem{Cornean2012b}
H.~D. Cornean, P.~Duclos, and R.~Purice.
\newblock Adiabatic non-equilibrium steady states in the partition free
  approach.
\newblock {\em Ann. Henri Poincar\'e}, 13(4):827--856, 2012.

\bibitem{Cornean2010c}
H.~D. Cornean, C.~Gianesello, and V.~A. Zagrebnov.
\newblock A partition-free approach to transient and steady-state charge
  currents.
\newblock {\em J. Phys. A}, 43(47):474011, 15, 2010.

\bibitem{Cornean2005}
H.~D. Cornean, A.~Jensen, and V.~Moldoveanu.
\newblock A rigorous proof of the {L}andauer-{B}\"uttiker formula.
\newblock {\em J. Math. Phys.}, 46(4):042106, 28, 2005.

\bibitem{Cornean2006}
H.~D. Cornean, A.~Jensen, and V.~Moldoveanu.
\newblock The {L}andauer-{B}\"uttiker formula and resonant quantum transport.
\newblock In {\em Mathematical physics of quantum mechanics}, volume 690 of
  {\em Lecture Notes in Phys.}, pages 45--53. Springer, Berlin, 2006.

\bibitem{CNWZ2012}
H.~D. Cornean, H.~Neidhardt, L.~Wilhelm, and V.~A. Zagrebnov.
\newblock {T}he {C}ayley transform applied to non-interacting quantum
  transport.
\newblock {\em arXiv, math-ph: 1212.4965v1}, 2012.
\newblock {T}o appear in {\em J. Funct. Anal.}

\bibitem{CNZ2009}
H.~D. Cornean, H.~Neidhardt, and V.~A. Zagrebnov.
\newblock The effect of time-dependent coupling on non-equilibrium steady
  states.
\newblock {\em Ann. Henri Poincar\'e}, 10(1):61--93, 2009.

\bibitem{Damak2006}
M.~Damak.
\newblock On the spectral theory of tensor product {H}amiltonians.
\newblock {\em J. Operator Theory}, 55(2):253--268, 2006.

\bibitem{GeKn2005}
C.~Gerry and P.~Knight.
\newblock {\em Introductory Quantum Optics}.
\newblock Cambridge University Press, Cambridge, 2005.

\bibitem{Ka1995}
T.~Kato.
\newblock {\em Perturbation theory for linear operators}.
\newblock Classics in Mathematics. Springer-Verlag, Berlin, 1995.
\newblock Reprint of the 1980 edition.

\bibitem{Landauer1957}
R.~Landauer.
\newblock {S}patial {V}ariation of {C}urrents and {F}ields {D}ue to {L}ocalized
  {S}catterers in {M}etallic {C}onduction.
\newblock {\em IBM J. Res. Develop.}, 1(3):223--231, 1957.

\bibitem{Nenciu2007}
G.~Nenciu.
\newblock {I}ndependent electron model for open quantum systems:
  {L}andauer-{B}\"uttiker formula and strict positivity of the entropy
  production.
\newblock {\em J. Math. Phys.}, 48(3):033302, 8, 2007.

\end{thebibliography}

\end{document}